\theoremstyle{plain}
\newtheorem{theorem}{Theorem}[section]
\newtheorem{corollary}[theorem]{Corollary}
\newtheorem{proposition}[theorem]{Proposition}
\theoremstyle{definition}
\newtheorem{definition}[theorem]{Definition}
\newtheorem{postulate}{Postulate}
\newtheorem{remark}[theorem]{Remark}
\newcommand{\abs}[1]{\left\lvert#1\right\rvert}
\newcommand{\rest}[2]{#1\!\!\restriction_{#2}}
\newcommand{\osg}[1]{\left[#1\right]^{\prec}}
\newcommand{\N}{\mathbb{N}}%
\newcommand{\X}{\{0,1\}^*}%
\newcommand{\Bm}[2]{\lambda_{#1}\left(#2\right)}
\newcommand{\ket}[1]{| #1 \rangle}
\newcommand{\bra}[1]{\langle #1 |}
\newcommand{\braket}[2]{\langle #1 | #2 \rangle}
\newcommand{\product}[2]{\lvert #1\rangle\langle #2\rvert}
\newcommand{\PS}{\mathbb{P}}%
\newcommand{\ssoa}{\overline{\mathcal{H}}}
\newcommand{\noi}{\noindent}
\title{\textbf{An analysis of Wigner's friend
in the framework
of quantum mechanics
based on the principle of typicality}}
\author{Kohtaro Tadaki\\
\\
Department of Computer Science, College of Engineering, Chubu University\\
1200 Matsumoto-cho, Kasugai-shi, Aichi 487-8501, Japan\\
E-mail: \textsf{tadaki@fsc.chubu.ac.jp}\\
\url{https://tadaki.org/}}
\date{
\begin{quotation}
\noi\textbf{Abstract.}
The notion of probability plays a crucial role in quantum mechanics.
It appears in quantum mechanics as the \emph{Born rule}.
In modern mathematics which describes quantum mechanics, however, probability theory means nothing other than measure theory,
and therefore any operational characterization of the notion of probability is still missing in quantum mechanics.
In our former works~[K.~Tadaki, arXiv:1804.10174],
based on the toolkit of \emph{algorithmic randomness},
we presented
a refinement of the Born rule,
called the \emph{principle of typicality},
for specifying the property of results of measurements \emph{in an operational way}. \\
The \emph{Wigner's friend paradox} is a Gedankenexperiment regarding when and where the reduction of the state vector occurs in a chain of the measurements by several observers where the state of the consciousness of each observer is measured by the subsequent observer.
In this paper, we extend the framework of the principle of typicality
so that it can be applicable to
the situation
where apparatuses perform measurements over other apparatuses.
We then make an analysis of the Wigner's friend paradox
within this extended framework of quantum mechanics based on the principle of typicality.
We
draw common sense conclusions about
it.
\emph{Deutsch's thought experiment} is a variant of the Wigner's friend paradox,
which can, in principle, verify the effect of the consciousness of observer on the reduction of the state vector.
We make an analysis of
it
comprehensively within the extended framework.
We then make a prediction which is testable in principle. \\
In our extended framework, we can analyze still more complicated situations.
As such an example, we introduce a combination of
the above two,
called the \emph{Wigner-Deutsch collaboration},
and perform a thorough analysis of it.
\end{quotation}
\begin{quotation}
\noi\textit{Key words\/}:
Wigner's friend,
Deutsch's thought experiment,
Schr\"{o}dinger's cat,
many-worlds interpretation,
quantum measurement,
Born rule,
reduction of the state vector,
operational characterization of probability,
the principle of typicality,
algorithmic randomness,
Martin-L\"of randomness
\end{quotation}
}
\begin{document}

\maketitle

\tableofcontents

\section{Introduction}

The notion of probability plays a crucial role in quantum mechanics.
It appears in quantum mechanics as the so-called \emph{Born rule}, i.e.,
\emph{the probability interpretation of the wave function} \cite{D58,vN55,NC00}.
In modern mathematics which describes quantum mechanics, however,
probability theory means nothing other than \emph{measure theory}~\cite{K50},
and therefore any \emph{operational characterization of the notion of probability} is still missing
in quantum mechanics.
In this sense, the current form of quantum mechanics is considered to be \emph{imperfect}
as a physical theory which must stand on operational means.

In the work~\cite{T14,T15,T15CCR,T16arXiv,T17ALC}
we developed a theory of
\emph{operational characterization of the notion of probability by algorithmic randomness}
for general discrete probability spaces,
using the notion of \emph{Martin-L\"of randomness with respect to Bernoulli measure} \cite{M66} (and its extension over the Baire space).
In the work
we gave natural and equivalent operational characterizations of the basic notions of probability theory,
such as the notions of conditional probability and the independence of events/random variables,
in terms of
the notion of Martin-L\"of randomness with respect to Bernoulli measure.
We then made
applications of this
framework
to information theory and cryptography, as examples of the applications,
in order to demonstrate the wide applicability of
our
framework
to the general areas of
science and
technology.
The crucial point is that
in our theory
the underlying discrete probability space is \emph{quite arbitrary}, and
thus
we do not
need to
impose any computability restrictions on the underlying
discrete
probability space
at all,
in particular.
See Tadaki~\cite{T16arXiv}
for the details of our theory of operational characterization of the notion of probability
in the case of
general finite probability spaces
and its applications,
and Tadaki~\cite{T17ALC}
for ones
in the case of general discrete probability spaces
whose sample spaces are countably infinite.

As a major application of our framework~\cite{T14,T15,T15CCR,T16arXiv} to basic science,
in the work~\cite{T14CCR,T15QCOMPINFO,T15Kokyuroku,T16CCR,T16QIP,T16QIT35,T17SCIS,T18arXiv}
we presented an \emph{operational refinement} of the Born rule, as an alternative rule to it,
based on our theory of operational characterization of the notion of probability
by algorithmic randomness,
for the purpose of making quantum mechanics \emph{operationally perfect}.
Namely, we used the notion of
\emph{Martin-L\"of randomness with respect to Bernoulli measure}
to present
the operational refinement of the Born rule,
for specifying the property of the results of
quantum measurements \emph{in an operational way}.
We then presented
an operational refinement of the Born rule for mixed states, as an alternative rule to it,
based on Martin-L\"of randomness with respect to Bernoulli measure.
In particular, we gave a
\emph{precise definition}
for the notion of \emph{mixed state}.
We then showed that all of
the refined rules of the Born rule for both pure states and mixed states
can be derived
from a \emph{single} postulate, called the \emph{principle of typicality}, in a unified manner.
We did this from the point of view of the \emph{many-worlds interpretation of quantum mechanics}~\cite{E57}.
We then
made an application of our framework to the BB84 quantum key distribution protocol
in order to demonstrate how properly our framework works in
\emph{practical}
problems
in quantum mechanics, based on the principle of typicality.
See the work~\cite{T18arXiv} for the details of our framework of quantum mechanics based on the principle of typicality.

In the subsequent work~\cite{T24CCR}, we made an application of our framework based on the principle of typicality
to the \emph{argument of local realism versus quantum mechanics}
for refining it,
in order to \emph{further} demonstrate how properly our framework works
in practical problems in quantum mechanics.

First, in the work~\cite{T24CCR} we \emph{refined} and \emph{reformulated}
the argument of \emph{Bell's inequality versus quantum mechanics} \cite{Bell64},
by means of algorithmic randomness.
On the one hand, we refined and reformulated the \emph{assumptions of local realism}~\cite{EPR35}
to lead to Bell's inequality,
in terms of the theory of operational characterization of the notion of probability
by algorithmic randomness,
developed by the work~\cite{T14,T15,T16arXiv} as mentioned above.
On the other hand, we refined and reformulated
the corresponding argument of quantum mechanics to violate Bell's inequality,
based on the principle of typicality.

Second, in the work~\cite{T24CCR} we \emph{refined} and \emph{reformulated}
an entirely different type of the variant of
the argument of Bell's inequality versus quantum mechanics, i.e.,
the argument over the \emph{GHZ experiment} \cite{GHZ89,GHSZ90,Mer90},
in the same manner as above.
In the case of the GHZ experiment,
we made the `two-sided' refinements regarding both
the \emph{predictability} (by quantum mechanics) and
\emph{unexplainability} (by local realism)
of the \emph{perfect correlations of measurement results over three parties}
for a system of three spin-$1/2$ particles, in terms of algorithmic randomness.

On the other hand,
in the work~\cite{T25CCR},
we made an application of our framework
based on the principle of typicality
to the \emph{theory of quantum error-correction}
for refining it,
in order to \emph{still further} demonstrate how properly our framework works
in \emph{practical} problems in quantum mechanics.

The general theory of quantum error-correcting codes is
introduced and developed by
Ekert and Macchiavello~\cite{EM96}, Bennett, et al.~\cite{BDSW96},
and Knill and Laflamme~\cite{KL97}.
In the general theory,
the \emph{quantum error-correction conditions} give necessary and sufficient
conditions for quantum error-correction to be possible.
To show the sufficiency of the quantum error-correction conditions involves
the construction of an error-correction operation.
In the work~\cite{T25CCR},
we \emph{refined} and \emph{reformulated}
the whole process consisting of the noise process and the error-correction operation
in terms of our framework of quantum mechanics based on the principle of typicality,
in order to \emph{operationally} show that quantum error-correction is possible
if the quantum error-correction conditions hold.

The \emph{Calderbank-Shor-Steane codes} (\emph{CSS codes}, for short)
is a large class of quantum error-correcting codes, introduced and developed by
Calderbank and Shor~\cite{CaSh96} and Steane~\cite{Ste96}.
In the work~\cite{T25CCR},
we \emph{refined} and \emph{reformulated}
the whole process consisting of the noise process and the error-correction operation
for CSS codes,
based on the principle of typicality,
in order to show in an \emph{operational} manner that
quantum error-correction by CSS codes is possible.

\emph{Algorithmic randomness}, also known as \emph{algorithmic information theory},
is a field of mathematics which enables us to consider
the randomness of an \emph{individual} (and classical) infinite sequence.
It originated in the groundbreaking works of Solomonoff~\cite{Solom64}, Kolmogorov~\cite{Kol65}, and Chaitin~\cite{C66} in the mid-1960s.
They independently introduced the notion of \emph{program-size complexity}, also known as \emph{Kolmogorov complexity},
in order to quantify the randomness of an individual (and classical) \emph{finite} object.
Around the same time, Martin-L\"of \cite{M66} introduced a measure theoretic approach to characterize
the randomness of an individual (and classical) \emph{infinite} binary sequence.
His
approach, called \emph{Martin-L\"of randomness} nowadays,
is one of the major notions in algorithmic randomness,
as well as
the notion of
program-size complexity.
Later on, in the 1970s
Schnorr~\cite{Sch73} and Chaitin~\cite{C75} showed that Martin-L\"of randomness is equivalent to the randomness defined by program-size complexity
in characterizing
random infinite binary sequences.
See
Nies~\cite{N09} and Downey and Hirschfeldt~\cite{DH10}
for the recent developments as well as the historical details of algorithmic randomness.
In this paper, we use the notion of \emph{Martin-L\"of randomness with respect to Bernoulli measure}~\cite{M66},
which is a generalization of Martin-L\"of randomness,
in order to state the principle of typicality.

Around 2000, there were comprehensive attempts
to extend the algorithmic randomness notions
to the quantum regime.
Namely, Berthiaume, van Dam, and Laplante~\cite{BvDL00},
Vit\'{a}nyi~\cite{Vit00},
and G\'{a}cs~\cite{Gac01} independently introduced
the so-called \emph{quantum Kolmogorov complexity}
from each standpoint,
for aiming to define the algorithmic information content of a given individual quantum state of a \emph{finite} dimensional quantum system.
Later on, Tadaki~\cite{T06MLQ} introduced in 2004
the notion of $\hat{\Omega}$
by means of modifying the work of G\'{a}cs~\cite{Gac01}.
This $\hat{\Omega}$
is an extension of Chaitin's halting probability $\Omega$~\cite{C75} to
a measurement operator in an \emph{infinite} dimensional quantum system.
Chaitin's $\Omega$ is the halting probability of an optimal self-delimiting Turing machine,
which is used to define the algorithmic information content (i.e., program-size complexity) of a given (classical) finite string,
and is a concrete example of a Martin-L\"of random real~\cite{C75,M66,Sch73}.
It plays a central role in algorithmic randomness~\cite{C87b,N09,DH10}.
Actually, our $\hat{\Omega}$ is a bounded Hermitian operator on a Hilbert space of infinite dimension,
and for every computable state $\ket{\Psi}$ it holds that
the inner-product $\bra{\Psi}\hat{\Omega}\ket{\Psi}$,
which has a meaning of ``probability'' in quantum mechanics,
can be represented as a Chaitin's halting probability $\Omega$.
These previous works~\cite{BvDL00,Vit00,Gac01,T06MLQ} are about the extension of algorithmic randomness notions
to the quantum realm.

In contrast, in this paper as well as our former works~\cite{T14CCR,T15QCOMPINFO,T15Kokyuroku,T16CCR,T16QIP,T16QIT35,T17SCIS,T18arXiv,T24CCR,T25CCR},
we do not extend the algorithmic randomness notions themselves,
while keeping them in their original forms,
but we identify in the mathematical framework of quantum mechanics
the basic idea of Martin-L\"of randomness
that \emph{the random infinite sequences are precisely sequences which are not contained in any effective null set}.
See Martin-L\"of~\cite{M66}, Nies~\cite{N09}, Downey and Hirschfeldt~\cite{DH10},
and Brattka, Miller, and Nies~\cite{BMiN12} for this basic idea of Martin-L\"of randomness.
In our works, we do this identification, in particular,
from the aspect of the many-worlds interpretation of quantum mechanics~\cite{E57},
leading to the principle of typicality.

\subsection{Contributions of the paper}

The \emph{Wigner's friend paradox}~\cite{W61} is a Gedankenexperiment regarding
when and where the reduction of the state vector occurs
in a chain of the measurements by several observers
where the state of the consciousness of each observer is measured by the subsequent observer, except for the last observer in the chain.
The Wigner's friend paradox
is one of the central open questions in the measurement problem of quantum mechanics.

In this paper, we extend the framework of the principle of typicality,
which was introduced by Tadaki~\cite{T16CCR,T16QIP,T16QIT35,T17SCIS,T18arXiv},
so that it can be applicable to
the situation, such as
the Wigner's friend paradox,
where apparatuses perform measurements
over other apparatuses as well as over
a normal quantum system only being measured.
For that purpose, we introduce two postulates,
Postulates~\ref{CF} and \ref{Recursive Use} below,
for determining the states of the apparatuses in early stages
among all apparatuses which constitute a measurement process.
We make an analysis of the Wigner's friend paradox
within this extended framework of quantum mechanics based on the principle of typicality.
We then draw common sense conclusions about the Wigner's friend paradox.

\emph{Deutsch's thought experiment}~\cite{Deu85} is a variant of the Wigner's friend paradox,
which can, in principle, verify the effect of the consciousness of observer
on the reduction of the state vector.
In this paper, we make an analysis of Deutsch's thought experiment
comprehensively
within the extended framework
of
the principle of typicality.
Based on this, we make a prediction which is testable
in principle.

Based on the extended framework of the principle of typicality,
we can analyze still more complicated situations than
the Wigner's friend paradox or Deutsch's thought experiment.
As such an example, we introduce
a combination of the Wigner's friend paradox and Deutsch's thought experiment,
called the \emph{Wigner-Deutsch collaboration}.
We then make an analysis of it comprehensively in this paper.
This analysis shows how effectively our extended framework works
in an even more complicated situation.

\subsection{Organization of the paper}

The paper is organized as follows.
We begin in Section~\ref{preliminaries} with
some mathematical preliminaries, in particular, about measure theory and
Martin-L\"of randomness with respect to an arbitrary probability measure.
We then review the notion of Martin-L\"of randomness with respect to
an arbitrary
Bernoulli measure,
called the \emph{Martin-L\"of $P$-randomness} in this paper,
in Section~\ref{Sec-ML_P-randomness}.
In Section~\ref{FMP} we summarize
the theorems and notions on Martin-L\"of $P$-randomness from Tadaki~\cite{T14,T15,T16arXiv},
which are need to establish the contributions of this paper presented
in the later sections.
In Section~\ref{QM} we review the central postulates of the \emph{conventional} quantum mechanics
according to Nielsen and Chuang~\cite{NC00}.
In Section~\ref{MWI}
we review the framework of the \emph{principle of typicality},
which was introduced by Tadaki~\cite{T16CCR,T16QIP,T16QIT35,T17SCIS,T18arXiv}.

In Section~\ref{sec:Schroedinger's cat}
we review the Schr\"{o}dinger's cat~\cite{Schr35}
in the terminology of
the conventional quantum mechanics.
Based on this, in Section~\ref{sec:Analysis of Schroedinger's cat}
we make an analysis of the Schr\"{o}dinger's cat
within the framework of the principle of typicality,
which is reviewed in Section~\ref{MWI}.
This analysis also demonstrates
how properly our framework based on the principle of typicality
works in practical problems in quantum mechanics.

In Section~\ref{sec:confirming point}
we extend the framework
of
the principle of typicality
so that it is applicable to
the situation where apparatuses perform measurements over other apparatuses,
and we introduce two postulates for determining the states of apparatuses in early stages
of a measurement process.
In Section~\ref{sec:Wigner's friend}
we review the Wigner's friend paradox~\cite{W61}
in the terminology of
the conventional quantum mechanics.
Based on this, in Section~\ref{sec:Analysis of Wigner's friend},
we make an analysis of the Wigner's friend paradox
within the extended framework of the principle of typicality,
developed in Section~\ref{sec:confirming point}.
On the other hand,
in Section~\ref{sec:Deutsch's thought experiment},
we review Deutsch's thought experiment~\cite{Deu85}
in the terminology of
the conventional quantum mechanics.
Based on this, in Section~\ref{sec:Analysis of Deutsch's thought experiment}
we make an analysis of Deutsch's thought experiment
within the extended framework of the principle of typicality.
In Section~\ref{sec:Analysis of DTE with a mere quantum system F}
we make an analysis of Deutsch's thought experiment where the `Friend'
is treated as a mere quantum system and not as a measurement apparatus,
within the
original
framework of the principle of typicality, which is reviewed in Section~\ref{MWI}.

In Section~\ref{sec:Wigner collaborates with Deutsch}
we introduce
the \emph{Wigner-Deutsch collaboration},
which is a combination of the Wigner's friend paradox and Deutsch's thought experiment,
and we make an analysis of it
in the terminology of
the conventional quantum mechanics.
In Section~\ref{sec:Analysis of the Wigner-Deutsch collaboration}
we make an analysis of the Wigner-Deutsch collaboration
within the extended framework of the principle of typicality.
In Section~\ref{sec:Analysis of WDc with a mere quantum system F}
we make an analysis of the Wigner-Deutsch collaboration where the `Friend'
is treated as a mere quantum system and not as a measurement apparatus,
within the extended framework of the principle of typicality.

\section{Mathematical preliminaries}
\label{preliminaries}

\subsection{Basic notation and definitions}
\label{basic notation}

We start with some notation about numbers and strings which will be used in this paper.
We denote the cardinality of $S$ by $\#S$
for any finite set $S$.
$\N=\left\{0,1,2,3,\dotsc\right\}$ is the set of \emph{natural numbers},
and $\N^+$ is the set of \emph{positive integers}.
For any complex number $z$, its \emph{complex conjugate} is denoted by
$\overline{z}$.

An \emph{alphabet} is a non-empty finite set.
Let $\Omega$ be an arbitrary alphabet throughout the rest of this subsection.
A \emph{finite string over $\Omega$} is a finite sequence of elements from the alphabet $\Omega$.
We use $\Omega^*$ to denote the set of all finite strings over $\Omega$,
which contains the \emph{empty string} denoted by $\lambda$.
For any $\sigma\in\Omega^*$, $\abs{\sigma}$ is the \emph{length} of $\sigma$.
Therefore $\abs{\lambda}=0$.
A subset $S$ of $\Omega^*$ is called
\emph{prefix-free}
if no string in $S$ is a prefix of another string in $S$.

An \emph{infinite sequence over $\Omega$} is an infinite sequence of elements from the alphabet $\Omega$,
where the sequence is infinite to the right but finite to the left.
We use $\Omega^\infty$ to denote the set of all infinite sequences over $\Omega$.

Let $\alpha\in\Omega^\infty$.
For any $n\in\N$
we denote by $\rest{\alpha}{n}\in\Omega^*$ the first $n$ elements
in the infinite sequence $\alpha$,
and
for any $n\in\N^+$ we denote
by $\alpha(n)$ the $n$th element in $\alpha$.
Thus, for example, $\rest{\alpha}{4}=\alpha(1)\alpha(2)\alpha(3)\alpha(4)$, and $\rest{\alpha}{0}=\lambda$.

For any $S\subset\Omega^*$, the set
$\{\alpha\in\Omega^\infty\mid\exists\,n\in\N\;\rest{\alpha}{n}\in S\}$
is denoted by $\osg{S}$.
Note that (i)~$\osg{S}\subset\osg{T}$ for every $S\subset T\subset\Omega^*$, and
(ii)~for every set $S\subset\Omega^*$ there exists a prefix-free set $P\subset\Omega^*$ such that
$\osg{S}=\osg{P}$.
For any $\sigma\in\Omega^*$, we denote by $\osg{\sigma}$ the set $\osg{\{\sigma\}}$, i.e.,
the set of all infinite sequences over $\Omega$ extending $\sigma$.
Therefore $\osg{\lambda}=\Omega^\infty$.

Let $\mathcal{H}$ be an arbitrary complex Hilbert space.
A \emph{projector} $P$ on $\mathcal{H}$ is
a Hermitian bounded operator on $\mathcal{H}$ such that $P^2=P$.
A collection $\{P_a\}_{a\in\Theta}$ of projectors on $\mathcal{H}$ is called
a \emph{projection-valued measure} (\emph{PVM}, for short) \emph{in $\mathcal{H}$} if
$\Theta$ is an alphabet and the following~(i) and (ii) hold:
\begin{enumerate}
\item $P_a P_b=\delta_{a,b}P_a$ for every $a,b\in\Theta$.
\item $\sum_{a\in\Theta} P_a=I$, where $I$ denotes the identity operator on $\mathcal{H}$.
\end{enumerate}

\subsection{Martin-L\"of randomness with respect to an arbitrary probability measure}
\label{MLRam}

We introduce
the notion of Martin-L\"of randomness with respect to an arbitrary probability measure.
For that purpose,
we briefly review measure theory according to Nies~\cite[Section 1.9]{N09}.
See also Billingsley~\cite{B95} for measure theory in general.
We here summarize the description of Tadaki~\cite[Section 2.2]{T16arXiv},
which is based on Nies~\cite[Section 1.9]{N09}.

Let $\Omega$ be an arbitrary alphabet.
A \emph{probability measure representation over $\Omega$} is
a function $r\colon\Omega^*\to[0,1]$ such that
(i)~$r(\lambda)=1$ and
(ii) for every $\sigma\in\Omega^*$ it holds that
\begin{equation*}%
       r(\sigma)=\sum_{a\in\Omega}r(\sigma a).
\end{equation*}
A probability measure representation
over $\Omega$ \emph{induces} a probability measure
on $\Omega^\infty$ in the following manner.

A class $\mathcal{F}$ of subsets of $\Omega^\infty$ is called
a \emph{$\sigma$-field on $\Omega^\infty$}
if  $\mathcal{F}$ includes $\Omega^\infty$, is closed under complements,
and is closed under the formation of countable unions.
A subset $\mathcal{R}$ of $\Omega^\infty$ is called \emph{open} if
$\mathcal{R}=\osg{S}$ for some $S\subset\Omega^*$.
The \emph{Borel class} $\mathcal{B}_{\Omega}$ is the $\sigma$-field \emph{generated by}
all open sets on $\Omega^\infty$.
Namely, the Borel class $\mathcal{B}_{\Omega}$ is defined
as the intersection of all the $\sigma$-fields on $\Omega^\infty$ containing
all open sets on $\Omega^\infty$.
A real-valued function $\mu$ defined on the Borel class $\mathcal{B}_{\Omega}$ is called
a \emph{probability measure on $\Omega^\infty$} if the following conditions hold:
\begin{enumerate}
  \item $\mu\left(\emptyset\right)=0$ and $\mu\left(\Omega^\infty\right)=1$;
  \item $\mu\left(\bigcup_{i}\mathcal{D}_i\right)=\sum_{i}\mu\left(\mathcal{D}_i\right)$
    for every sequence $\{\mathcal{D}_i\}_{i\in\N}$ of sets in $\mathcal{B}_{\Omega}$ such that
    $\mathcal{D}_i\cap\mathcal{D}_i=\emptyset$ for all $i\neq j$.
\end{enumerate}
Then we can show that
for every probability measure representation $r$ over $\Omega$
there exists a unique probability measure $\mu$ on $\Omega^\infty$ such that
\begin{equation}\label{mr}
  \mu\left(\osg{\sigma}\right)=r(\sigma)
\end{equation}
for
every $\sigma\in \Omega^*$.
This unique probability measure $\mu$ is called
the \emph{probability measure induced by the probability measure representation $r$}, denoted $\mu_r$.

Now,
we introduce the notion of \emph{Martin-L\"of randomness} \cite{M66}
in a general setting,
as follows.

\begin{definition}[Martin-L\"{o}f randomness with respect to a probability measure]
\label{ML-randomness-wrtm}
Let $\Omega$ be an alphabet, and let $\mu$ be a probability measure on $\Omega^\infty$.
A subset $\mathcal{C}$ of $\N^+\times\Omega^*$ is called a
\emph{Martin-L\"{o}f test with respect to $\mu$} if
$\mathcal{C}$ is a recursively enumerable set and for every $n\in\N^+$ it holds that
$\mathcal{C}_n$ is a prefix-free subset of $\Omega^*$ and
\begin{equation}\label{muocn<2n}
  \mu\left(\osg{\mathcal{C}_n}\right)<2^{-n},
\end{equation}
where $\mathcal{C}_n$ denotes the set
$\left\{\,
    \sigma\mid (n,\sigma)\in\mathcal{C}
\,\right\}$.

For any $\alpha\in\Omega^\infty$, we say that $\alpha$ is
\emph{Martin-L\"{o}f random with respect to $\mu$} if
$$\alpha\notin\bigcap_{n=1}^{\infty}\osg{\mathcal{C}_n}$$
for every Martin-L\"{o}f test $\mathcal{C}$ with respect to $\mu$.\qed
\end{definition}

\section{\boldmath Martin-L\"of $P$-randomness}
\label{Sec-ML_P-randomness}

The principle of typicality, Postulate~\ref{POT} below, is stated by means of the notion of 
Martin-L\"of randomness with respect to an arbitrary probability measure
introduced in the preceding section.
However,
in many situations of the applications of the principle of typicality,
such as in a contribution of this paper described
in Sections~\ref{sec:confirming point}, \ref{sec:Analysis of Wigner's friend}, \ref{sec:Analysis of Deutsch's thought experiment}, \ref{sec:Analysis of DTE with a mere quantum system F}, \ref{sec:Analysis of the Wigner-Deutsch collaboration}, and \ref{sec:Analysis of WDc with a mere quantum system F},
a more restricted notion is used where
the probability measure is chosen to be
a
Bernoulli measure.
Namely,
the notion of \emph{Martin-L\"of randomness with respect to
a
Bernoulli measure}
is used
in many situations of the applications
of the principle of typicality.
Thus,
in order to introduce this notion, we first
review the notions of \emph{finite probability space} and \emph{Bernoulli measure}.

\begin{definition}[Finite probability space]\label{def-FPS}
Let $\Omega$ be an alphabet. A \emph{finite probability space on $\Omega$} is a function $P\colon\Omega\to [0,1]$
such that
(i) $P(a)\ge 0$ for every $a\in \Omega$, and
(ii) $\sum_{a\in \Omega}P(a)=1$.
The set of all finite probability spaces on $\Omega$ is denoted by $\PS(\Omega)$.

Let $P\in\PS(\Omega)$.
The set $\Omega$ is called the \emph{sample space} of $P$,
and elements
of
$\Omega$ are called \emph{sample points} or \emph{elementary events}
of $P$.
For each $A\subset\Omega$, we define $P(A)$ by
$$P(A):=\sum_{a\in A}P(a).$$
A subset of $\Omega$ is called an \emph{event} on $P$, and
$P(A)$ is called the \emph{probability} of $A$
for every event $A$
on $P$.
\qed
\end{definition}

Let $\Omega$ be an alphabet, and let $P\in\PS(\Omega)$.
For each $\sigma\in\Omega^*$, we use $P(\sigma)$ to denote
$$P(\sigma_1)P(\sigma_2)\dots P(\sigma_n)$$
where $\sigma=\sigma_1\sigma_2\dots\sigma_n$ with $\sigma_i\in\Omega$.
Therefore $P(\lambda)=1$, in particular.
For each subset $S$ of $\Omega^*$, we use $P(S)$ to denote
$$\sum_{\sigma\in S}P(\sigma).$$
Therefore $P(\emptyset)=0$, in particular.

Consider a function $r\colon\Omega^*\to[0,1]$ such that
$r(\sigma)=P(\sigma)$ for every $\sigma\in\Omega^*$.
It is then easy to see that the function~$r$ is a probability measure representation over $\Omega$.
The probability measure~$\mu_r$ induced by
the probability measure representation
$r$ is
called
the \emph{Bernoulli measure on $\Omega^\infty$} (\emph{induced by the finite probability space $P$}), denoted $\lambda_{P}$.
The Bernoulli measure~$\lambda_{P}$ on $\Omega^\infty$
satisfies that
\begin{equation}\label{BmPosgsigma=Psigma}
  \Bm{P}{\osg{\sigma}}=P(\sigma)
\end{equation}
for every $\sigma\in \Omega^*$,
due to \eqref{mr}.

The notion of \emph{Martin-L\"of randomness with respect to a Bernoulli measure} is defined as follows.
We call it the \emph{Martin-L\"of $P$-randomness},
since it depends on a finite probability space~$P$.
This notion was, in essence, introduced by Martin-L\"{o}f~\cite{M66},
as well as the notion of Martin-L\"of randomness
with respect to Lebesgue measure.

\begin{definition}[%
Martin-L\"of $P$-randomness,
Martin-L\"{o}f \cite{M66}]\label{ML_P-randomness}
Let $\Omega$ be an alphabet, and let $P\in\PS(\Omega)$.
For any $\alpha\in\Omega^\infty$, we say that $\alpha$ is \emph{Martin-L\"{o}f $P$-random} if
$\alpha$ is Martin-L\"{o}f random with respect to $\lambda_{P}$.\qed
\end{definition}

\emph{Martin-L\"{o}f random} infinite binary sequences \cite{M66} are precisely 
Martin-L\"{o}f $U$-random infinite binary sequences
where $U$ is a finite probability space on $\{0,1\}$ such that $U(0)=U(1)=1/2$.

\section{\boldmath The properties of Martin-L\"of $P$-randomness}
\label{FMP}

In order to
obtain the results in this paper,
we need
several
theorems
on Martin-L\"of $P$-randomness
from
Tadaki~\cite{T14,T15,T16arXiv}.
Originally,
these theorems
played
a key part
in developing
the theory of
\emph{operational characterization of the notion of probability}
based on Martin-L\"of $P$-randomness in
Tadaki~\cite{T14,T15,T16arXiv}.
We enumerate
these theorems
in this section.

\subsection{Marginal distribution}

First, the following theorem states that
Martin-L\"of $P$-randomness is
\emph{closed under the formation of marginal distribution}.
See Tadaki~\cite[Theorem~9.4]{T18arXiv} for the proof.

\begin{theorem}[Tadaki~\cite{T18arXiv}]\label{contraction2}
Let $\Omega$ and $\Theta$ be alphabets.
Let $P\in\PS(\Omega\times\Theta)$, and let $\alpha\in(\Omega\times\Theta)^\infty$.
Suppose that $\beta$ is an infinite sequence over $\Omega$ obtained from $\alpha$
by replacing each element $(l,m)$ in $\alpha$ by $l$.
If $\alpha$ is Martin-L\"of $P$-random then $\beta$ is Martin-L\"of $Q$-random,
where $Q\in\PS(\Omega)$ such that
$$Q(l):=\sum_{m\in\Theta}P(l,m)$$
for every $l\in\Omega$.
\qed
\end{theorem}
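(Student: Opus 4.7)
The plan is to argue by contraposition. I will assume that $\beta$ is not Martin-L\"of $Q$-random, produce a Martin-L\"of test $\mathcal{D}$ with respect to $\lambda_Q$ that captures $\beta$, and then \emph{lift} this test to a Martin-L\"of test $\mathcal{C}$ with respect to $\lambda_P$ that captures $\alpha$, contradicting the Martin-L\"of $P$-randomness of $\alpha$.

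First, I would define the lifting map. For each string $\sigma=\sigma_1\sigma_2\dots\sigma_n\in\Omega^*$, let $\widetilde{\sigma}\subset(\Omega\times\Theta)^*$ be the finite set of strings of the form $(\sigma_1,l_1)(\sigma_2,l_2)\dots(\sigma_n,l_n)$ with $l_i\in\Theta$. For each subset $S\subset\Omega^*$, set $\widetilde{S}:=\bigcup_{\sigma\in S}\widetilde{\sigma}$. Given a Martin-L\"of test $\mathcal{D}$ with respect to $\lambda_Q$ with $\beta\in\bigcap_{n}\osg{\mathcal{D}_n}$, define $\mathcal{C}:=\{(n,\tau)\mid \tau\in\widetilde{\mathcal{D}_n}\}$.

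Next, I would verify the three properties required of a Martin-L\"of test with respect to $\lambda_P$. Recursive enumerability of $\mathcal{C}$ is immediate from that of $\mathcal{D}$ together with the uniform computability of the lifting. The prefix-freeness of each $\mathcal{C}_n$ follows from the prefix-freeness of $\mathcal{D}_n$, since if two distinct elements $(\sigma_1,l_1)\dots$ and $(\sigma'_1,l'_1)\dots$ of $\widetilde{\mathcal{D}_n}$ were comparable, their first-coordinate projections $\sigma$ and $\sigma'$ would be comparable in $\Omega^*$, forcing $\sigma=\sigma'$ and then the two lifted strings to be incomparable after all. For the measure bound, observe that by \eqref{BmPosgsigma=Psigma} and the defining identity $Q(m)=\sum_{l\in\Theta}P(m,l)$, a short induction on $\abs{\sigma}$ yields
\begin{equation*}
\lambda_P\!\left(\osg{\widetilde{\sigma}}\right)=\sum_{\tau\in\widetilde{\sigma}}P(\tau)=Q(\sigma)=\lambda_Q\!\left(\osg{\sigma}\right),
\end{equation*}
so summing over $\sigma\in\mathcal{D}_n$ (with $\mathcal{D}_n$ prefix-free, so the cylinders $\osg{\widetilde{\sigma}}$ are pairwise disjoint) gives $\lambda_P(\osg{\mathcal{C}_n})=\lambda_Q(\osg{\mathcal{D}_n})<2^{-n}$.

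Finally, I would show $\alpha\in\bigcap_n\osg{\mathcal{C}_n}$: for each $n$, pick $\sigma\in\mathcal{D}_n$ with $\sigma\prec\beta$; then the projection of any initial segment of $\alpha$ of length $\abs{\sigma}$ equals $\sigma$, so that initial segment lies in $\widetilde{\sigma}\subset\mathcal{C}_n$. This contradicts the Martin-L\"of $P$-randomness of $\alpha$.

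The only real obstacle is the measure computation $\lambda_P(\osg{\widetilde{\sigma}})=Q(\sigma)$; but because $\lambda_P$ and $\lambda_Q$ are Bernoulli, this factors position-by-position and is immediate. Everything else is bookkeeping about lifting a prefix-free r.e.\ set coordinate-wise, so I expect no substantive difficulty beyond being careful about prefix-freeness after the lift.
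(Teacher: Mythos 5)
Your proof is correct: lifting each test component $\mathcal{D}_n$ to its coordinatewise preimage $\widetilde{\mathcal{D}_n}$, checking prefix-freeness and the measure identity $\lambda_P(\osg{\widetilde{\sigma}})=Q(\sigma)=\lambda_Q(\osg{\sigma})$ via the Bernoulli product structure, and concluding by contraposition is exactly the standard argument for this statement. The paper itself defers the proof to Theorem~9.4 of the cited reference, and your test-lifting route is essentially that proof; no gaps.
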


Note that in Theorems~\ref{contraction2}
the underlying finite probability space $P$ on $\Omega\times\Theta$
is \emph{quite arbitrary}, and
thus we do not impose any computability restrictions on $P$ at all, in particular.

\subsection{The law of large numbers}

The following theorem shows that the law of large numbers holds
for an arbitrary Martin-L\"of $P$-randomness infinite sequence.
For the proof of Theorem~\ref{FI},
see Tadaki~\cite[Theorem~14]{T16arXiv}.

\begin{theorem}[The law of large numbers, Tadaki~\cite{T14}]\label{FI}
Let $\Omega$ be an alphabet, and let $P\in\PS(\Omega)$.
For every $\alpha\in\Omega^\infty$, if $\alpha$ is Martin-L\"of $P$-random
then for every $a\in\Omega$ it holds that
$$\lim_{n\to\infty} \frac{N_a(\rest{\alpha}{n})}{n}=P(a),$$
where $N_a(\sigma)$ denotes the number of the occurrences of $a$ in $\sigma$ for every $a\in\Omega$ and every $\sigma\in\Omega^*$.
\qed
\end{theorem}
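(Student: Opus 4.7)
The plan is to prove the contrapositive via a Chernoff--Hoeffding-based Martin-L\"of test construction. Suppose $N_a(\rest{\alpha}{n})/n$ does not converge to $P(a)$; then $\limsup_n N_a(\rest{\alpha}{n})/n > P(a)$ or $\liminf_n N_a(\rest{\alpha}{n})/n < P(a)$, and I will exhibit a Martin-L\"of test with respect to $\lambda_P$ that captures $\alpha$, contradicting its Martin-L\"of $P$-randomness. I focus on the upper-deviation case; the lower-deviation case is symmetric. Fix rationals $r_1, r_2$ with $P(a) < r_1 < r_2 < \limsup_n N_a(\rest{\alpha}{n})/n$ and put $\delta := r_2 - r_1 > 0$.

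The principal input is a standard Hoeffding tail estimate: since $\lambda_P$ is a Bernoulli measure, the indicator random variables $\mathbf{1}[\beta(i) = a]$ are i.i.d.\ Bernoulli with parameter $P(a)$ under $\lambda_P$, so for every $n \in \N^+$,
\[
  \lambda_P\bigl(\bigl\{\beta \in \Omega^\infty : N_a(\rest{\beta}{n})/n \geq r_2\bigr\}\bigr) \leq e^{-2n(r_2 - P(a))^2} \leq e^{-2n\delta^2},
\]
where the last inequality uses $P(a) < r_1 = r_2 - \delta$. Now, for each $k \in \N^+$ I pick a positive integer $m_k$, computable from $k$ and $\delta$, such that the geometric tail $\sum_{n \geq m_k} e^{-2n\delta^2}$ is strictly less than $2^{-k}$, and I let $\mathcal{C}_k$ consist of those $\sigma \in \Omega^*$ with $\abs{\sigma} \geq m_k$ and $N_a(\sigma)/\abs{\sigma} \geq r_2$ such that no proper prefix of $\sigma$ of length at least $m_k$ already has this property. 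By construction $\mathcal{C}_k$ is a computable prefix-free subset of $\Omega^*$, so $\mathcal{C} := \{(k,\sigma) \in \N^+ \times \Omega^* : \sigma \in \mathcal{C}_k\}$ is recursively enumerable; and since $\osg{\mathcal{C}_k} = \{\beta \in \Omega^\infty : \exists n \geq m_k,\ N_a(\rest{\beta}{n})/n \geq r_2\}$, a union bound combined with the above estimate gives $\lambda_P(\osg{\mathcal{C}_k}) < 2^{-k}$. Hence $\mathcal{C}$ is a Martin-L\"of test with respect to $\lambda_P$. To finish, observe that $r_2 < \limsup_n N_a(\rest{\alpha}{n})/n$ forces some prefix of $\alpha$ of length at least $m_k$ to lie in $\mathcal{C}_k$ for every $k$, so $\alpha \in \bigcap_k \osg{\mathcal{C}_k}$ and the desired contradiction ensues.

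The main obstacle this argument must overcome is that $P(a)$ is not assumed to be computable, so one cannot encode $P(a)$ directly into the test. The key device is the rational buffer separating $P(a)$ from $r_2$, realized by the parameters $(r_1, r_2, \delta)$: it allows me to define $\mathcal{C}$ from purely rational data --- so $\mathcal{C}$ is genuinely recursively enumerable --- while still letting the Hoeffding estimate, applied with the unknown but true value $P(a) < r_1$, certify the required $2^{-k}$ measure bound. The symmetric low-frequency case uses the analogous recipe with a rational buffer strictly below $P(a)$ and the lower-tail Hoeffding bound.
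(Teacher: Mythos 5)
Your proof is correct: the Hoeffding tail bound, the rational buffer $r_1 < r_2$ isolating the non-computable value $P(a)$, the computable choice of $m_k$ from the geometric tail, and the prefix-free "first witness" construction all fit together without gaps, and the resulting $\mathcal{C}$ is a genuine Martin-L\"of test with respect to $\lambda_P$ capturing $\alpha$. The paper itself does not reproduce a proof of Theorem~\ref{FI} but defers to Tadaki~\cite[Theorem~14]{T16arXiv}, where the argument is essentially this same Chernoff/Hoeffding-based test construction with rational thresholds used precisely to avoid any computability assumption on $P$; so your route matches the intended one.
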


Note that in Theorem~\ref{FI}
the underlying finite probability space $P$ is \emph{quite arbitrary}, and
thus we do not impose any computability restrictions on
$P$ at all, in particular.

\subsection{Non-occurrence of an elementary event with probability zero}

From various aspects, Tadaki~\cite[Section~5.1]{T16arXiv} demonstrated the fact that
\emph{an elementary event with probability zero never occurs
in the conventional quantum mechanics}.
This fact corresponds to Theorem~\ref{thm:zero_probability} below
in the context of the theory of
\emph{operational characterization of the notion of probability},
which was developed by Tadaki~\cite{T14,T15,T16arXiv}
based on Martin-L\"of $P$-randomness.

\begin{theorem}\label{thm:zero_probability}
Let $\Omega$ be an alphabet, and let $P\in\PS(\Omega)$.
Let $a\in\Omega$.
Suppose that
$\alpha$ is Martin-L\"of $P$-random
and $P(a)=0$.
Then $\alpha$ does not contain $a$.
\qed
\end{theorem}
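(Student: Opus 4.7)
The plan is to construct a single Martin-L\"of test with respect to $\lambda_P$ whose intersection is exactly the set of infinite sequences over $\Omega$ that contain the symbol $a$ somewhere; then Martin-L\"of $P$-randomness of $\alpha$ forces $\alpha$ out of this intersection.

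Concretely, I would first define, for each $n \in \N^+$, the set
\begin{equation*}
  \mathcal{C}_n := \{\,\sigma a \mid \sigma \in (\Omega\setminus\{a\})^*\,\},
\end{equation*}
so that $\mathcal{C}_n$ is the collection of finite strings over $\Omega$ whose sole occurrence of $a$ sits at the final position. Since $\Omega$ is a finite alphabet, this set is decidable and in particular recursively enumerable, so $\mathcal{C} := \N^+ \times \mathcal{C}_1 \subset \N^+ \times \Omega^*$ is recursively enumerable. The set $\mathcal{C}_n$ is prefix-free: if $\sigma a$ were a proper prefix of $\tau a$ for two elements of $\mathcal{C}_n$, then $\sigma a$ would be a prefix of $\tau$, contradicting $\tau \in (\Omega\setminus\{a\})^*$.

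Next I would verify the measure bound \eqref{muocn<2n}. Because $\mathcal{C}_n$ is prefix-free, the open sets $\osg{\sigma a}$ for distinct $\sigma a \in \mathcal{C}_n$ are pairwise disjoint, so by countable additivity of $\lambda_P$ together with \eqref{BmPosgsigma=Psigma},
\begin{equation*}
  \lambda_P\bigl(\osg{\mathcal{C}_n}\bigr)
  \;=\; \sum_{\sigma a \in \mathcal{C}_n} \lambda_P\bigl(\osg{\sigma a}\bigr)
  \;=\; \sum_{\sigma \in (\Omega\setminus\{a\})^*} P(\sigma)\,P(a)
  \;=\; 0 \;<\; 2^{-n},
\end{equation*}
using $P(a)=0$. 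Hence $\mathcal{C}$ is a Martin-L\"of test with respect to $\lambda_P$.

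Finally, I would compute $\bigcap_{n=1}^{\infty} \osg{\mathcal{C}_n}$. Since $\mathcal{C}_n$ does not depend on $n$, this intersection equals $\osg{\mathcal{C}_1}$, which is precisely the set of $\beta \in \Omega^\infty$ containing the letter $a$ at some position (take the shortest prefix of $\beta$ ending in $a$). By Definition~\ref{ML_P-randomness}, Martin-L\"of $P$-randomness of $\alpha$ together with Definition~\ref{ML-randomness-wrtm} gives $\alpha \notin \osg{\mathcal{C}_1}$, i.e., $\alpha$ does not contain $a$. There is no serious obstacle here; the only point requiring a moment's care is checking that $\mathcal{C}_n$ is prefix-free so that the measure of $\osg{\mathcal{C}_n}$ can be computed by summing $P(\sigma a)$ termwise.
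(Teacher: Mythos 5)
Your proof is correct and complete: the test $\mathcal{C}_n=\{\sigma a\mid \sigma\in(\Omega\setminus\{a\})^*\}$ is decidable hence r.e., prefix-free, of $\lambda_P$-measure $\sum_{\sigma}P(\sigma)P(a)=0<2^{-n}$, and $\osg{\mathcal{C}_1}$ is exactly the set of sequences containing $a$, so Martin-L\"of $P$-randomness excludes $\alpha$ from it. The paper itself does not reproduce a proof here but defers to Tadaki~\cite[Theorem~13]{T16arXiv}; your argument is the standard one for this fact (a single effective null set suffices because the measure is exactly zero, so no computability hypothesis on $P$ is needed), and it matches the intended route.
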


This result for Martin-L\"of $P$-random was, in essence,
pointed out by Martin-L\"of~\cite{M66}.
Note that
we do not impose any computability restrictions on
the underlying finite probability space~$P$ at all
in Theorem~\ref{thm:zero_probability}.
For the proof of Theorem~\ref{thm:zero_probability},
see Tadaki~\cite[Theorem~13]{T16arXiv}.
The following corollary is immediate from Theorem~\ref{thm:zero_probability}.

\begin{corollary}\label{cor:always-positive-probability}
Let $\Omega$ be an alphabet, and let $P\in\PS(\Omega)$.
For every $\alpha\in\Omega^\infty$, if $\alpha$ is Martin-L\"of $P$-random
then $P(\alpha(n))>0$ for every $n\in\N^+$.
\qed
\end{corollary}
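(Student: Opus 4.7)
The plan is to derive the corollary directly from Theorem~\ref{thm:zero_probability} by a one-line contrapositive argument. Suppose, for the sake of contradiction, that $\alpha$ is Martin-L\"of $P$-random but there exists some $n \in \N^+$ with $P(\alpha(n)) = 0$. Set $a := \alpha(n) \in \Omega$; this is a legitimate element of the alphabet, and by assumption $P(a) = 0$.

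Now I would apply Theorem~\ref{thm:zero_probability} to this particular $a$: since $\alpha$ is Martin-L\"of $P$-random and $P(a) = 0$, the theorem forces $\alpha$ not to contain $a$ at any position. But $\alpha(n) = a$ by our choice of $a$, so $\alpha$ does contain $a$ (at position $n$). This is a contradiction, so no such $n$ can exist, proving $P(\alpha(n)) > 0$ for every $n \in \N^+$.

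There is essentially no obstacle here; the only subtlety is the observation that Theorem~\ref{thm:zero_probability} is stated for a fixed $a \in \Omega$ with $P(a)=0$, whereas the corollary quantifies over all positions $n$. The contrapositive reformulation is what bridges the two: ``$\alpha$ contains $a$ somewhere'' is equivalent to ``there exists $n$ such that $\alpha(n) = a$,'' so negating the conclusion of the theorem for each possible value of $a$ with $P(a)=0$ simultaneously yields the corollary. Since the argument does not invoke any property of Martin-L\"of $P$-randomness beyond what Theorem~\ref{thm:zero_probability} already encapsulates, no computability restriction on $P$ is needed, consistent with the remark accompanying Theorem~\ref{thm:zero_probability}.
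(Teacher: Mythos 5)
Your proposal is correct and matches the paper, which simply notes that the corollary is immediate from Theorem~\ref{thm:zero_probability}; your contrapositive argument (taking $a:=\alpha(n)$ and invoking the theorem to contradict the assumed occurrence of $a$ in $\alpha$) is exactly the intended one-line derivation.
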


\section{Postulates of quantum mechanics}
\label{QM}

In this section, we review the central postulates of (the conventional) quantum mechanics.
Specifically, we here refer to the postulates of
the \emph{conventional}
quantum mechanics
in the form presented in Nielsen and Chuang \cite[Chapter 2]{NC00},
with slight modification.
Note that their original postulates were given as postulates of
(the conventional)
quantum mechanics for a \emph{finite}-dimensional quantum system, i.e.,
a quantum system whose state space is a finite-dimensional complex Hilbert space,
since their book is a textbook of the field of quantum computation and quantum information
where finite-dimensional quantum systems are typical.
In contrast,
\emph{we do not impose any restrictions on the dimensionality of the underlying state spaces}
in the central postulates of (the conventional) quantum mechanics of the form presented in what follows.

The first postulate of quantum mechanics is about \emph{state space} and \emph{state vector}.

\begin{postulate}[State space and state vector]\label{state_space}
Associated to any isolated physical system is a separable complex Hilbert space
known as the \emph{state space} of the system.
The system is completely described by its \emph{state vector},
which is a unit vector in the system's state space.
\qed
\end{postulate}

The second postulate of quantum mechanics is about the \emph{composition} of systems.

\begin{postulate}[Composition of systems]\label{composition}
The state space of a composite physical system is the tensor product of the state spaces of the component physical systems.
Moreover, if we have systems numbered $1$ through $n$, and system number~$i$ is
prepared
in the state~$\ket{\Psi_i}$,
then the joint state of the total system is
$\ket{\Psi_1}\otimes\ket{\Psi_2}\otimes\dots\otimes\ket{\Psi_n}$.
\qed
\end{postulate}

The third postulate of quantum mechanics is about the \emph{time-evolution} of
\emph{closed} quantum systems.

\begin{postulate}[Unitary time-evolution]\label{evolution}
The evolution of a \emph{closed} quantum system is described by a \emph{unitary transformation}.
That is,
the state~$\ket{\Psi_1}$ of the system at time~$t_1$ is related to the state~$\ket{\Psi_2}$ of the system at time~$t_2$
by a unitary operator~$U$
on the state space,
which depends only on the times~$t_1$ and $t_2$,
in such a way that
$\ket{\Psi_2}=U\ket{\Psi_1}$.
\qed
\end{postulate}

The forth postulate of quantum mechanics is about \emph{measurements} on quantum systems.

\begin{postulate}[Quantum measurement]\label{Born-rule}
A quantum measurement is described by a collection $\{M_m\}_{m\in\Omega}$ of \emph{measurement operators}
which is a finite collection of bounded operators on the state space of the system being measured,
satisfying the \emph{completeness equation}
\begin{equation*}
  \sum_{m\in\Omega} M_m^\dag M_m=I,
\end{equation*}
where $M_m^\dag$ denotes the adjoint of the bounded operator $M_m$, and $I$ denotes the identity operator on the state space.
\begin{enumerate}
\item
The set of possible outcomes of the measurement
equals
the
non-empty
finite set $\Omega$.
If the state of the quantum system is $\ket{\Psi}$ immediately before the measurement then
the probability that result~$m$ occurs is given by
\begin{equation*}
  \bra{\Psi}M_m^\dag M_m\ket{\Psi}.
\end{equation*}
\item
Given that outcome $m$ occurred,
the state of the quantum system immediately after the measurement is
\begin{equation*}
  \frac{M_m\ket{\Psi}}{\sqrt{\bra{\Psi}M_m^\dag M_m\ket{\Psi}}}.
\end{equation*}
\qed
\end{enumerate}
\end{postulate}

Postulate~\ref{Born-rule}~(i) is the so-called \emph{Born rule}, i.e,
\emph{the probability interpretation of the wave function},
while Postulate~\ref{Born-rule}~(ii) is called the \emph{projection hypothesis}.
Thus,
the Born rule, Postulate~\ref{Born-rule}~(i),
uses the \emph{notion of probability}.
However, the operational characterization of the notion of probability is not given in the Born rule,
and therefore the relation of its statement to a specific infinite sequence of outcomes of quantum measurements which are being generated by an infinitely repeated measurements is \emph{unclear}.
Tadaki~\cite{T14CCR,T15QCOMPINFO,T15Kokyuroku,T16CCR,T16QIP,T16QIT35,T17SCIS,T18arXiv} fixed this point.

In this paper
as well as our former works~\cite{T14CCR,T15QCOMPINFO,T15Kokyuroku,T16CCR,T16QIP,T16QIT35,T17SCIS,T18arXiv,T24CCR,T25CCR},
\emph{we keep Postulates~\ref{state_space}, \ref{composition}, and \ref{evolution}
in their original forms without any modifications}.
The principle of typicality, Postulate~\ref{POT} below,
is proposed as a \emph{refinement} of Postulate~\ref{Born-rule} to replace it,
based on the notion of \emph{Martin-L\"of randomness with respect to a probability measure}.

\section{The principle of typicality: Our world is typical}
\label{MWI}

Everett~\cite{E57} introduced
the \emph{many-worlds interpretation of quantum mechanics} (\emph{MWI}, for short)
in 1957.
Actually, his MWI was more than just an interpretation of quantum mechanics.
It aimed to derive
Postulate~\ref{Born-rule}~(i), the Born rule,
from the remaining postulates, i.e.,
from Postulates~\ref{state_space}, \ref{composition}, and \ref{evolution} above.
In this sense, Everett~\cite{E57} proposed his MWI
as a ``metatheory'' of quantum mechanics.
The point is that in MWI the measurement process is fully treated
as the interaction between a system being measured and an apparatus measuring it, based only on
Postulates~\ref{state_space}, \ref{composition}, and \ref{evolution}.
Then his MWI tried to derive Postulate~\ref{Born-rule}~(i) in such a setting.
However, his attempt does not seem successful.
One of the reasons is thought to be its mathematical vagueness.
(See DeWitt and Graham~\cite{DN73} and Hartle~\cite{H68} for the many-worlds interpretation of quantum mechanics in general.)

Tadaki~\cite{T16CCR}
reformulated the original framework of MWI by Everett~\cite{E57}
\emph{in a form of mathematical rigor} from a modern point of view.
The point of this rigorous treatment of the framework of MWI
is the use of the notion of \emph{probability measure representation} and
its \emph{induction of probability measure}, as presented in Section~\ref{MLRam}.
Tadaki~\cite{T16CCR} then introduced the \emph{principle of typicality}
in this rigorous framework.
The principle of typicality is an operational refinement of Postulate~\ref{Born-rule}
based on
the notion of Martin-L\"{o}f randomness with respect to a probability measure
given in Definition~\ref{ML-randomness-wrtm}.
In what follows,
we review this rigorous framework of MWI based the principle of typicality.
See Tadaki~\cite{T18arXiv}
for the details of this
rigorous
framework of MWI
and its validity.
Also, for the details of
the difference between
this
framework of MWI
and that of the original MWI~\cite{E57}, see Tadaki~\cite{T18arXiv}.

Now,
according to Tadaki~\cite{T18arXiv},
let us introduce the
setting
of MWI \emph{in terms of our terminology
in a form of mathematical rigor}.
Let $\mathcal{S}$ be an arbitrary quantum system with state space $\mathcal{H}$
of an arbitrary dimension.
Consider a measurement over $\mathcal{S}$ described
by arbitrary measurement operators $\{M_m\}_{m\in\Omega}$ satisfying
the completeness equation,
\begin{equation}\label{completeness-equation}
  \sum_{m\in\Omega}M_m^{\dagger} M_m=I.
\end{equation}
Here, $\Omega$
is the set of all possible outcomes of the measurement.%
\footnote{The set $\Omega$ is non-empty and finite, and therefore it is an alphabet.}
Let $\mathcal{A}$ be an apparatus performing the measurement described by $\{M_m\}_{m\in\Omega}$,
which is a quantum system with state space $\ssoa$.
According to Postulates~\ref{state_space}, \ref{composition}, and \ref{evolution},
the measurement process of
the measurement described by the measurement operators $\{M_m\}_{m\in\Omega}$
is described by a unitary operator $U$ such that
\begin{equation}\label{single_measurement}
  U(\ket{\Psi}\otimes\ket{\Phi_{\mathrm{init}}})=\sum_{m\in\Omega}(M_m\ket{\Psi})\otimes\ket{\Phi[m]}
\end{equation}
for every $\ket{\Psi}\in\mathcal{H}$.
This is due to von Neumann~\cite[Section~VI.3]{vN55}
in the case where the measurement operators $\{M_m\}_{m\in\Omega}$ form
a PVM
in $\mathcal{H}$
such that $M_m$ is of rank $1$, i.e., $\dim M_m(\mathcal{H})=1$, for every $m\in\Omega$.
For the derivation of \eqref{single_measurement} in the general case,
see Tadaki~\cite[Section~7.1]{T18arXiv}.
Actually,
$U$ describes the interaction between the system $\mathcal{S}$ and the apparatus $\mathcal{A}$.
The vector $\ket{\Phi_{\mathrm{init}}}\in\ssoa$ is
the initial state of the apparatus $\mathcal{A}$, and $\ket{\Phi[m]}\in\ssoa$ is
a final state of the apparatus $\mathcal{A}$ for each $m\in\Omega$,
with $\braket{\Phi[m]}{\Phi[m']}=\delta_{m,m'}$.
For every $m\in\Omega$, the state $\ket{\Phi[m]}$ indicates that
\emph{the apparatus $\mathcal{A}$ records the value $m$
as the measurement outcome}.
By the unitary interaction~\eqref{single_measurement} as a measurement process,
a correlation (i.e., entanglement) is generated between
the system~$\mathcal{S}$ and the apparatus~$\mathcal{A}$.

In the framework of MWI,
we consider countably infinite copies of the system $\mathcal{S}$,
and consider a countably infinite repetition of the measurements
described by the identical measurement operators $\{M_m\}_{m\in\Omega}$
performed over each of such copies in a sequential order,
where each of the measurements is described
by the unitary time-evolution~\eqref{single_measurement}.
As repetitions of the measurement progressed,
correlations between the systems and the apparatuses are being generated in sequence
in the superposition of the total system consisting of the systems and the apparatuses.
The detail
is described as follows.

For each $n\in\N^+$, let $\mathcal{S}_n$ be the $n$th copy of the system $\mathcal{S}$ and
$\mathcal{A}_n$ the $n$th copy of the apparatus $\mathcal{A}$.
Each $\mathcal{S}_n$ is prepared in a state $\ket{\Psi_n}$
while all $\mathcal{A}_n$ are prepared in an identical state $\ket{\Phi_{\mathrm{init}}}$.
The measurement
described by the measurement operators $\{M_m\}_{m\in\Omega}$
is performed over each $\mathcal{S}_n$ one by one
in the increasing order of $n$,
by interacting each $\mathcal{S}_n$ with $\mathcal{A}_n$
according to the unitary time-evolution~\eqref{single_measurement}.
For each $n\in\N^+$,
let $\mathcal{H}_n$ be the state space of the total system consisting of
the first $n$ copies
$\mathcal{S}_1, \mathcal{A}_1, \mathcal{S}_2, \mathcal{A}_2,\dots,\mathcal{S}_n, \mathcal{A}_n$
of the system $\mathcal{S}$ and the apparatus $\mathcal{A}$.
These successive interactions between the copies of
the system $\mathcal{S}$ and the apparatus $\mathcal{A}$ as measurement processes
proceed in the following manner:

The
starting
state of the total system,
which consists of $\mathcal{S}_1$ and $\mathcal{A}_1$,
is $\ket{\Psi_1}\otimes\ket{\Phi_{\mathrm{init}}}\in\mathcal{H}_1$.
Immediately after the measurement
described by $\{M_m\}_{m\in\Omega}$
over $\mathcal{S}_1$,
the total system results in the state
\begin{align*}
  \sum_{m_1\in\Omega} (M_{m_1}\ket{\Psi_1})\otimes\ket{\Phi[m_1]}\in\mathcal{H}_1
\end{align*}
by the interaction~\eqref{single_measurement} as a measurement process.
In general,
immediately before
performing
the measurement
described by $\{M_m\}_{m\in\Omega}$
over $\mathcal{S}_n$,
the state of the total system,
which consists of
$\mathcal{S}_1, \mathcal{A}_1, \mathcal{S}_2, \mathcal{A}_2,\dots,\mathcal{S}_n, \mathcal{A}_n$,
is
\begin{align*}
  \sum_{m_1,\dots,m_{n-1}\in\Omega}
  (M_{m_1}\ket{\Psi_1})\otimes\dots\otimes(M_{m_{n-1}}\ket{\Psi_{n-1}})\otimes\ket{\Psi_n}
  \otimes\ket{\Phi[m_1]}\otimes\dots\otimes\ket{\Phi[m_{n-1}]}\otimes\ket{\Phi_{\mathrm{init}}}
\end{align*}
in $\mathcal{H}_n$,
where
$\ket{\Psi_n}$ is the initial state of $\mathcal{S}_n$ and
$\ket{\Phi_{\mathrm{init}}}$ is the initial state of $\mathcal{A}_n$.
Immediately after the measurement
described by $\{M_m\}_{m\in\Omega}$
over $\mathcal{S}_n$,
the total system results in the state
\begin{align}
   &\sum_{m_1,\dots,m_{n}\in\Omega}
  (M_{m_1}\ket{\Psi_1})\otimes\dots\otimes(M_{m_{n}}\ket{\Psi_n})
  \otimes\ket{\Phi[m_1]}\otimes\dots\otimes\ket{\Phi[m_{n}]} \nonumber \\ %
  =&\sum_{m_1,\dots,m_{n}\in\Omega}
  (M_{m_1}\ket{\Psi_1})\otimes\dots\otimes(M_{m_{n}}\ket{\Psi_n})
  \otimes\ket{\Phi[m_1\dots m_{n}]} \label{total_system}
\end{align}
in $\mathcal{H}_n$,
by the interaction~\eqref{single_measurement} as a measurement process between
the system $\mathcal{S}_n$ prepared in the state $\ket{\Psi_n}$
and the apparatus $\mathcal{A}_n$ prepared in the state $\ket{\Phi_{\mathrm{init}}}$.
The vector $\ket{\Phi[m_1\dots m_n]}$ denotes
the vector $\ket{\Phi[m_1]}\otimes\dots\otimes\ket{\Phi[m_n]}$ which represents
the state of $\mathcal{A}_1,\dots,\mathcal{A}_n$.
This state indicates that \emph{the apparatuses $\mathcal{A}_1,\dots,\mathcal{A}_n$ record
the values $m_1\dots m_n$
as the measurement outcomes
over $\mathcal{S}_1,\dots,\mathcal{S}_n$, respectively}.

In the superposition~\eqref{total_system},
on letting $n\to\infty$,
the length of
the records $m_1\dots m_n$ of the values
as the measurement outcomes
in the apparatuses $\mathcal{A}_1,\dots,\mathcal{A}_n$ diverges to infinity.
The
consideration of
this
limiting case results in the
definition of a \emph{world}.
Namely, a \emph{world} is defined as
an
infinite sequence of records of the values
as the measurement outcomes
in the apparatuses.
Thus, in the case described
so far,
a world is an infinite sequence over $\Omega$,
and the finite records $m_1\dots m_n$ in each state $\ket{\Phi[m_1\dots m_n]}$
in the superposition~\eqref{total_system} of the total system is a \emph{prefix} of a world.

For aiming at deriving Postulate~\ref{Born-rule}, the original MWI~\cite{E57} assigned
``weight'' to each of worlds.
In our rigorous framework of MWI,
we introduce a \emph{probability measure} on the set of all worlds in the following manner:
First, we introduce a probability measure representation on the set of prefixes of worlds, i.e.,
the set $\Omega^*$ in this case.
This probability measure representation is given by a function $r\colon\Omega^*\to[0,1]$ with
\begin{equation}\label{rpmwi}
  r(m_1\dotsc m_n)=\prod_{k=1}^n\bra{\Psi_k}M_{m_k}^\dag M_{m_k}\ket{\Psi_k},
\end{equation}
which is the square of the norm of each vector
$(M_{m_1}\ket{\Psi_1})\otimes\dots\otimes(M_{m_{n}}\ket{\Psi_n})\otimes\ket{\Phi[m_1\dots m_{n}]}$
in the superposition~\eqref{total_system}.
Using the completeness equation~\eqref{completeness-equation},
it is easy to check that $r$ is certainly a probability measure representation over $\Omega$.
We call the probability measure representation $r$
\emph{the
probability
measure representation
for the prefixes of
worlds}.
We then adopt
the probability measure \emph{induced} by the probability measure representation $r$
for the prefixes of worlds
as \emph{the probability measure on the set of all worlds}.

We summarize the above consideration and clarify
the definitions of the notion of \emph{world} and
the notion of \emph{the probability measure representation for the prefixes of worlds}
as in the following.

\begin{definition}%
[The probability measure representation for the prefixes of worlds, Tadaki~\cite{T18arXiv}]\label{pmrpwst}
Consider an arbitrary
quantum system $\mathcal{S}$ and
a
measurement over $\mathcal{S}$
described by arbitrary \emph{measurement operators $\{M_m\}_{m\in\Omega}$},
where the measurement process is described by \eqref{single_measurement} as an interaction
of the system $\mathcal{S}$ with an apparatus $\mathcal{A}$.
We suppose the following situation:
\begin{enumerate}
\item There are countably infinite copies $\mathcal{S}_1, \mathcal{S}_2, \mathcal{S}_3 \dotsc$
  of the system $\mathcal{S}$ and countably infinite copies
  $\mathcal{A}_1, \mathcal{A}_2, \mathcal{A}_3, \dotsc$ of the apparatus $\mathcal{A}$.
\item For each $n\in\N^+$, the system $\mathcal{S}_n$ is prepared in a state $\ket{\Psi_n}$,%
  \footnote{In Definition~\ref{pmrpwst}, all $\ket{\Psi_n}$ are not required to be an identical state.
  In the applications of the principle of typicality in this paper,
  which are presented in later sections,
  all $\ket{\Psi_n}$ are chosen to be an identical state, however.}
  while the apparatus $\mathcal{A}_n$ is prepared in the state $\ket{\Phi_{\mathrm{init}}}$,
  and then the measurement described by $\{M_m\}_{m\in\Omega}$ is performed over $\mathcal{S}_n$
  by interacting it with the apparatus $\mathcal{A}_n$ according to
  the unitary time-evolution~\eqref{single_measurement}.
\item Starting the measurement described by $\{M_m\}_{m\in\Omega}$ over $\mathcal{S}_1$,
  the measurement described by $\{M_m\}_{m\in\Omega}$ over each $\mathcal{S}_n$ is
  performed in the increasing order of $n$.
\end{enumerate}
We then note that,
for each $n\in\N^+$,
immediately after the measurement described by $\{M_m\}_{m\in\Omega}$ over $\mathcal{S}_n$,
the state of the total system consisting of
$\mathcal{S}_1, \mathcal{A}_1, \mathcal{S}_2, \mathcal{A}_2,\dots,\mathcal{S}_n, \mathcal{A}_n$ is
$$\ket{\Theta_n}:=\sum_{m_1,\dots,m_n\in\Omega}\ket{\Theta(m_1,\dots,m_n)},$$
where
$$\ket{\Theta(m_1,\dots,m_n)}
:=(M_{m_1}\ket{\Psi_1})\otimes\dots\otimes(M_{m_{n}}\ket{\Psi_n})
\otimes\ket{\Phi[m_1]}\otimes\dots\otimes\ket{\Phi[m_n]}.$$
The vectors $M_{m_1}\ket{\Psi_1},\dots,M_{m_{n}}\ket{\Psi_n}$ are states of $\mathcal{S}_1,\dots,\mathcal{S}_n$ up to the normalization factors, respectively, and
the vectors
$$\ket{\Phi[m_1]},\dots,\ket{\Phi[m_n]}$$
are
states of $\mathcal{A}_1,\dots,\mathcal{A}_n$, respectively.
The state
vector
$\ket{\Theta_n}$ of the total system is normalized while
each of the vectors $\{\ket{\Theta(m_1,\dots,m_n)}\}_{m_1,\dots,m_n\in\Omega}$
is not necessarily normalized.
Then,
\emph{the
probability
measure representation for the prefixes of
worlds}
is defined
as
a
function $p\colon\Omega^*\to[0,1]$ such that
\begin{equation}\label{p=bTmkT}
  p(m_1\dotsc m_n)=\braket{\Theta(m_1,\dots,m_n)}{\Theta(m_1,\dots,m_n)}
\end{equation}
for every $n\in\N^+$ and every $m_1,\dots,m_n\in\Omega$.
Moreover,
an infinite sequence over $\Omega$,
i.e., an infinite sequence of
possible
outcomes of the measurement described by $\{M_m\}_{m\in\Omega}$,
is called a \emph{world}.
\qed
\end{definition}

In Definition~\ref{pmrpwst}, it is easy to check that the function $p$ defined by \eqref{p=bTmkT} is
certainly a probability measure representation over $\Omega$.

As mentioned above, the original MWI by Everett~\cite{E57} aimed to derive
Postulate~\ref{Born-rule}~(i), the Born rule,
from the remaining postulates.
However, it would seem impossible to do this for several reasons
(see Tadaki~\cite{T18arXiv} for
these reasons).
Instead, it is appropriate to introduce an additional postulate
in our rigorous framework of MWI developed above,
in order to overcome the defect of the original MWI
and to make quantum mechanics \emph{operationally perfect}.
Thus, we put forward the principle of typicality as follows.

Intuitively, the principle of typicality states that \emph{our world is typical}.
Namely, our world is Martin-L\"of random with respect to the probability measure on the set of all worlds,
induced by the probability measure representation for the prefixes of worlds,
in the superposition of the total system which consists of
systems being measured and apparatuses measuring them.
Formally, the principle of typicality is stated as
the following postulate.

\begin{postulate}[The principle of typicality, Tadaki~\cite{T16CCR}]\label{POT}
We assume
the notation and notions introduced and defined in Definition~\ref{pmrpwst},
and consider the same setting of measurements as considered in Definition~\ref{pmrpwst}.
Let $\omega\in\Omega^\infty$ be our world.
Then $\omega$ is Martin-L\"{o}f random with respect to
the probability measure $\mu_p$ induced by
the probability measure representation $p\colon\Omega^*\to[0,1]$
for the prefixes of worlds.
Moreover, in our world $\omega$,
for every $n\in\N^+$
the state of
the composite system $\mathcal{S}_n+\mathcal{A}_n$,
which consists of
the $n$th copy $\mathcal{S}_n$ of the system $\mathcal{S}$ and
the $n$th copy $\mathcal{A}_n$ of the apparatus $\mathcal{A}$,
immediately after the measurement performed by $\mathcal{A}_n$ over $\mathcal{S}_n$
is given by
\begin{equation*}
(M_{\omega(n)}\ket{\Psi_n})\otimes\ket{\Phi[\omega(n)]},
\end{equation*}
up to the normalization factor.
\qed
\end{postulate}

For the comprehensive arguments of the validity of
Postulate~\ref{POT}, the principle of typicality,
see Tadaki~\cite{T18arXiv}.
For example, based on the results of
Tadaki~\cite{T14,T15,T16arXiv},
we can see that Postulate~\ref{POT} is certainly a refinement of
Postulate~\ref{Born-rule}
from the point of view of our intuitive understanding of the notion of probability.

In Section~\ref{sec:Analysis of Schroedinger's cat} below,
we will make an analysis of the Schr\"{o}dinger's cat~\cite{Schr35}
in our rigorous framework of quantum mechanics based on
Postulates~\ref{POT}, the principle of typicality,
together with Postulates~\ref{state_space}, \ref{composition}, and \ref{evolution}.
This analysis serves as one of the demonstrations of how properly our framework
based on the principle of typicality
works in practical problems in quantum mechanics.
Before making such an analysis,
we review the Schr\"{o}dinger's cat~\cite{Schr35}
\emph{in the terminology of the conventional quantum mechanics}
in the next section.

\section{Schr\"{o}dinger's cat}
\label{sec:Schroedinger's cat}

In this section,
we describe the Schr\"{o}dinger's cat~\cite{Schr35}
in the terminology of the conventional quantum mechanics,
which is reviewed in Section~\ref{QM}.

Consider a single qubit system $\mathcal{S}$ with state space
$\mathcal{H}_\mathcal{S}$,
and consider a quantum system $\mathcal{C}$, which serves as a ``cat'',
with state space $\mathcal{H}_\mathcal{C}$.
Let $\ket{0}$ and $\ket{1}$ form an orthonormal basis of
the state space $\mathcal{H}_\mathcal{S}$ of the
system $\mathcal{S}$.
We assume that the system $\mathcal{S}$ and the system $\mathcal{C}$
interact with each other according to the unitary time-evolution
described by a unitary operator $U_\mathcal{C}$
acting on
$\mathcal{H}_\mathcal{S}\otimes\mathcal{H}_\mathcal{C}$
such that
\begin{equation}\label{eq:U-SC-Q0}
  U_\mathcal{C}(\ket{0}\otimes\ket{\Phi^{\mathcal{C}}_{\mathrm{init}}})
  =\ket{0}\otimes\ket{\Phi^{\mathcal{C}}[0]},
\end{equation}
and
\begin{equation}\label{eq:U-SC-Q1}
  U_\mathcal{C}(\ket{1}\otimes\ket{\Phi^{\mathcal{C}}_{\mathrm{init}}})
  =\ket{1}\otimes\ket{\Phi^{\mathcal{C}}[1]}.
\end{equation}
Here, the vector $\ket{\Phi^{\mathcal{C}}_{\mathrm{init}}}\in\mathcal{H}_\mathcal{C}$ is
the initial state of the system $\mathcal{C}$ in the interaction,
and $\ket{\Phi^{\mathcal{C}}[k]}\in\mathcal{H}_\mathcal{C}$ is
a final state of the system $\mathcal{C}$ in the interaction for each $k=0,1$,
with $\braket{\Phi^{\mathcal{C}}[k]}{\Phi^{\mathcal{C}}[k']}=\delta_{k,k'}$.
In the setting of the Schr\"{o}dinger's cat~\cite{Schr35},
the system~$\mathcal{C}$ is chosen to be a `macroscopic' system,
and the states $\ket{\Phi^{\mathcal{C}}[0]}$ and $\ket{\Phi^{\mathcal{C}}[1]}$
are `macroscopically distinguishable'  states of the system~$\mathcal{C}$
from each other.

Then we consider an observer $\mathcal{W}$ who is on the outside of
the composite system $\mathcal{S}+\mathcal{C}$
consisting of the system $\mathcal{S}$ and the system $\mathcal{C}$. 
Immediately after the interaction above between the system $\mathcal{S}$ and the system $\mathcal{C}$
via the unitary time-evolution described by the unitary operator $U_\mathcal{C}$,
the observer $\mathcal{W}$
performs a measurement $\mathcal{M}^\mathcal{W}$
described by the measurement operators
$\{M^{\mathcal{W}}_0,M^{\mathcal{W}}_1,M^{\mathcal{W}}_2\}$
over the system $\mathcal{C}$, according to Postulate~\ref{Born-rule}, where
\begin{equation}\label{eq:SC-MW-def}
\begin{split}
  M^{\mathcal{W}}_0&:=\product{\Phi^{\mathcal{C}}[0]}{\Phi^{\mathcal{C}}[0]},\\
  M^{\mathcal{W}}_1&:=\product{\Phi^{\mathcal{C}}[1]}{\Phi^{\mathcal{C}}[1]},\\
  M^{\mathcal{W}}_2&:=\sqrt{I_\mathcal{C}-{M^{\mathcal{W}}_0}^\dag M^{\mathcal{W}}_0-{M^{\mathcal{W}}_1}^\dag M^{\mathcal{W}}_1}
  =I_\mathcal{C}-M^{\mathcal{W}}_0-M^{\mathcal{W}}_1,
\end{split}
\end{equation}
and $I_\mathcal{C}$
denotes
the identity operator acting on $\mathcal{H}_\mathcal{C}$.

Now, let $c_0$ and $c_1$ be arbitrary two non-zero complex numbers such that $\abs{c_0}^2+\abs{c_1}^2=1$, and we assume that
the system $\mathcal{S}$ is initially in a state $\ket{+}$
in the above setting,
where $\ket{+}$ is defined by
\begin{equation*}%
  \ket{+}:=c_0\ket{0}+c_1\ket{1}.
\end{equation*}
Then,
using \eqref{eq:U-SC-Q0} and \eqref{eq:U-SC-Q1} we see that
the composite system $\mathcal{S}+\mathcal{C}$ is in a state
$\ket{\Psi^{\mathcal{S}+\mathcal{C}}[+]}\in\mathcal{H}_\mathcal{S}\otimes\mathcal{H}_\mathcal{C}$
defined by
\begin{equation}\label{eq:SC-Def-kPS+F+}
  \ket{\Psi^{\mathcal{S}+\mathcal{C}}[+]}:=c_0\ket{0}\otimes\ket{\Phi^{\mathcal{C}}[0]}+c_1\ket{1}\otimes\ket{\Phi^{\mathcal{C}}[1]},
\end{equation}
immediately after the interaction above between the system $\mathcal{S}$ and the system $\mathcal{C}$
via the unitary time-evolution described by the unitary operator $U_\mathcal{C}$.
Thus, according to Postulate~\ref{Born-rule},
the measurement $\mathcal{M}^\mathcal{W}$ performed
by the observer $\mathcal{W}$ over the system $\mathcal{C}$ results in
one of the two possibilities (i) and (ii) below:
\begin{enumerate}
\item The measurement $\mathcal{M}^\mathcal{W}$ gives the outcome $0$,
  and the state of the composite system $\mathcal{S}+\mathcal{C}$
  immediately after the measurement $\mathcal{M}^\mathcal{W}$
  is $\ket{0}\otimes\ket{\Phi^{\mathcal{C}}[0]}\in\mathcal{H}_\mathcal{S}\otimes\mathcal{H}_\mathcal{C}$.
\item The measurement $\mathcal{M}^\mathcal{W}$ gives the outcome $1$,
  and the state of the composite system $\mathcal{S}+\mathcal{C}$
  immediately after the measurement $\mathcal{M}^\mathcal{W}$
  is $\ket{1}\otimes\ket{\Phi^{\mathcal{C}}[1]}\in\mathcal{H}_\mathcal{S}\otimes\mathcal{H}_\mathcal{C}$.
\end{enumerate}
Here, the possibilities (i) and (ii) occur with probabilities $\abs{c_0}^2$ and $\abs{c_1}^2$, respectively.

The state~\eqref{eq:SC-Def-kPS+F+} is a superposition of the `macroscopically distinguishable' states
$\ket{0}\otimes\ket{\Phi^{\mathcal{C}}[0]}$
and $\ket{1}\otimes\ket{\Phi^{\mathcal{C}}[1]}$
of the composite system $\mathcal{S}+\mathcal{C}$,
since the states $\ket{\Phi^{\mathcal{C}}[0]}$ and $\ket{\Phi^{\mathcal{C}}[1]}$
are `macroscopically distinguishable'  states of the system~$\mathcal{C}$
and the coefficients $c_0$ and $c_1$ are both non-zero.
The state~\eqref{eq:SC-Def-kPS+F+} is called a \emph{Schr\"{o}dinger's cat state}.

In the above argument of this section, according to Schr\"{o}dinger~\cite{Schr35}
the quantum system $\mathcal{C}$, which serves as a ``cat'',
is treated as a \emph{mere} quantum system and not as a measurement apparatus.
In contrast,
the situation where the quantum system $\mathcal{C}$ is treated as a measurement apparatus instead of as a mere quantum system
in the setting of the Schr\"{o}dinger's cat~\cite{Schr35} above
is just the situation of the Wigner's friend~\cite{W61},
which will be investigated in Sections~\ref{sec:Wigner's friend} and \ref{sec:Analysis of Wigner's friend} below.

\section{Analysis of Schr\"{o}dinger's cat based on the principle of typicality}
\label{sec:Analysis of Schroedinger's cat}

In this section, we make an analysis of
the Schr\"{o}dinger's cat,
which is described in the preceding section in the terminology of the conventional quantum mechanics,
in terms of our framework of quantum mechanics based on
Postulates~\ref{POT},
together with Postulates~\ref{state_space}, \ref{composition}, and \ref{evolution}.

To proceed this program,
first of all
\emph{we have to implement everything
in the setting of the Schr\"{o}dinger's cat described in the preceding section
by unitary time-evolution}
so that the premise assumed in Definition~\ref{pmrpwst} is fulfilled
for applying Postulates~\ref{POT}, the principle of typicality.
First,
the observer $\mathcal{W}$
is regarded
as a quantum system with state space $\mathcal{H}_\mathcal{W}$.
Then, according to \eqref{single_measurement},
the measurement process of
the measurement
$\mathcal{M}^\mathcal{W}$
is described by a unitary operator $U_\mathcal{W}$
acting on
$\mathcal{H}_\mathcal{C}\otimes\mathcal{H}_\mathcal{W}$
such that
\begin{equation}\label{eq:SC-U-MW-F}
  U_\mathcal{W}(\ket{\Psi}\otimes\ket{\Phi^{\mathcal{W}}_{\mathrm{init}}})
  =\sum_{l=0,1,2}(M^{\mathcal{W}}_l\ket{\Psi})\otimes\ket{\Phi^{\mathcal{W}}[l]}
\end{equation}
for every $\ket{\Psi}\in\mathcal{H}_\mathcal{C}$.
The unitary operator
$U_\mathcal{W}$ describes the interaction
between the system $\mathcal{C}$ and the observer $\mathcal{W}$
which is regarded as a quantum system.
The vector $\ket{\Phi^{\mathcal{W}}_{\mathrm{init}}}\in\mathcal{H}_\mathcal{W}$ is
the initial state of the
observer
$\mathcal{W}$, and $\ket{\Phi^{\mathcal{W}}[l]}\in\mathcal{H}_\mathcal{W}$ is
a final state of the
observer
$\mathcal{W}$ for each $l=0,1,2$,
with $\braket{\Phi^{\mathcal{W}}[l]}{\Phi^{\mathcal{W}}[l']}=\delta_{l,l'}$.
For each $l=0,1,2$,
the state $\ket{\Phi^{\mathcal{W}}[l]}$ indicates that
\emph{the observer $\mathcal{W}$ records the value $l$}.

Then, for complying with the setting in Definition~\ref{pmrpwst} exactly,
the measurement process $U_\mathcal{W}$ by the observer $\mathcal{W}$ over the system $\mathcal{C}$
must be extended over the composite system $\mathcal{S}+\mathcal{C}$ as follows:
We denote the set $\{0,1,2\}$ by $\Omega$,
and we define a finite
collection
$\{M_{l}\}_{l\in\Omega}$ of operators acting on
$\mathcal{H}_\mathcal{S}\otimes\mathcal{H}_\mathcal{C}$ by
\begin{equation}\label{eq:POT-SC-Mkl=dlkMFk}
  M_{l}:=I_\mathcal{S}\otimes M^{\mathcal{W}}_l,
\end{equation}
where $I_\mathcal{S}$ denotes the identity operator acting on $\mathcal{H}_\mathrm{S}$.
Since $\{M^{\mathcal{C}}_l\}_{l\in\Omega}$ forms measurement operators acting on $\mathcal{H}_\mathcal{C}$,
it follows that
\begin{equation*}
  \sum_{l\in\Omega} M_{l}^\dag M_{l}
  =I_\mathcal{S}\otimes\sum_{l=0,1,2}{M^{\mathcal{W}}_l}^\dag M^{\mathcal{W}}_l=I_\mathcal{S}\otimes I_\mathcal{C}.
\end{equation*}
Thus, the finite collection $\{M_{l}\}_{l\in\Omega}$ satisfies the \emph{completeness equation}, and therefore forms \emph{measurement operators}.
We define a unitary operator $U_{\mathrm{whole}}$ acting on $\mathcal{H}_\mathcal{S}\otimes\mathcal{H}_\mathcal{C}\otimes\mathcal{H}_\mathcal{W}$
by
\[
  U_{\mathrm{whole}}:=I_\mathcal{S}\otimes U_{\mathcal{W}}.
\]
Then
using \eqref{eq:SC-U-MW-F} and \eqref{eq:POT-SC-Mkl=dlkMFk} we see that
\begin{equation*}%
\begin{split}
  U_{\mathrm{whole}}(\ket{\Psi_\mathcal{S}}\otimes\ket{\Psi_\mathcal{C}}\otimes\ket{\Phi^{\mathcal{W}}_{\mathrm{init}}})
  &=(I_\mathcal{S}\otimes U_{\mathcal{W}})
      (\ket{\Psi_\mathcal{S}}\otimes\ket{\Psi_\mathcal{C}}\otimes\ket{\Phi^{\mathcal{W}}_{\mathrm{init}}}) \\
  &=\ket{\Psi_\mathcal{S}}\otimes U_{\mathcal{W}}
      (\ket{\Psi_\mathcal{C}}\otimes\ket{\Phi^{\mathcal{W}}_{\mathrm{init}}}) \\
  &=\ket{\Psi_\mathcal{S}}\otimes
      \sum_{l=0,1,2}(M^{\mathcal{W}}_l\ket{\Psi_\mathcal{C}})
      \otimes\ket{\Phi^{\mathcal{W}}[l]} \\
  &=\sum_{l\in\Omega}(M_l(\ket{\Psi_\mathcal{S}}\otimes\ket{\Psi_\mathcal{C}}))
      \otimes\ket{\Phi^{\mathcal{W}}[l]}
\end{split}
\end{equation*}
for each $\ket{\Psi_\mathcal{S}}\in\mathcal{H}_\mathcal{S}$
and $\ket{\Psi_\mathcal{C}}\in\mathcal{H}_\mathcal{C}$.
Therefore, we have that
\begin{equation}\label{eq:SC-all}
  U_{\mathrm{whole}}(\ket{\Psi}\otimes\ket{\Phi^{\mathcal{W}}_{\mathrm{init}}})
  =\sum_{l\in\Omega}(M_l\ket{\Psi})\otimes\ket{\Phi^{\mathcal{W}}[l]}
\end{equation}
for every state $\ket{\Psi}\in\mathcal{H}_\mathcal{S}\otimes\mathcal{H}_\mathcal{C}$ of the composite system $\mathcal{S}+\mathcal{C}$.
The unitary time-evolution brought about in the form of \eqref{eq:SC-all} corresponds to
``the measurement process described by \eqref{single_measurement} as an interaction
of the system $\mathcal{S}$ with an apparatus $\mathcal{A}$''
in Definition~\ref{pmrpwst}.

On the other hand, recall that
the state of the composite system $\mathcal{S}+\mathcal{C}$
immediately after the interaction between the system $\mathcal{S}$ and the system $\mathcal{C}$
according to
the unitary time-evolution
described by the unitary operator $U_\mathcal{C}$
is given by $\ket{\Psi^{\mathcal{S}+\mathcal{C}}[+]}$,
which is defined by \eqref{eq:SC-Def-kPS+F+}.
This state $\ket{\Psi^{\mathcal{S}+\mathcal{C}}[+]}$ of the composite system $\mathcal{S}+\mathcal{C}$ corresponds to
``the initial state $\ket{\Psi_n}$ of the system $\mathcal{S}_n$'' in Definition~\ref{pmrpwst}
for all $n\in\N^+$.

\subsection{Application of the principle of typicality}
\label{subsec:SC-Appl-POT}

Thus, in the above setting,
the unitary operator $U_{\mathrm{whole}}$ applying to
the state
$\ket{\Psi^{\mathcal{S}+\mathcal{C}}[+]}\otimes\ket{\Phi^{\mathcal{W}}_{\mathrm{init}}}$ describes
the \emph{repeated once} of the infinite repetition of the measurement
where
the measurement described by the measurement operators $\{M_{l}\}_{l\in\Omega}$
performed over the ``initial state'' $\ket{\Psi^{\mathcal{S}+\mathcal{C}}[+]}$
of the composite system $\mathcal{S}+\mathcal{C}$
is infinitely repeated.
Actually,
as is confirmed from the form of \eqref{eq:SC-all},
the application of $U_{\mathrm{whole}}$ itself, of course,
forms a \emph{single measurement}, which is described by
the measurement operators $\{M_{l}\}_{l\in\Omega}$
and whose all possible outcomes form the set $\Omega$.

Hence, we can apply Definition~\ref{pmrpwst}
to this scenario of the setting of measurements.
Thus,
according to Definition~\ref{pmrpwst},
we can see that a \emph{world} is an infinite sequence over $\Omega$
and the probability measure induced by
the \emph{probability measure representation for the prefixes of worlds}
is a Bernoulli measure $\lambda_P$ on
$\Omega^\infty$,
where $P$ is a finite probability space on $\Omega$ such that
$P(l)$ is the square of the norm of the vector
\begin{equation*}
  (M_{l}\ket{\Psi^{\mathcal{S}+\mathcal{C}}[+]})\otimes\ket{\Phi^{\mathcal{W}}[l]}
\end{equation*}
for every $l\in\Omega$.
Here $\Omega$ is the set of all possible records of
the observer $\mathcal{W}$
in the \emph{repeated once} of the measurements.
Let us calculate the explicit form of $P(l)$.
First, using \eqref{eq:POT-SC-Mkl=dlkMFk}, \eqref{eq:SC-MW-def}, and \eqref{eq:SC-Def-kPS+F+} we have that
\begin{equation}\label{eq:SC_Mklm+Pklm=fklmoss2-0}
  (M_{l}\ket{\Psi^{\mathcal{S}+\mathcal{C}}[+]})\otimes\ket{\Phi^{\mathcal{W}}[l]}
  =c_l\ket{l}\otimes\ket{\Phi^{\mathcal{C}}[l]}\otimes\ket{\Phi^{\mathcal{W}}[l]}
\end{equation}
for every $l\in\{0,1\}$,
and therefore
\begin{equation}\label{eq:SC-finitepsP-0-1}
  P(l)=\abs{c_l}^2
\end{equation}
for every $l\in\{0,1\}$.
On the other hand, since $\{M_l\}_{l\in\Omega}$ forms measurement operators,
we see that
\[
  \sum_{l\in\Omega} P(l)
  =\sum_{l\in\Omega} \bra{\Psi^{\mathcal{S}+\mathcal{C}}[+]}M_{l}^\dag M_{l}\ket{\Psi^{\mathcal{S}+\mathcal{C}}[+]}\braket{\Phi^{\mathcal{W}}[l]}{\Phi^{\mathcal{W}}[l]}
  =\bra{\Psi^{\mathcal{S}+\mathcal{C}}[+]}(I_\mathcal{S}\otimes I_\mathcal{C})\ket{\Psi^{\mathcal{S}+\mathcal{C}}[+]}
  =1.
\]
Therefore, it follows from \eqref{eq:SC-finitepsP-0-1}
and $\abs{c_0}^2+\abs{c_1}^2=1$
that
\begin{equation}\label{eq:SC-finitepsP-2}
  P(2)=0.
\end{equation}

Now, let us apply
Postulate~\ref{POT}, the \emph{principle of typicality},
to the setting of measurements
developed above.
Let $\omega$ be \emph{our world} in the infinite repetition of
the measurement described by the measurement operators $\{M_{l}\}_{l\in\Omega}$
in the above setting.
This $\omega$
is an infinite sequence over $\Omega$
consisting of records
in the observer $\mathcal{W}$
which is being generated by the infinite repetition of the measurement
described by the measurement operators $\{M_{l}\}_{l\in\Omega}$
in the above setting.
Since the Bernoulli measure $\lambda_P$ on $\Omega^\infty$ is
the probability measure induced by the
probability
measure representation
for the prefixes of
worlds
in the above setting,
it follows from
Postulate~\ref{POT}
that \emph{$\omega$ is Martin-L\"of $P$-random}.

Then, since $\omega$ is Martin-L\"of $P$-random, it follows from \eqref{eq:SC-finitepsP-2} and
Corollary~\ref{cor:always-positive-probability}
that
\begin{equation}\label{eq:SC-omega-neq-2}
  \omega(n)\neq 2
\end{equation}
for every $n\in\N^+$, i.e., $\omega$ is an infinite binary sequence.%
\footnote{Based on \eqref{eq:SC-finitepsP-0-1},
we can further show that $\omega$ is a Martin-L\"of $R$-random infinite binary sequence,
where $R$ is a finite probability space on $\{0,1\}$ such that
$R(0)=\abs{c_0}^2$ and $R(1)=\abs{c_1}^2$.}
Thus, since $\omega(n)\in\{0,1\}$ for every $n\in\N^+$,
it follows from Postulate~\ref{POT} and \eqref{eq:SC_Mklm+Pklm=fklmoss2-0}
that, in our world $\omega$, for each $n\in\N^+$ the state of
the $n$th composite system $\mathcal{S}+\mathcal{C}+\mathcal{W}$,
i.e., the $n$th copy of a composite system consisting of
the two systems $\mathcal{S}$ and $\mathcal{C}$ and the observer $\mathcal{W}$,
immediately after the measurement $\mathcal{M}^\mathcal{W}$ is given by
\begin{equation}\label{eq:SC-state-immafter-MW}
  (M_{\omega(n)}\ket{\Psi^{\mathcal{S}+\mathcal{C}}[+]})\otimes\ket{\Phi^{\mathcal{W}}[\omega(n)]}
  =c_{\omega(n)}\ket{\omega(n)}\otimes\ket{\Phi^{\mathcal{C}}[\omega(n)]}\otimes\ket{\Phi^{\mathcal{W}}[\omega(n)]}
\end{equation}
up to the normalization factor.
Since $\omega$ is Martin-L\"of $P$-random,
it follows from Theorem~\ref{FI} and \eqref{eq:SC-finitepsP-0-1} that
for every $l\in\{0,1\}$ it holds that
\begin{equation}\label{eq:SC-imafterW-by-PoT-LLN}
  \lim_{n\to\infty} \frac{N_l(\rest{\omega}{n})}{n}
  =\abs{c_l}^2,
\end{equation}
where $N_l(\rest{\omega}{n})$ denotes the number of the occurrences of $l$
in the prefix of $\omega$ of length $n$,
as expected from the point of view of the conventional quantum mechanics.
Recall that, for each $l\in\{0,1\}$, the state $\ket{\Phi^{\mathcal{W}}[l]}$ indicates that
\emph{the observer $\mathcal{W}$ records the value $l$}.
Thus, based on \eqref{eq:SC-omega-neq-2}, \eqref{eq:SC-state-immafter-MW} and \eqref{eq:SC-imafterW-by-PoT-LLN},
 we have that, in our world $\omega$, the following holds for each $l\in\{0,1\}$:
In a proportion of $\abs{c_l}^2$ out of the infinite repetitions of the experiment,
the state of the composite system $\mathcal{S}+\mathcal{C}$ is $\ket{l}\otimes\ket{\Phi^{\mathcal{C}}[l]}$ and
the observer $\mathcal{W}$ records the value $l$,
immediately after the measurement $\mathcal{M}^\mathcal{W}$.
Hence,
in this manner, we have operationally refined and reformulated
the argument of the Schr\"{o}dinger's cat
given in Section~\ref{sec:Schroedinger's cat},
in our rigorous framework of quantum mechanics based on
Postulates~\ref{POT},
together with Postulates~\ref{state_space}, \ref{composition}, and \ref{evolution}.
In comparison,
the statements and results given in Section~\ref{sec:Schroedinger's cat}
regarding
the Schr\"{o}dinger's cat
are
at least
operationally vague
since they are described
in terms of the conventional quantum mechanics
where the operational characterization of the notion of probability is not given.

In the above analysis of this section,
we treat the quantum system $\mathcal{C}$
as a mere quantum system
within
our
rigorous framework of quantum mechanics based on Postulates~\ref{POT},
together with Postulates~\ref{state_space}, \ref{composition}, and \ref{evolution}.
In order to make the corresponding analysis for
the situation of the Wigner's friend~\cite{W61}
where the quantum system $\mathcal{C}$ is treated as a measurement apparatus instead of as a mere quantum system
in the setting of the Schr\"{o}dinger's cat above,
we must
extend our framework based on the principle of typicality
over such situations.
We will do this in the next section comprehensively from a general point of view.

\section{Apparatus measured by apparatuses and its confirming point}
\label{sec:confirming point}

In this section we further develop the
rigorous
framework based on the principle of typicality
reviewed
in Section~\ref{MWI},
in order to make it applicable to
the situation where apparatuses perform measurements
over other apparatuses as well as
over
a normal quantum system being measured.

\subsection{The general setting of the problem}

Consider measuring apparatuses $\mathcal{A}_1,\dots,\mathcal{A}_n$
performing measurements $\mathcal{M}^1,\dots,\mathcal{M}^n$
described by measurement operators
$\{M^1_{m_1}\}_{m_1\in\Omega_1},\dots,\{M^n_{m_n}\}_{m_n\in\Omega_n}$,
respectively,
where $\Omega_1,\dots,\Omega_n$ are
the sets of all possible outcomes of the measurements $\mathcal{M}^1,\dots,\mathcal{M}^n$.
The measuring apparatuses $\mathcal{A}_1,\dots,\mathcal{A}_n$ themselves
are quantum systems with state spaces
$\overline{\mathcal{H}}_1\dots,\overline{\mathcal{H}}_n$,
respectively.
We also consider
a quantum system $\mathcal{S}$ with state space $\mathcal{H}_{\mathcal{S}}$.
We assume
the situation that, for each $k=1,\dots,n$,
the apparatus $\mathcal{A}_k$ performs the measurement $\mathcal{M}^k$
over the composite system consisting of the system $\mathcal{S}$
and the apparatuses $\mathcal{A}_{1},\dots,\mathcal{A}_{k-1}$.
Therefore,
we assume that, for every $k=1,\dots,n$ and every $m_k\in\Omega_k$,
the measurement operator $M^k_{m_k}$ is
an endomorphism of
$\mathcal{H}_{\mathcal{S}}\otimes\overline{\mathcal{H}}_1\otimes\dots\otimes\overline{\mathcal{H}}_{k-1}$.

Thus, according to \eqref{single_measurement}, for each $k=1,\dots,n$,
the measurement process of the measurement $\mathcal{M}^k$
is described by a unitary operator $U_k$ acting on
$\mathcal{H}_\mathcal{S}\otimes\overline{\mathcal{H}}_1\otimes\dots\otimes\overline{\mathcal{H}}_k$
such that
\begin{equation}\label{eq:Ui-TSCP}
  U_k(\ket{\Psi}\otimes\ket{\Phi^k_{\mathrm{init}}})
  =\sum_{m_k\in\Omega_k}(M^k_{m_k}\ket{\Psi})\otimes\ket{\Phi^k[m_k]}
\end{equation}
for every $\ket{\Psi}\in\mathcal{H}_\mathcal{S}\otimes\overline{\mathcal{H}}_1\otimes\dots\otimes\overline{\mathcal{H}}_{k-1}$.
For each $k=1,\dots,n$, the vector $\ket{\Phi^k_{\mathrm{init}}}\in\overline{\mathcal{H}}_k$ is
the initial state of the apparatus $\mathcal{A}_k$,
and the vector $\ket{\Phi^k[m_k]}\in\overline{\mathcal{H}}_k$ is
a final state of the apparatus $\mathcal{A}_k$ for each $m_k\in\Omega_k$,
with $\braket{\Phi^k[m_k]}{\Phi^k[m_k']}=\delta_{m_k,m_k'}$.
For every $k=1,\dots,n$ and every $m_k\in\Omega_k$, the state $\ket{\Phi^k[m_k]}$ indicates that
\emph{the apparatus $\mathcal{A}_k$ records the value $m_k$
as the measurement outcome}.
For each $k=1,\dots,n$, let $\mathcal{H}_k$ be
a subspace of $\overline{\mathcal{H}}_k$ spanned by
the vectors
\[
  \{\ket{\Phi^k[m_k]}\}_{m_k\in\Omega_k}.
\]
As a natural requirement, for each $k=1,\dots,n$, we assume that
\begin{equation}\label{eq:Ui-TSCP-MO}
  M^k_{m_k}\ket{\Psi}\in\mathcal{H}_\mathcal{S}\otimes\mathcal{H}_1\otimes\dots\otimes\mathcal{H}_{k-1}
\end{equation}
for every $m_k\in\Omega_k$ and every $\ket{\Psi}\in\mathcal{H}_\mathcal{S}\otimes\mathcal{H}_k\otimes\dots\otimes\mathcal{H}_{k-1}$.

Let us consider the total system $\mathcal{S}_{\mathrm{Total}}$
consisting of the system $\mathcal{S}$
and the apparatuses $\mathcal{A}_1,\dots,\mathcal{A}_n$.
For any $i,j\in\{0,\dots,n\}$ with $i\le j$, we use $I_{i,\dots,j}$ to denote
the identity operator acting on $\overline{\mathcal{H}}_i\otimes\dots\otimes\overline{\mathcal{H}}_j$,
where $\overline{\mathcal{H}}_0$ denotes $\mathcal{H}_{\mathcal{S}}$.
Therefore $I_0$ denotes the identity operator acting on $\mathcal{H}_{\mathcal{S}}$,
for example.
For each $k=1,\dots,n$, we define a unitary operator $U_{\mathrm{Total}}^{1,\dots,k}$
on $\mathcal{H}_\mathcal{S}\otimes\overline{\mathcal{H}}_1\otimes\dots\otimes\overline{\mathcal{H}}_n$ by
\[
  U_{\mathrm{Total}}^{1,\dots,k}:=
  (U_{k}\otimes I_{k+1,\dots,n})\circ(U_{k-1}\otimes I_{k,\dots,n})\circ
  \dotsm\dotsm\circ
  (U_2\otimes I_{3,\dots,n})\circ(U_1\otimes I_{2,\dots,n}).
\]
Since $U_{1},\dots,U_{n}$ are unitary operators, based on \eqref{eq:Ui-TSCP}
and \eqref{eq:Ui-TSCP-MO}, it is easy to see that
for every $k=1,\dots,n$ there exist
unique
measurement operators
$\{M^{1,\dots,k}_{m_1,\dots,m_k}\}_{(m_1,\dots,m_k)\in\Omega_1\times\dots\times\Omega_k}$,
satisfying the completeness equation
\[
  \sum_{(m_1,\dots,m_k)\in\Omega_1\times\dots\times\Omega_k}
  {M^{1,\dots,k}_{m_1,\dots,m_k}}^\dag M^{1,\dots,k}_{m_1,\dots,m_k}
  =I_{0},
\]
such that for every $\ket{\Psi}\in\mathcal{H}_\mathcal{S}$ it holds that
\begin{equation}\label{eq:Ui-TSCP-1tok}
\begin{split}
  &U_{\mathrm{Total}}^{1,\dots,k}(\ket{\Psi}\otimes\ket{\Phi^1_{\mathrm{init}}}\otimes\dots\otimes\ket{\Phi^n_{\mathrm{init}}}) \\
  &=\sum_{(m_1,\dots,m_k)\in\Omega_1\times\dots\times\Omega_k}
  (M^{1,\dots,k}_{m_1,\dots,m_k}\ket{\Psi})
  \otimes\ket{\Phi^1[m_1]}\otimes\dots\otimes\ket{\Phi^k[m_k]}
  \otimes\ket{\Phi^{k+1}_{\mathrm{init}}}\otimes\dots\otimes\ket{\Phi^n_{\mathrm{init}}}.
\end{split}
\end{equation}
For every $\ket{\Psi}\in\mathcal{H}_\mathcal{S}$,
if $\ket{\Psi}$ is a state of the system $\mathcal{S}$
immediately before the measurement $\mathcal{M}^1$,
then $\ket{\Psi}\otimes\ket{\Phi^1_{\mathrm{init}}}\otimes\dots\otimes\ket{\Phi^n_{\mathrm{init}}}$
is the state of the total system $\mathcal{S}_{\mathrm{Total}}$
immediately before the measurement $\mathcal{M}^1$
and $U_{\mathrm{Total}}^{1,\dots,k}(\ket{\Psi}\otimes\ket{\Phi^1_{\mathrm{init}}}\otimes\dots\otimes\ket{\Phi^n_{\mathrm{init}}})$
is the state of the total system $\mathcal{S}_{\mathrm{Total}}$
immediately after the measurement $\mathcal{M}^k$
for every $k=1,\dots,n$.

Let $\ket{\Psi^0_{\mathrm{init}}}$ be an arbitrary state of the system $\mathcal{S}$
immediately before the measurement $\mathcal{M}^1$.
Then, for each $k=1,\dots,n$,
we use $\ket{\Psi^k_{\mathrm{Total}}}$ to denote the state
$U_{\mathrm{Total}}^{1,\dots,k}(\ket{\Psi^0_{\mathrm{init}}}\otimes\ket{\Phi^1_{\mathrm{init}}}\otimes\dots\otimes\ket{\Phi^n_{\mathrm{init}}})$
of the total system $\mathcal{S}_{\mathrm{Total}}$
immediately after the measurement $\mathcal{M}^k$.
In particular, it follows from \eqref{eq:Ui-TSCP-1tok} that
the state $\ket{\Psi^n_{\mathrm{Total}}}$
of the total system $\mathcal{S}_{\mathrm{Total}}$
immediately after the measurement $\mathcal{M}^n$ is given by
\begin{equation}\label{eq:Ui-TSCP-1ton}
\begin{split}
  \ket{\Psi^n_{\mathrm{Total}}}
  =\sum_{(m_1,\dots,m_n)\in\Omega_1\times\dots\times\Omega_n}
  (M^{1,\dots,n}_{m_1,\dots,m_n}\ket{\Psi^0_{\mathrm{init}}})
  \otimes\ket{\Phi^1[m_1]}\otimes\dots\otimes\ket{\Phi^n[m_n]}.
\end{split}
\end{equation}

In our framework based on the \emph{principle of typicality},
the unitary operator $U_{\mathrm{Total}}^{1,\dots,n}$ applying to
the initial state $\ket{\Psi^0_{\mathrm{init}}}\otimes\ket{\Phi^1_{\mathrm{init}}}\otimes\dots\otimes\ket{\Phi^n_{\mathrm{init}}}$ describes
the \emph{repeated once} of the infinite repetition of the measurements
where the succession of
the measurements $\mathcal{M}^1,\dots,\mathcal{M}^n$
in this order
is infinitely repeated.
It follows from \eqref{eq:Ui-TSCP-1ton} that
the application of $U_{\mathrm{Total}}^{1,\dots,n}$
can be regarded as a
\emph{single measurement} which is described by
the measurement operators
$\{M^{1,\dots,n}_{m_1,\dots,m_n}\}_{(m_1,\dots,m_n)\in\Omega_1\times\dots\times\Omega_n}$
and whose all possible outcomes form the set $\Omega_1\times\dots\times\Omega_n$.
Hence, we can apply Definition~\ref{pmrpwst}
to this scenario of the setting of measurements.
Thus,
according to Definition~\ref{pmrpwst},
we can see that a \emph{world} is an infinite sequence over $\Omega_1\times\dots\times\Omega_n$
and the probability measure induced by
the \emph{probability measure representation for the prefixes of worlds}
is a Bernoulli measure $\lambda_P$ on
$(\Omega_1\times\dots\times\Omega_n)^\infty$,
where $P$ is a finite probability space on $\Omega_1\times\dots\times\Omega_n$
such that $P(m_1,\dots,m_n)$ is the square of the norm of
the vector
\begin{equation*}%
  (M^{1,\dots,n}_{m_1,\dots,m_n}\ket{\Psi^0_{\mathrm{init}}})
  \otimes\ket{\Phi^1[m_1]}\otimes\dots\otimes\ket{\Phi^n[m_n]}
\end{equation*}
for every $(m_1,\dots,m_n)\in\Omega_1\times\dots\times\Omega_n$.
Here $\Omega_1\times\dots\times\Omega_n$ is the set of all possible records of
the apparatuses $\mathcal{A}_1,\dots,\mathcal{A}_n$
in the \emph{repeated once} of the experiments.

Thus, we can apply
Postulate~\ref{POT}, the \emph{principle of typicality},
to the setting of measurements developed above.
Let $\omega$ be \emph{our world} in the infinite repetition of
the measurement described by the measurement operators
$\{M^{1,\dots,n}_{m_1,\dots,m_n}\}_{(m_1,\dots,m_n)\in\Omega_1\times\dots\times\Omega_n}$
in the above setting.
This $\omega$ is an infinite sequence over $\Omega_1\times\dots\times\Omega_n$
consisting of records
in the apparatuses $\mathcal{A}_1,\dots,\mathcal{A}_n$
which is being generated by the infinite repetition of the measurement
described by the measurement operators
$\{M^{1,\dots,n}_{m_1,\dots,m_n}\}_{(m_1,\dots,m_n)\in\Omega_1\times\dots\times\Omega_n}$
in the above setting.
Since the Bernoulli measure $\lambda_P$ on 
$(\Omega_1\times\dots\times\Omega_n)^\infty$ is
the probability measure induced by the probability measure representation
for the prefixes of worlds in the above setting,
it follows from Postulate~\ref{POT} that \emph{$\omega$ is Martin-L\"of $P$-random}.

We use $\omega^1,\dots,\omega^n$
to denote the infinite sequences over $\Omega_1,\dots,\Omega_n$, respectively,
such that $(\omega^1(l),\dots,\omega^n(l))=\omega(l)$ for every $l\in\N^+$.
Then, it follows from Postulate~\ref{POT} and \eqref{eq:Ui-TSCP-1ton}
that, in our world $\omega$, for each $l\in\N^+$ the state of
the $l$th total system $\mathcal{S}_{\mathrm{Total}}$,
i.e., the $l$th copy of a composite system consisting of
the system $\mathcal{S}$ and the apparatuses $\mathcal{A}_1,\dots,\mathcal{A}_n$,
immediately after the measurement $\mathcal{M}^n$ is given by
\begin{equation}\label{eq:Ui-TSCP-nth-individual}
  (M^{1,\dots,n}_{\omega^1(l),\dots,\omega^n(l)}\ket{\Psi^0_{\mathrm{init}}})
  \otimes\ket{\Phi^1[\omega^1(l)]}\otimes\dots\otimes\ket{\Phi^n[\omega^n(l)]},
\end{equation}
up to the normalization factor.

Now, originally,
the state $\ket{\Psi^n_{\mathrm{Total}}}$
of the total system $\mathcal{S}_{\mathrm{Total}}$ is
a superposition of the vectors of the form
\begin{equation}\label{eq:Ui-TSCP-individual}
  (M^{1,\dots,n}_{m_1,\dots,m_n}\ket{\Psi^0_{\mathrm{init}}})
  \otimes\ket{\Phi^1[m_1]}\otimes\dots\otimes\ket{\Phi^n[m_n]}
\end{equation}
immediately after the measurement $\mathcal{M}^n$
according to \eqref{eq:Ui-TSCP-1ton}.
However, as we saw
in the above argument leading to the state \eqref{eq:Ui-TSCP-nth-individual},
the application of the principle of typicality
implies that 
the state of the total system $\mathcal{S}_{\mathrm{Total}}$ is actually
given by a specific vector of the form of \eqref{eq:Ui-TSCP-individual}
(up to the normalization factor)
immediately after the measurement $\mathcal{M}^n$
in each repetition of the
measurement
described by the measurement operators
$\{M^{1,\dots,n}_{m_1,\dots,m_n}\}_{(m_1,\dots,m_n)\in\Omega_1\times\dots\times\Omega_n}$.

In what follows, we study the problem to determine
the whole state or the respective states of the system $\mathcal{S}$
and the apparatuses $\mathcal{A}_1,\dots,\mathcal{A}_k$
in the total system $\mathcal{S}_{\mathrm{Total}}$
immediately after each measurement $\mathcal{M}^k$ with $1\le k<n$,
provided that
the state of the total system $\mathcal{S}_{\mathrm{Total}}$
immediately after the measurement $\mathcal{M}^n$ is specified as
a vector of the form
\[
  \ket{\Psi}\otimes\ket{\Phi^1[m_1]}\otimes\dots\otimes\ket{\Phi^n[m_n]}
\]
with $\ket{\Psi}\in\mathcal{H}_{\mathcal{S}}$ and
$(m_1,\dots,m_n)\in\Omega_1\times\dots\times\Omega_n$,
just like
the vector \eqref{eq:Ui-TSCP-nth-individual}.
In order to solve this problem, we must introduce some postulates.
For that purpose, we first introduce some terminology.

\subsection{Confirming points for the final states of apparatuses}

\begin{definition}\label{def:unchanged-state}
For any $j=1,\dots,n$, we say that 
\emph{the states of the system $\mathcal{S}$ are unchanged
after the measurement by the apparatus $\mathcal{A}_j$} if
for every $m_j\in\Omega_j$, every $\ket{\Psi}\in\mathcal{H}_{\mathcal{S}}$, and
every $\ket{\Psi_R}\in\mathcal{H}_{1}\otimes\dots\otimes\mathcal{H}_{j-1}$
there exists a vector
$\ket{\Theta_R}\in\mathcal{H}_{1}\otimes\dots\otimes\mathcal{H}_{j-1}$ such that
\begin{equation*}
  M^j_{m_j}(\ket{\Psi}\otimes\ket{\Psi_R})=\ket{\Psi}\otimes\ket{\Theta_R}.
\end{equation*}

For any $j=1,\dots,n$, we say that
\emph{the states of the system $\mathcal{S}$ are
confirmed before the measurement by the apparatus $\mathcal{A}_j$}
if the states of the system $\mathcal{S}$ are
unchanged after the measurement by the apparatus $\mathcal{A}_k$ for all $k\ge j$.
\qed
\end{definition}

\begin{definition}[Confirming point for the states of system]
\label{def:confirming-point-state}
The notion of the
\emph{confirming point for the states of the system $\mathcal{S}$}
is defined as follows:
Let $V$ be the set of all $j\in\{1,\dots,n\}$
such that the states of the system $\mathcal{S}$ are confirmed
before the measurement by the apparatus $\mathcal{A}_j$.
If $V\neq\emptyset$, then the
\emph{confirming point for the states of the system $\mathcal{S}$}
is defined as the system $\mathcal{S}$ itself if $j=1$ and the apparatus $\mathcal{A}_{j-1}$
otherwise, where $j:=\min V$.
If $V=\emptyset$, then the
\emph{confirming point for the states of the system $\mathcal{S}$}
is defined as the apparatus $\mathcal{A}_{n}$.
\qed
\end{definition}

It is then easy to see that
the notion of the confirming point for the states of the system $\mathcal{S}$
can be equivalently
characterized as follows:
Let $C$ be the set of all $k\in\{1,\dots,n\}$ such that
the states of the system $\mathcal{S}$ are \emph{not}
unchanged after the measurement by the apparatus $\mathcal{A}_k$.
If $C\neq\emptyset$, then the confirming point for the states of the system $\mathcal{S}$
is the apparatus $\mathcal{A}_{l}$ with $l:=\max C$.
If $C=\emptyset$, then the confirming point for the states of the system $\mathcal{S}$
is the system $\mathcal{S}$ itself.

\begin{definition}\label{def:unchanged-apparatus}
For any $i,j=1,\dots,n$ with $i<j$, we say that 
\emph{the final states of the apparatus $\mathcal{A}_i$ are
unchanged after the measurement by
the apparatus $\mathcal{A}_j$} if for every $m_i\in\Omega_i$, every $m_j\in\Omega_j$,
every $\ket{\Psi_L}\in\mathcal{H}_{\mathcal{S}}\otimes\mathcal{H}_1\otimes\dots\otimes\mathcal{H}_{i-1}$,
and every $\ket{\Psi_R}\in\mathcal{H}_{i+1}\otimes\dots\otimes\mathcal{H}_{j-1}$
there exist a set $\{\ket{\Theta_L[l]}\}_l$ of vectors in
$\mathcal{H}_{\mathcal{S}}\otimes\mathcal{H}_1\otimes\dots\otimes\mathcal{H}_{i-1}$
and a set $\{\ket{\Theta_R[l]}\}_l$ of vectors in
$\mathcal{H}_{i+1}\otimes\dots\otimes\mathcal{H}_{j-1}$ such that
\begin{equation*}
  M^j_{m_j}(\ket{\Psi_L}\otimes\ket{\Phi^i[m_i]}\otimes\ket{\Psi_R})
  =\sum_l \ket{\Theta_L[l]}\otimes \ket{\Phi^i[m_i]}\otimes \ket{\Theta_R[l]}.
\end{equation*}

For any $i,j=1,\dots,n$ with $i<j$, we say that
\emph{the final states of the apparatus $\mathcal{A}_i$ are confirmed
before the measurement by the apparatus $\mathcal{A}_j$}
if the final states of the apparatus $\mathcal{A}_i$ are
unchanged after the measurement by the apparatus $\mathcal{A}_k$ for all $k\ge j$.
\qed
\end{definition}

\begin{definition}[Confirming point for the final states of apparatus]
\label{def:confirming-point-apparatus}
Let $i\in\{1,\dots,n\}$.
The notion of the
\emph{confirming point for the final states of the apparatus $\mathcal{A}_i$}
is defined as follows:
Let $V_i$ be the set of all $j\in\{i+1,\dots,n\}$
such that the final states of the apparatus $\mathcal{A}_i$ are confirmed
before the measurement by the apparatus $\mathcal{A}_j$.
If $V_i\neq\emptyset$, then the
\emph{confirming point for the final states of the apparatus $\mathcal{A}_i$}
is defined as the apparatus $\mathcal{A}_{j-1}$ with $j:=\min V_i$.
If $V_i=\emptyset$, then the
\emph{confirming point for the final states of the apparatus $\mathcal{A}_i$}
is defined as the apparatus $\mathcal{A}_{n}$.
Thus, the confirming point for the final states of the apparatus $\mathcal{A}_n$
is the apparatus $\mathcal{A}_{n}$ itself.
\qed
\end{definition}

It is then easy to see that
the notion of the confirming point for the final states of an apparatus
can be equivalently
characterized as follows:
Let $i\in\{1,\dots,n\}$, and let $C_i$ be the set of all $k\in\{i+1,\dots,n\}$ such that
the final states of the apparatus $\mathcal{A}_i$ are \emph{not} unchanged
after the measurement by the apparatus $\mathcal{A}_k$.
If $C_i\neq\emptyset$,
then the confirming point for the final states of the apparatus $\mathcal{A}_i$
is the apparatus $\mathcal{A}_{l}$ with $l:=\max C_i$.
If $C_i=\emptyset$,
then the confirming point for the final states of the apparatus $\mathcal{A}_i$
is the apparatus $\mathcal{A}_{i}$ itself.

Now, let us consider the case where there are only two apparatuses $\mathcal{A}_1$ and
$\mathcal{A}_2$ with $n=2$
and the final states of the apparatus $\mathcal{A}_1$ are not
unchanged after the measurement by the apparatus $\mathcal{A}_2$.
In this case,
we cannot think the measurement $\mathcal{M}^1$ performed by the apparatus $\mathcal{A}_1$
to be completed immediately after the measurement $\mathcal{M}^1$ itself,
since the state of the apparatus $\mathcal{A}_1$ is disturbed
by the subsequent measurement $\mathcal{M}^2$ performed by the apparatus $\mathcal{A}_2$.
However,
it follows from \eqref{eq:Ui-TSCP-1tok} that
\begin{equation*}%
\begin{split}
  U_{\mathrm{Total}}^{1,2}(\ket{\Psi}\otimes\ket{\Phi^1_{\mathrm{init}}}\otimes\ket{\Phi^2_{\mathrm{init}}})
  =\sum_{(m_1,m_2)\in\Omega_1\times\Omega_2}
  (M^{1,2}_{m_1,m_2}\ket{\Psi})\otimes\ket{\Phi^1[m_1]}\otimes\ket{\Phi^2[m_2]}
\end{split}
\end{equation*}
for every $\ket{\Psi}\in\mathcal{H}_\mathcal{S}$.
This implies that the combination of the measurement $\mathcal{M}^1$ performed by the apparatus $\mathcal{A}_1$ with its subsequent measurement $\mathcal{M}^2$ performed by the apparatus $\mathcal{A}_2$ can be regarded as a \emph{single measurement},
which is described by the measurement operators
$\{M^{1,2}_{m_1,m_2}\}_{(m_1,m_2)\in\Omega_1\times\Omega_2}$
and whose all possible outcomes form the set $\Omega_1\times\Omega_2$.
Thus,
we can say that \emph{the measurement $\mathcal{M}^1$ performed by the apparatus $\mathcal{A}_1$ is really completed at the instant when the measurement $\mathcal{M}^2$ performed by the apparatus $\mathcal{A}_2$ is completed}.
The generalization of this idea to the general case
where there are $n$ apparatuses $\mathcal{A}_1,\dots,\mathcal{A}_n$ with an arbitrary positive integer $n$ 
is the notion of
the confirming point for the final states of the apparatus $\mathcal{A}_i$ with
an arbitrary integer $i\in\{1,\dots,n\}$,
introduced in Definition~\ref{def:confirming-point-apparatus} above.
We think that
\emph{the measurement by each apparatus $\mathcal{A}_i$ is really completed
at the confirming point for the final states of the apparatus $\mathcal{A}_i$ itself}.

Now,
to motivate Postulate~\ref{CF} below, we first prove the following proposition.

\begin{proposition}\label{prop:TSCP}
Without assuming  Postulate~\ref{POT},
based on \eqref{eq:Ui-TSCP} and \eqref{eq:Ui-TSCP-MO}
the following~(i) and (ii) hold:
\begin{enumerate}
\item
Let $i\in\{1,\dots,n\}$.
For every $\ket{\Psi}\in\mathcal{H}_{\mathcal{S}}$,
if the state of the system $\mathcal{S}$ is $\ket{\Psi}$
immediately after the measurement $\mathcal{M}^n$
and the states of the system $\mathcal{S}$ are confirmed
before the measurement by the apparatus $\mathcal{A}_i$,
then the state of the system $\mathcal{S}$ is $\ket{\Psi}$
immediately before the measurement $\mathcal{M}^i$.
\item
Let $i,j\in\{1,\dots,n\}$ with $i<j$.
For every $m_i\in\Omega_i$,
if the state of the apparatus $\mathcal{A}_i$ is $\ket{\Phi^i[m_i]}$
immediately after the measurement $\mathcal{M}^n$
and the final states of the apparatus $\mathcal{A}_i$ are confirmed
before the measurement by the apparatus $\mathcal{A}_j$,
then the state of the apparatus $\mathcal{A}_i$ is $\ket{\Phi^i[m_i]}$
immediately before the measurement $\mathcal{M}^j$.
\end{enumerate}
\end{proposition}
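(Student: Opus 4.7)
The plan is to prove both claims by finite backward induction on the measurement index, exploiting a structural consequence of the unchangedness conditions in Definitions~\ref{def:unchanged-state} and \ref{def:unchanged-apparatus}.

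The key structural lemma, proved by a direct linearity argument applied to the defining equation $M^k_{m_k}(\ket{\Psi}\otimes\ket{\Psi_R})=\ket{\Psi}\otimes\ket{\Theta_R}$, is the following. If the states of $\mathcal{S}$ is unchanged after the measurement by $\mathcal{A}_k$, then the restriction of $M^k_{m_k}$ to $\mathcal{H}_\mathcal{S}\otimes\mathcal{H}_1\otimes\dots\otimes\mathcal{H}_{k-1}$ factors as $I_0\otimes K^k_{m_k}$ for some operator $K^k_{m_k}$ on $\mathcal{H}_1\otimes\dots\otimes\mathcal{H}_{k-1}$. Testing the defining equation on a linear combination $\alpha\ket{\Psi_1}+\beta\ket{\Psi_2}$ of distinct $\mathcal{S}$-states and comparing with the result for each summand forces $\ket{\Theta_R}$ to be independent of the $\mathcal{S}$-factor, yielding this factorization. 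Analogously, for part~(ii), the unchangedness condition on the final states of $\mathcal{A}_i$ implies that for each $m_i$ the subspace $\mathcal{V}_{m_i}$ of $\mathcal{H}_\mathcal{S}\otimes\mathcal{H}_1\otimes\dots\otimes\mathcal{H}_{k-1}$ consisting of states with $\mathcal{A}_i$-factor equal to $\ket{\Phi^i[m_i]}$ is mapped into itself by $M^k_{m_k}$; because the subspaces $\{\mathcal{V}_{m_i}\}_{m_i\in\Omega_i}$ are pairwise orthogonal (by $\braket{\Phi^i[m_i]}{\Phi^i[m_i']}=\delta_{m_i,m_i'}$) and decompose the whole space, $M^k_{m_k}$ in fact commutes with the orthogonal projector $Q^i_{m_i}$ onto each $\mathcal{V}_{m_i}$.

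I then plug this into the formula~\eqref{eq:Ui-TSCP} for $U_k$. For~(i), a product vector $\ket{\Psi}\otimes\ket{\Xi}\otimes\ket{\Phi^k_{\mathrm{init}}}$ is sent to $\ket{\Psi}\otimes\bigl(\sum_{m_k}(K^k_{m_k}\ket{\Xi})\otimes\ket{\Phi^k[m_k]}\bigr)$, so $U_k$ preserves the $\mathcal{S}$-factor on such inputs. By unitarity, hence injectivity, of $U_k$, the converse also holds: if the output has $\mathcal{S}$-factor $\ket{\Psi}$, the input must as well; and since $\mathcal{A}_k$ sits in $\ket{\Phi^k_{\mathrm{init}}}$ immediately before $\mathcal{M}^k$, this input is exactly the state of the total system at that moment. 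For~(ii), the commutation of $M^k_{m_k}$ with $Q^i_{m_i}$ lifts, via the explicit form of $U_k$, to commutation of $U_k$ with the corresponding projector on $\mathcal{H}_\mathcal{S}\otimes\mathcal{H}_1\otimes\dots\otimes\mathcal{H}_k$ on vectors with $\mathcal{A}_k$ in its initial state, and the same injectivity argument gives the converse direction.

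The induction is then immediate. The hypothesis of~(i) that the states of $\mathcal{S}$ is confirmed before $\mathcal{M}^i$ is precisely that $\mathcal{S}$ is unchanged after $\mathcal{M}^k$ for every $k\ge i$. Starting from the given product form of the state immediately after $\mathcal{M}^n$ and applying the preservation step for $k=n,n-1,\dots,i$ in turn, the $\mathcal{S}$-factor $\ket{\Psi}$ is carried backward through each measurement $\mathcal{M}^k$, delivering the conclusion that the state of $\mathcal{S}$ is $\ket{\Psi}$ immediately before $\mathcal{M}^i$. The induction for~(ii) is identical, with $j$ in place of $i$ as the stopping index and the $\mathcal{A}_i$-factor $\ket{\Phi^i[m_i]}$ playing the role of the $\mathcal{S}$-factor. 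The only real obstacle is the structural lemma extracting the tensor-factorization in~(i) and the projector-commutation in~(ii) from the bare unchangedness conditions; once this is in place, the backward induction and the unitarity of the operators $U_k$ reduce the rest to a routine check.
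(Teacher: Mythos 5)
Your proof is correct and follows essentially the same route as the paper's: a backward induction over $k=n,\dots,i$ (resp.\ $k=n,\dots,j$), where each step decomposes the state immediately before $\mathcal{M}^k$ into orthogonal sectors (indexed by an orthonormal basis of $\mathcal{H}_\mathcal{S}$ for (i), by the records $\ket{\Phi^i[m]}$ for (ii)), observes that the unchangedness condition makes $M^k_{m_k}$, and hence $U_k$, preserve each sector, and uses injectivity of $U_k$ (equivalently, the completeness relation, which is what the paper invokes) to conclude that the unwanted sectors of the input must vanish. Your preliminary structural lemma (the factorization $M^k_{m_k}=I_0\otimes K^k_{m_k}$ and the commutation with the projectors $Q^i_{m_i}$) is correct but slightly stronger than needed: the paper applies Definitions~\ref{def:unchanged-state} and \ref{def:unchanged-apparatus} directly to each sector component of the expansion \eqref{eq:Ui-TSCP-Prop1-state} (resp.\ \eqref{eq:Ui-TSCP-Prop-state}) without first establishing that $\ket{\Theta_R}$ is independent of the $\mathcal{S}$-factor.
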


\begin{proof}
(i) Let $i\in\{1,\dots,n\}$, and let $\ket{\Psi}\in\mathcal{H}_{\mathcal{S}}$.
Suppose that the state of the system $\mathcal{S}$ is $\ket{\Psi}$
immediately after the measurement $\mathcal{M}^n$
and the states of the system $\mathcal{S}$ are confirmed
before the measurement by the apparatus $\mathcal{A}_i$.
Then, in order to prove Proposition~\ref{prop:TSCP}~(i),
it is sufficient to show that, for every $k=i,\dots,n$,
if the state of the system $\mathcal{S}$ is $\ket{\Psi}$
immediately after the measurement $\mathcal{M}^{k}$, then
the state of the system $\mathcal{S}$ is $\ket{\Psi}$
immediately before the measurement $\mathcal{M}^{k}$.

Now, for an arbitrary $k=i,\dots,n$, assume that
the state of the system $\mathcal{S}$ is $\ket{\Psi}$
immediately after the measurement $\mathcal{M}^{k}$.
It follows from \eqref{eq:Ui-TSCP-1tok} that
the state $\ket{\Psi^{k-1}_{\mathrm{Total}}}$
of the total system $\mathcal{S}_{\mathrm{Total}}$
immediately before the measurement $\mathcal{M}^k$ has the form
\[
  \ket{\Psi^{k-1}_{\mathrm{Total}}}
  =\ket{\Psi^{k-1}}\otimes\ket{\Phi^{k}_{\mathrm{init}}}\otimes\dots\otimes\ket{\Phi^n_{\mathrm{init}}}
\]
for some state $\ket{\Psi^{k-1}}\in\mathcal{H}_\mathcal{S}\otimes\mathcal{H}_1\otimes\dots\otimes\mathcal{H}_{k-1}$.
Let $\{\ket{\Psi[l]}\}_l$ be an orthonormal basis of $\mathcal{H}_\mathcal{S}$ such that
$\ket{\Psi[1]}=\ket{\Psi}$.
Then the vector $\ket{\Psi^{k-1}}$ can be written in the following form:
\begin{equation}\label{eq:Ui-TSCP-Prop1-state}
  \ket{\Psi^{k-1}}
  =\sum_l \ket{\Psi[l]}\otimes \ket{\Theta_R[l]},
\end{equation}
where $\{\ket{\Theta_R[l]}\}_{l}$ is a set of vectors in
$\mathcal{H}_{1}\otimes\dots\otimes\mathcal{H}_{k-1}$.
For each $l$, we use $\ket{\Gamma[l]}$
to denote the vector
\[
  \ket{\Psi[l]}\otimes \ket{\Theta_R[l]}.
\]
Since the states of the system $\mathcal{S}$ are
unchanged after the measurement by the apparatus $\mathcal{A}_{k}$,
we have that
there exists a set $\{\ket{\Upsilon_R[l,m_k]}\}_{l,m_k}$ of vectors in
$\mathcal{H}_{1}\otimes\dots\otimes\mathcal{H}_{k-1}$ such that,
for every $l$ and every $m_{k}\in\Omega_{k}$,
\begin{equation}\label{eq:TSCP-prop1-Mkmk}
  M^{k}_{m_{k}}\ket{\Gamma[l]}
  =\ket{\Psi[l]}\otimes \ket{\Upsilon_R[l,m_k]}.
\end{equation}
Thus it follows from \eqref{eq:Ui-TSCP} that the state $\ket{\Psi^{k}_{\mathrm{Total}}}$
of the total system $\mathcal{S}_{\mathrm{Total}}$
immediately after the measurement $\mathcal{M}^k$
has the following form:
\begin{equation}\label{eq:TSCP-prop1-Pkt}
\begin{split}
  \ket{\Psi^{k}_{\mathrm{Total}}}
  &=(U_k(\ket{\Psi^{k-1}}\otimes\ket{\Phi^k_{\mathrm{init}}}))
  \otimes\ket{\Phi^{k+1}_{\mathrm{init}}}\otimes\dots\otimes\ket{\Phi^n_{\mathrm{init}}} \\
  &=\sum_{m_k\in\Omega_k}(M^k_{m_k}\ket{\Psi^{k-1}})\otimes\ket{\Phi^k[m_k]}
  \otimes\ket{\Phi^{k+1}_{\mathrm{init}}}\otimes\dots\otimes\ket{\Phi^n_{\mathrm{init}}} \\
  &=\sum_{l}\sum_{m_k\in\Omega_k}(M^k_{m_k}\ket{\Gamma[l]})\otimes\ket{\Phi^k[m_k]}
  \otimes\ket{\Phi^{k+1}_{\mathrm{init}}}\otimes\dots\otimes\ket{\Phi^n_{\mathrm{init}}}
\end{split}
\end{equation}
Since the state of the system $\mathcal{S}$ is $\ket{\Psi}$
immediately after the measurement $\mathcal{M}^{k}$ by the assumption,
it follows from \eqref{eq:TSCP-prop1-Mkmk} that $M^{k}_{m_{k}}\ket{\Gamma[l]}=0$
for every $m_k\in\Omega_k$ and every $l\ge 2$.
Thus, using the completeness equation
\[
  \sum_{m_k\in\Omega_k} {M^{k}_{m_{k}}}^\dag M^{k}_{m_{k}}=I_{0,1,\dots,k-1},
\]
we have that $\ket{\Gamma[l]}=0$ for every $l\ge 2$.
It follows from \eqref{eq:Ui-TSCP-Prop1-state} that
the state $\ket{\Psi^{k-1}_{\mathrm{Total}}}$
of the total system $\mathcal{S}_{\mathrm{Total}}$
immediately before the measurement $\mathcal{M}^k$ has the form
\[
  \ket{\Psi^{k-1}_{\mathrm{Total}}}
  =\ket{\Psi}\otimes \ket{\Theta_R[1]}
  \otimes\ket{\Phi^{k}_{\mathrm{init}}}\otimes\dots\otimes\ket{\Phi^n_{\mathrm{init}}}.
\]
Hence, the state of the system $\mathcal{S}$ is $\ket{\Psi}$
immediately before the measurement $\mathcal{M}^{k}$, as desired.

(ii) Let $i,j\in\{1,\dots,n\}$ with $i<j$, and let $m_i\in\Omega_i$.
Suppose that the state of the apparatus $\mathcal{A}_i$ is $\ket{\Phi^i[m_i]}$
immediately after the measurement $\mathcal{M}^n$
and the final states of the apparatus $\mathcal{A}_i$ are confirmed
before the measurement by the apparatus $\mathcal{A}_j$.
Then, in order to prove Proposition~\ref{prop:TSCP}~(ii),
it is sufficient to show that, for every $k=j,\dots,n$,
if the state of the apparatus $\mathcal{A}_i$ is $\ket{\Phi^i[m_i]}$
immediately after the measurement $\mathcal{M}^{k}$, then
the state of the apparatus $\mathcal{A}_i$ is $\ket{\Phi^i[m_i]}$
immediately before the measurement $\mathcal{M}^{k}$.

Now, for an arbitrary $k=j,\dots,n$, assume that
the state of the apparatus $\mathcal{A}_i$ is $\ket{\Phi^i[m_i]}$
immediately after the measurement $\mathcal{M}^{k}$.
It follows from \eqref{eq:Ui-TSCP-1tok} that
the state $\ket{\Psi^{k-1}_{\mathrm{Total}}}$
of the total system $\mathcal{S}_{\mathrm{Total}}$
immediately before the measurement $\mathcal{M}^k$ has the form
\[
  \ket{\Psi^{k-1}_{\mathrm{Total}}}
  =\ket{\Psi}\otimes\ket{\Phi^{k}_{\mathrm{init}}}\otimes\dots\otimes\ket{\Phi^n_{\mathrm{init}}}
\]
for some state $\ket{\Psi}\in\mathcal{H}_\mathcal{S}\otimes\mathcal{H}_1\otimes\dots\otimes\mathcal{H}_{k-1}$.
Then
the vector $\ket{\Psi}$ can be written in the following form:
\begin{equation}\label{eq:Ui-TSCP-Prop-state}
  \ket{\Psi}
  =\sum_{m\in\Omega_i}\sum_l
  \ket{\Theta_L[l]}\otimes \ket{\Phi^i[m]}\otimes \ket{\Theta_R[m,l]},
\end{equation}
where $\{\ket{\Theta_L[l]}\}_l$ is a set of vectors in
$\mathcal{H}_{\mathcal{S}}\otimes\mathcal{H}_1\otimes\dots\otimes\mathcal{H}_{i-1}$
and $\{\ket{\Theta_R[m,l]}\}_{m,l}$ is a set of vectors in
$\mathcal{H}_{i+1}\otimes\dots\otimes\mathcal{H}_{k-1}$.
For each $m\in\Omega_i$, we use $\ket{\Gamma[m]}$
to denote the vector
\[
  \sum_l\ket{\Theta_L[l]}\otimes \ket{\Phi^i[m]}\otimes \ket{\Theta_R[m,l]}.
\]
Since the final states of the apparatus $\mathcal{A}_i$ are
unchanged after the measurement by the apparatus $\mathcal{A}_{k}$,
we have that
there exist a set $\{\ket{\Upsilon_L[m,m_{k},h]}\}_{m,m_k,h}$ of vectors in
$\mathcal{H}_{\mathcal{S}}\otimes\mathcal{H}_1\otimes\dots\otimes\mathcal{H}_{i-1}$
and a set $\{\ket{\Upsilon_L[m,m_{k},h]}\}_{m,m_k,h}$ of vectors in
$\mathcal{H}_{i+1}\otimes\dots\otimes\mathcal{H}_{k-1}$ such that,
for every $m\in\Omega_i$ and every $m_{k}\in\Omega_{k}$,
\begin{equation}\label{eq:TSCP-prop2-Mkmk}
  M^{k}_{m_{k}}\ket{\Gamma[m]}
  =\sum_h \ket{\Upsilon_L[m,m_{k},h]}\otimes \ket{\Phi^i[m]}
  \otimes \ket{\Upsilon_R[m,m_{k},h]}.
\end{equation}
Thus it follows from \eqref{eq:Ui-TSCP} that the state $\ket{\Psi^{k}_{\mathrm{Total}}}$
of the total system $\mathcal{S}_{\mathrm{Total}}$
immediately after the measurement $\mathcal{M}^k$
has the following form:
\begin{equation}\label{eq:TSCP-prop2-Pkt}
\begin{split}
  \ket{\Psi^{k}_{\mathrm{Total}}}
  &=(U_k(\ket{\Psi}\otimes\ket{\Phi^k_{\mathrm{init}}}))
  \otimes\ket{\Phi^{k+1}_{\mathrm{init}}}\otimes\dots\otimes\ket{\Phi^n_{\mathrm{init}}} \\
  &=\sum_{m_k\in\Omega_k}(M^k_{m_k}\ket{\Psi})\otimes\ket{\Phi^k[m_k]}
  \otimes\ket{\Phi^{k+1}_{\mathrm{init}}}\otimes\dots\otimes\ket{\Phi^n_{\mathrm{init}}} \\
  &=\sum_{m\in\Omega_i}\sum_{m_k\in\Omega_k}(M^k_{m_k}\ket{\Gamma[m]})\otimes\ket{\Phi^k[m_k]}
  \otimes\ket{\Phi^{k+1}_{\mathrm{init}}}\otimes\dots\otimes\ket{\Phi^n_{\mathrm{init}}}.
\end{split}
\end{equation}
Since the state of the apparatus $\mathcal{A}_i$ is $\ket{\Phi^i[m_i]}$
immediately after the measurement $\mathcal{M}^{k}$ by the assumption,
it follows from \eqref{eq:TSCP-prop2-Mkmk} that $M^{k}_{m_{k}}\ket{\Gamma[m]}=0$
for every $m_k\in\Omega_k$ and every $m\in\Omega_i$ with $m\neq m_i$.
Thus, using the completeness equation
\[
  \sum_{m_k\in\Omega_k} {M^{k}_{m_{k}}}^\dag M^{k}_{m_{k}}=I_{0,1,\dots,k-1},
\]
we have that $\ket{\Gamma[m]}=0$ for every $m\in\Omega_i$ with $m\neq m_i$.
It follows from \eqref{eq:Ui-TSCP-Prop-state} that
the state $\ket{\Psi^{k-1}_{\mathrm{Total}}}$
of the total system $\mathcal{S}_{\mathrm{Total}}$
immediately before the measurement $\mathcal{M}^k$ has the form
\[
  \ket{\Psi^{k-1}_{\mathrm{Total}}}
  =\sum_l \ket{\Theta_L[l]}\otimes \ket{\Phi^i[m_i]}\otimes \ket{\Theta_R[m_i,l]}
  \otimes\ket{\Phi^{k}_{\mathrm{init}}}\otimes\dots\otimes\ket{\Phi^n_{\mathrm{init}}}.
\]
Hence, the state of the apparatus $\mathcal{A}_i$ is $\ket{\Phi^i[m_i]}$
immediately before the measurement $\mathcal{M}^{k}$, as desired.
\end{proof}

\subsection{Postulates for determining the states of apparatuses in early stages}

In the preceding subsection, we have made an argument based on Postulate~\ref{evolution},
the unitary time-evolution,
and not based on Postulate~\ref{POT}, the principle of typicality.
The application of Postulate~\ref{POT}
results in
the \emph{non-unitary} time-evolution
in the measurement process,
as we saw in Section~\ref{MWI}.
In the rest of this paper,
we regard the states $\ket{\Psi^1_{\mathrm{Total}}},\ket{\Psi^2_{\mathrm{Total}}},\dots,\ket{\Psi^n_{\mathrm{Total}}}$
of the total system $\mathcal{S}_{\mathrm{Total}}$
immediately after the measurements
$\mathcal{M}^{1},\mathcal{M}^{2},\dots,\mathcal{M}^{n}$, respectively,
as \emph{virtual} states
which provide a useful means for the calculation
in our framework based on
the principle of typicality,
whereas
the initial state
$\ket{\Psi^0_{\mathrm{init}}}\otimes\ket{\Phi^1_{\mathrm{init}}}\otimes\dots\otimes\ket{\Phi^n_{\mathrm{init}}}$
of the total system $\mathcal{S}_{\mathrm{Total}}$ is treated
as a \emph{real} state.
Actually, the state $\ket{\Psi^n_{\mathrm{Total}}}$
of the total system $\mathcal{S}_{\mathrm{Total}}$
at the instant when the whole measurement consisting of $\mathcal{M}^{1},\mathcal{M}^{2},\dots,\mathcal{M}^{n}$ is completed has been regarded
as a virtual state in our framework based on
the principle of typicality,
developed in Section~\ref{MWI}.
However,
suggested by Proposition~\ref{prop:TSCP},
we propose the following postulate for \emph{real} states.

\begin{postulate}\label{CF}
\
\begin{enumerate}
\item
Let $i\in\{1,\dots,n\}$.
For every $\ket{\Psi}\in\mathcal{H}_{\mathcal{S}}$,
if the state of the system $\mathcal{S}$ is $\ket{\Psi}$
immediately after the measurement $\mathcal{M}^n$
and the states of the system $\mathcal{S}$ are confirmed
before the measurement by the apparatus $\mathcal{A}_i$,
then the state of the system $\mathcal{S}$ is $\ket{\Psi}$
immediately before the measurement $\mathcal{M}^i$.
\item
Let $i,j\in\{1,\dots,n\}$ with $i<j$.
For every $m_i\in\Omega_i$,
if the state of the apparatus $\mathcal{A}_i$ is $\ket{\Phi^i[m_i]}$
immediately after the measurement $\mathcal{M}^n$
and the final states of the apparatus $\mathcal{A}_i$ are confirmed
before the measurement by the apparatus $\mathcal{A}_j$,
then the state of the apparatus $\mathcal{A}_i$ is $\ket{\Phi^i[m_i]}$
immediately before the measurement $\mathcal{M}^j$.
\qed
\end{enumerate}
\end{postulate}

We remark the following:
The statement of Postulate~\ref{CF} appears to be the same
as that of Proposition~\ref{prop:TSCP} on the surface.
However, Proposition~\ref{prop:TSCP} assumes the unitary time-evolution
in the total measurement process
consisting of the measurements $\mathcal{M}^{1},\mathcal{M}^{2},\dots,\mathcal{M}^{n}$,
as we saw above.
In contrast, Postulate~\ref{CF} does not assume the unitary time-evolution
in the total measurement process.
Postulate~\ref{CF} plays a major role in the analysis of Wigner's friend
in the subsequent sections.

Next, we consider the problem to determine
the whole state of the total system $\mathcal{S}_{\mathrm{Total}}$,
consisting of the system $\mathcal{S}$
and the apparatuses $\mathcal{A}_1,\dots,\mathcal{A}_n$,
immediately after the measurement $\mathcal{M}^{n-1}$,
given the whole state of the total system $\mathcal{S}_{\mathrm{Total}}$
immediately after the measurement $\mathcal{M}^n$.

Let $k\in\N$,
and let $i_1,\dots,i_k$ be integers such that $1\le i_1 <\dots<i_k\le n$.
Let $\ket{\Phi^{i_1}},\dots,\ket{\Phi^{i_k}}$ be
states of the apparatuses $\mathcal{A}_{i_1},\dots,\mathcal{A}_{i_k}$, respectively.
In the case of $k\ge 1$,
we define $$P(\ket{\Phi^{i_1}},\dots,\ket{\Phi^{i_k}})$$ as a
projector
onto the subspace of the state space of the total system $\mathcal{S}_{\mathrm{Total}}$
where the states of the apparatuses $\mathcal{A}_{i_1},\dots,\mathcal{A}_{i_k}$ are
determined as $\ket{\Phi^{i_1}},\dots,\ket{\Phi^{i_k}}$, respectively.
Formally, in the case of $k\ge 1$, $P(\ket{\Phi^{i_1}},\dots,\ket{\Phi^{i_k}})$ is defined as
a
projector
acting on the state space
$\mathcal{H}_\mathcal{S}\otimes\overline{\mathcal{H}}_1\otimes\dots\otimes\overline{\mathcal{H}}_n$
of the total system $\mathcal{S}_{\mathrm{Total}}$ of the form
\begin{align*}
  I_{0,1,\dots,i_1-1}&\otimes\product{\Phi^{i_1}}{\Phi^{i_1}}\otimes
  I_{i_1+1,\dots,i_2-1}\otimes\product{\Phi^{i_2}}{\Phi^{i_2}}
  \otimes I_{i_2+1,\dots,i_3-1}\otimes\dotsm\dotsm \\
  &\dotsm\dotsm
  \otimes I_{i_{k-2}+1,\dots,i_{k-1}-1}\otimes\product{\Phi^{i_{k-1}}}{\Phi^{i_{k-1}}}
  \otimes I_{i_{k-1}+1,\dots,i_k-1}\otimes\product{\Phi^{i_k}}{\Phi^{i_k}}
  \otimes I_{i_k+1,\dots,n}.
\end{align*}
In the case of $k=0$, we define
$P(\ket{\Phi^{i_1}},\dots,\ket{\Phi^{i_k}})$, i.e., $P()$, as the identity operator
$I_{0,1,\dots,n}$ acting on the whole state space
$\mathcal{H}_\mathcal{S}\otimes\overline{\mathcal{H}}_1\otimes\dots\otimes\overline{\mathcal{H}}_n$
of the total system $\mathcal{S}_{\mathrm{Total}}$.

\begin{postulate}\label{Recursive Use}
Suppose that the measurement operators $\{M^i_{m_i}\}_{m_i\in\Omega_i}$ form a PVM
in $\mathcal{H}_{\mathcal{S}}\otimes\overline{\mathcal{H}}_1\otimes\dots\otimes\overline{\mathcal{H}}_{i-1}$
for every $i=1,\dots,n$.
Suppose that the state of the total system $\mathcal{S}_{\mathrm{Total}}$,
consisting of the system $\mathcal{S}$
and the apparatuses $\mathcal{A}_1,\dots,\mathcal{A}_n$,
immediately after the measurement $\mathcal{M}^n$ is
\[
  \ket{\Psi}\otimes\ket{\Phi^1[m_1]}\otimes\dots\otimes\ket{\Phi^n[m_n]},
\]
where $m_1\in\Omega_1,\dots,m_n\in\Omega_n$ and
$\ket{\Psi}\in\mathcal{H}_{\mathcal{S}}$ is a state of the system $\mathcal{S}$.
Let $D$ be the set of $i\in\{1,\dots,n-1\}$ such that
the final states of the apparatus $\mathcal{A}_i$ are
unchanged after the measurement by the apparatus $\mathcal{A}_n$.
We define a finite sequence $i_1,i_2,\dots,i_k\in\{1,2,\dots,n-1\}$ by the condition that
$D=\{i_1,i_2,\dots,i_k\}$ and $i_1<i_2<\dots <i_k$.
Then the state of
the total system $\mathcal{S}_{\mathrm{Total}}$
immediately after the measurement $\mathcal{M}^{n-1}$ is given by
\begin{equation}\label{eq:Postulate-RS-Proj}
  P(\ket{\Phi^{i_1}[m_{i_1}]},\dots,\ket{\Phi^{i_k}[m_{i_k}]})\ket{\Psi^{n-1}_{\mathrm{Total}}},
\end{equation}
up to the normalization factor.
Note that
this state is just $\ket{\Psi^{n-1}_{\mathrm{Total}}}$
in the case of $D=\emptyset$, i.e.,
in the case of
$k=0$.
\qed
\end{postulate}

Postulate~\ref{Recursive Use} only states the relation between
the state of the total system $\mathcal{S}_{\mathrm{Total}}$
immediately after the measurement $\mathcal{M}^{n}$ and
the state of the total system $\mathcal{S}_{\mathrm{Total}}$
immediately after the measurement $\mathcal{M}^{n-1}$.
However,
as we will see in
Section~\ref{subsec:the Wigner-Deutsch collaboration_Application of the principle of typicality},
we can use Postulate~\ref{Recursive Use} in a \emph{recursive manner}
to state the relation between
the state of the total system $\mathcal{S}_{\mathrm{Total}}$
immediately after the measurement $\mathcal{M}^{n}$ and
the state of the total system $\mathcal{S}_{\mathrm{Total}}$
immediately after the measurement $\mathcal{M}^{i}$ with $1\le i<n$.

In the rest of this section,
let us investigate immediate consequences of Postulate~\ref{CF} and Postulate~\ref{Recursive Use}, and thus investigate the validity of these postulates.
For that purpose, we
consider the normal case where,
for each $i=1,\dots,n-1$,
the apparatus $\mathcal{A}_i$ performs the measurement $\mathcal{M}^i$
only over the system $\mathcal{S}$
and not over any of the apparatuses $\mathcal{A}_{1},\dots,\mathcal{A}_{i-1}$.
Thus,
we assume that, for every $i=2,\dots,n$ and every $m_i\in\Omega_i$,
the measurement operator $M^i_{m_i}$ has the following form:
\[
  M^i_{m_i}=L^i_{m_i}\otimes I_{1,\dots,i-1},
\]
where $\{L^i_{m_i}\}_{m_i\in\Omega_i}$ are measurement operators acting on
$\mathcal{H}_{\mathcal{S}}$.
It is then easy to see that, for every $i=1,\dots,n-1$,
the final states of the apparatus $\mathcal{A}_i$ are confirmed
before the measurement by the apparatus $\mathcal{A}_{i+1}$.
Thus, for every $i=1,\dots,n-1$,
the confirming point for the final states of the apparatus $\mathcal{A}_i$
is the apparatus $\mathcal{A}_{i}$ itself.

Suppose that the state of the total system $\mathcal{S}_{\mathrm{Total}}$,
consisting of the system $\mathcal{S}$
and the apparatuses $\mathcal{A}_1,\dots,\mathcal{A}_n$,
immediately after the measurement $\mathcal{M}^n$ is
\[
  \ket{\Psi}\otimes\ket{\Phi^1[m_1]}\otimes\dots\otimes\ket{\Phi^n[m_n]},
\]
where $m_1\in\Omega_1,\dots,m_n\in\Omega_n$ and
$\ket{\Psi}\in\mathcal{H}_{\mathcal{S}}$ is a state of the system $\mathcal{S}$.
Then, according to Postulate~\ref{CF} we have that,
for every $i=1,\dots,n-1$,
the state of the apparatus $\mathcal{A}_i$ is $\ket{\Phi^i[m_i]}$
immediately before the measurement $\mathcal{M}^{i+1}$,
that is,
the state of the apparatus $\mathcal{A}_i$ is $\ket{\Phi^i[m_i]}$
immediately after the measurement $\mathcal{M}^{i}$.
This result is comprehensively used in our former works
implicitly
in order to make the operational refinements of quantum mechanics and quantum information theory
in our framework based on the principle of typicality~\cite{T17SCIS,T18arXiv,T25CCR}.

Suppose further that
the measurement operators $\{M^i_{m_i}\}_{m_i\in\Omega_i}$ form a PVM
in $\mathcal{H}_{\mathcal{S}}\otimes\overline{\mathcal{H}}_1\otimes\dots\otimes\overline{\mathcal{H}}_{i-1}$
for every $i=1,\dots,n$.
Then, since the final states of the apparatus $\mathcal{A}_i$ are
unchanged after the measurement by the apparatus $\mathcal{A}_n$
for every $i=1,\dots,n-1$,
according to Postulate~\ref{Recursive Use} we have that
the state of the total system $\mathcal{S}_{\mathrm{Total}}$
immediately after the measurement $\mathcal{M}^{n-1}$ is given by
\begin{equation}\label{eq:Postulate-RS-Proj-All}
  P(\ket{\Phi^{1}[m_{1}]},\dots,\ket{\Phi^{n-1}[m_{n-1}]})\ket{\Psi^{n-1}_{\mathrm{Total}}},
\end{equation}
up to the normalization factor.
This result
is also comprehensively used in our former works implicitly
in order to make the operational refinements of quantum mechanics and quantum information theory
in our framework based on the principle of typicality~\cite{T17SCIS,T18arXiv,T25CCR}.
In particular,
the result~\eqref{eq:Postulate-RS-Proj-All}
implicitly played a crucial role in
the derivations of Postulate~8 of Tadaki~\cite{T18arXiv},
which is a refined rule of the Born rule for mixed states,
from the principle of typicality in several scenarios of the setting of measurements,
as described in Sections~11 and 12 of Tadaki~\cite{T18arXiv}.
Thus, this effectiveness of the result~\eqref{eq:Postulate-RS-Proj-All}
supports general use of Postulate~\ref{Recursive Use}
in a specific situation such as Wigner's friend and Deutsch's thought experiment.

The reason why the measurement operators
$\{M^1_{m_1}\}_{m_1\in\Omega_1},\dots,\{M^n_{m_n}\}_{m_n\in\Omega_n}$
must be
restricted to PVMs in Postulate~\ref{Recursive Use}
is the following:
If we allow $\{M^1_{m_1}\}_{m_1\in\Omega_1},\dots,\{M^n_{m_n}\}_{m_n\in\Omega_n}$
to be general measurement operators
then Postulate~\ref{Recursive Use} becomes too strong
and results in a contradiction.
We present the details of this argument in
Remark~\ref{rem:Validity2-PRU} in Section~\ref{subsec:the Wigner-Deutsch collaboration_Application of the principle of typicality}.
For another justification of Postulate~\ref{Recursive Use},
see Remark~\ref{rem:Validity1-PRU}.

In the subsequent sections,
we apply Postulates~\ref{CF} and \ref{Recursive Use} to
several examples of
a general case where
the apparatus $\mathcal{A}_i$ performs the measurement $\mathcal{M}^i$
over the composite system consisting of the system $\mathcal{S}$
and the apparatuses $\mathcal{A}_{1},\dots,\mathcal{A}_{i-1}$
for each $i=1,\dots,n$.

\section{Wigner's friend}
\label{sec:Wigner's friend}

In this section,
we review the Wigner's friend paradox~\cite{W61}
in the terminology of the conventional quantum mechanics.

Consider a single qubit system $\mathcal{S}$ with state space
$\mathcal{H}_\mathcal{S}$.
An observer $\mathcal{F}$ performs a measurement
$\mathcal{M}^\mathcal{F}$
described
by the measurement operators
$\{M^{\mathcal{F}}_0,M^{\mathcal{F}}_1\}$
over the system $\mathcal{S}$
(according to Postulate~\ref{Born-rule}),
where
\begin{equation}\label{eq:WF-MF0=00_MF1=11}
  M^{\mathcal{F}}_0:=\product{0}{0}\quad\text{and}\quad M^{\mathcal{F}}_1:=\product{1}{1},
\end{equation}
and $\ket{0}$ and $\ket{1}$
form
an orthonormal basis of
the state space $\mathcal{H}_\mathcal{S}$ of the
system $\mathcal{S}$.

Hereafter, we assume that
the observer $\mathcal{F}$ itself is
a quantum system with state space $\mathcal{H}_\mathcal{F}$.
Then we consider an observer $\mathcal{W}$ who is on the outside of
the composite system $\mathcal{S}+\mathcal{F}$
consisting of the system $\mathcal{S}$ and the observer $\mathcal{F}$. 
The observer $\mathcal{W}$ can also be regarded as
an
environment for
the composite system $\mathcal{S}+\mathcal{F}$.
Then, according to \eqref{single_measurement},
the measurement process of
the measurement
$\mathcal{M}^\mathcal{F}$
is described by a unitary operator $U_\mathcal{F}$
acting on
$\mathcal{H}_\mathcal{S}\otimes\mathcal{H}_\mathcal{F}$
such that
\begin{equation}\label{eq:U-MF-S}
  U_\mathcal{F}(\ket{\Psi}\otimes\ket{\Phi^{\mathcal{F}}_{\mathrm{init}}})
  =\sum_{k=0,1}(M^{\mathcal{F}}_k\ket{\Psi})\otimes\ket{\Phi^{\mathcal{F}}[k]}
\end{equation}
for every $\ket{\Psi}\in\mathcal{H}_\mathcal{S}$.
The unitary operator
$U_\mathcal{F}$ describes the interaction between the system $\mathcal{S}$ and
the observer $\mathcal{F}$ as a quantum system.
The vector $\ket{\Phi^{\mathcal{F}}_{\mathrm{init}}}\in\mathcal{H}_\mathcal{F}$ is
the initial state of the
observer
$\mathcal{F}$, and $\ket{\Phi^{\mathcal{F}}[k]}\in\mathcal{H}_\mathcal{F}$ is
a final state of the
observer
$\mathcal{F}$ for each $k=0,1$,
with $\braket{\Phi^{\mathcal{F}}[k]}{\Phi^{\mathcal{F}}[k']}=\delta_{k,k'}$.
For each $k=0,1$,
the state $\ket{\Phi^{\mathcal{F}}[k]}$ indicates that
\emph{the observer $\mathcal{F}$ records the value $k$}.
The unitary time-evolution \eqref{eq:U-MF-S} is the quantum mechanical description of
the measurement process of the measurement $\mathcal{M}^\mathcal{F}$
over
the composite system $\mathcal{S}+\mathcal{F}$,
from the point of view of the observer $\mathcal{W}$.

After the measurement $\mathcal{M}^\mathcal{F}$
performed
by $\mathcal{F}$ over the system $\mathcal{S}$,
the observer $\mathcal{W}$
performs a measurement $\mathcal{M}^\mathcal{W}$
described by the measurement operators
$\{M^{\mathcal{W}}_0,M^{\mathcal{W}}_1,M^{\mathcal{W}}_2\}$
over the system $\mathcal{F}$
(according to Postulate~\ref{Born-rule}),
where
\begin{equation}\label{eq:WF-MW_0_1_2}
\begin{split}
  M^{\mathcal{W}}_0&:=\product{\Phi^{\mathcal{F}}[0]}{\Phi^{\mathcal{F}}[0]},\\
  M^{\mathcal{W}}_1&:=\product{\Phi^{\mathcal{F}}[1]}{\Phi^{\mathcal{F}}[1]},\\
  M^{\mathcal{W}}_2&:=\sqrt{I_\mathcal{F}-{M^{\mathcal{W}}_0}^\dag M^{\mathcal{W}}_0-{M^{\mathcal{W}}_1}^\dag M^{\mathcal{W}}_1}
  =I_\mathcal{F}-M^{\mathcal{W}}_0-M^{\mathcal{W}}_1,
\end{split}
\end{equation}
and $I_\mathcal{F}$
denotes
the identity operator acting on $\mathcal{H}_\mathcal{F}$.

Now, let $c_0$ and $c_1$ be arbitrary two non-zero complex numbers such that $\abs{c_0}^2+\abs{c_1}^2=1$, and we assume that
the system $\mathcal{S}$ is initially in a state $\ket{+}$
in the above setting,
where $\ket{+}$ is defined by
\begin{equation}\label{eq:WF-k+=f1s2k0+k1}
  \ket{+}:=c_0\ket{0}+c_1\ket{1}.
\end{equation}
Then, we have the following two cases,
depending on how we treat the measurement $\mathcal{M}^\mathcal{F}$:

\paragraph{Case 1.}

Applying Postulate~\ref{Born-rule} to the measurement $\mathcal{M}^\mathcal{F}$,
either the following two possibilities (i) or (ii) occurs
immediately after the measurement $\mathcal{M}^\mathcal{F}$:
\begin{enumerate}
\item The system $\mathcal{S}$ and the observer $\mathcal{F}$ are in the states
  $\ket{0}$ and $\ket{\Phi^{\mathcal{F}}[0]}$, respectively.
  Therefore,
  the state of the composite system $\mathcal{S}+\mathcal{F}$
  is
  $\ket{0}\otimes\ket{\Phi^{\mathcal{F}}[0]}$.
\item The system $\mathcal{S}$ and the observer $\mathcal{F}$ are in the states
  $\ket{1}$ and $\ket{\Phi^{\mathcal{F}}[1]}$, respectively.
  Therefore,
  the state of the composite system
  $\mathcal{S}+\mathcal{F}$
  is
  $\ket{1}\otimes\ket{\Phi^{\mathcal{F}}[1]}$.
\end{enumerate}
Here, the possibilities (i) and (ii) occur with probabilities $\abs{c_0}^2$ and $\abs{c_1}^2$, respectively.

\paragraph{Case 2.}

According to \eqref{eq:U-MF-S},
the state of the composite system $\mathcal{S}+\mathcal{F}$
consisting of $\mathcal{S}$ and $\mathcal{F}$ is
\begin{equation}\label{eq:SFstateW}
  c_0\ket{0}\otimes\ket{\Phi^{\mathcal{F}}[0]}+c_1\ket{1}\otimes\ket{\Phi^{\mathcal{F}}[1]}
\end{equation}
immediately after the measurement $\mathcal{M}^\mathcal{F}$.

\bigskip

Thus,
the state~\eqref{eq:SFstateW} of the composite system $\mathcal{S}+\mathcal{F}$ in Case~2
is different from neither $\ket{0}\otimes\ket{\Phi^{\mathcal{F}}[0]}$ in the possibility~(i)
nor $\ket{1}\otimes\ket{\Phi^{\mathcal{F}}[1]}$ in the possibility~(ii) in Case~1,
immediately after the measurement $\mathcal{M}^\mathcal{F}$.
This inconsistency is the content of the paradox in the Wigner's friend paradox~\cite{W61}.
Note that, even if the composite system
$\mathcal{S}+\mathcal{F}$
is in the state \eqref{eq:SFstateW}
immediately
after the measurement $\mathcal{M}^\mathcal{F}$,
according to Postulate~\ref{Born-rule}
the subsequent measurement $\mathcal{M}^\mathcal{W}$ results in
one of the two possibilities (i) and (ii)
above
with probabilities $\abs{c_0}^2$ and $\abs{c_1}^2$, respectively,
for the
post-measurement
state of the composite system
$\mathcal{S}+\mathcal{F}$,
just as in Case~1.

\section{Analysis of Wigner's friend based on the principle of typicality}
\label{sec:Analysis of Wigner's friend}

In this section, we make an analysis of the Wigner's friend paradox,
which is described in the preceding section in the terminology of the conventional quantum mechanics,
in terms of our framework of quantum mechanics based on
Postulates~\ref{POT}, \ref{CF}, and \ref{Recursive Use},
together with Postulates~\ref{state_space}, \ref{composition}, and \ref{evolution}.
We will then see that the measurement $\mathcal{M}^\mathcal{F}$
settles
the composite system
$\mathcal{S}+\mathcal{F}$
down into one of the two possibilities (i) and (ii) in Case~1 above,
and not into the `undecided' state \eqref{eq:SFstateW} in Case~2.

To proceed this program,
first of all
\emph{we have to implement everything, i.e.,
the
two
measurements $\mathcal{M}^\mathcal{F}$ and $\mathcal{M}^\mathcal{W}$
in the setting of the Wigner's friend paradox,
by unitary time-evolution}.
The measurement process of the measurement
$\mathcal{M}^\mathcal{F}$
is already described by the unitary operator $U_\mathcal{F}$
given in \eqref{eq:U-MF-S}.
For applying the principle of typicality,
the observer $\mathcal{W}$
is regarded
as a quantum system with state space $\mathcal{H}_\mathcal{W}$.
Then, according to \eqref{single_measurement},
the measurement process of
the measurement
$\mathcal{M}^\mathcal{W}$
is described by a unitary operator $U_\mathcal{W}$
acting on
$\mathcal{H}_\mathcal{F}\otimes\mathcal{H}_\mathcal{W}$
such that
\begin{equation}\label{eq:U-MW-F}
  U_\mathcal{W}(\ket{\Psi}\otimes\ket{\Phi^{\mathcal{W}}_{\mathrm{init}}})
  =\sum_{l=0,1,2}(M^{\mathcal{W}}_l\ket{\Psi})\otimes\ket{\Phi^{\mathcal{W}}[l]}
\end{equation}
for every $\ket{\Psi}\in\mathcal{H}_\mathcal{F}$.
The unitary operator
$U_\mathcal{W}$ describes the interaction
between the observer $\mathcal{F}$ and the observer $\mathcal{W}$
where both the observers are regarded as a quantum system.
The vector $\ket{\Phi^{\mathcal{W}}_{\mathrm{init}}}\in\mathcal{H}_\mathcal{W}$ is
the initial state of the
observer
$\mathcal{W}$, and $\ket{\Phi^{\mathcal{W}}[l]}\in\mathcal{H}_\mathcal{W}$ is
a final state of the
observer
$\mathcal{W}$ for each $l=0,1,2$,
with $\braket{\Phi^{\mathcal{W}}[l]}{\Phi^{\mathcal{W}}[l']}=\delta_{l,l'}$.
For each $l=0,1$,
the state $\ket{\Phi^{\mathcal{W}}[l]}$ indicates that
\emph{the observer $\mathcal{W}$ knows that
the observer $\mathcal{F}$ records the value $l$}.

We denote the set $\{0,1\}\times\{0,1,2\}$ by $\Omega$,
and we define a finite
collection
$\{M_{k,l}\}_{(k,l)\in\Omega}$ of operators acting on
the state space $\mathcal{H}_\mathcal{S}$ by
\begin{equation}\label{eq:POT-WF-Mkl=dlkMFk}
  M_{k,l}:=\delta_{k,l}M^{\mathcal{F}}_k.
\end{equation}
It follows from \eqref{eq:WF-MF0=00_MF1=11} that
\begin{equation*}
  \sum_{(k,l)\in\Omega} M_{k,l}^\dag M_{k,l}
  =\sum_{k=0,1}{M^{\mathcal{F}}_k}^\dag M^{\mathcal{F}}_k=I_\mathcal{S},
\end{equation*}
where $I_\mathcal{S}$ denotes
the identity operator acting on $\mathcal{H}_\mathrm{S}$.
Thus, the finite collection $\{M_{k,l}\}_{(k,l)\in\Omega}$ satisfies the \emph{completeness equation}, and therefore forms \emph{measurement operators}.
We denote the identity operator acting on $\mathcal{H}_\mathrm{W}$ by $I_\mathcal{W}$,
and we define a unitary operator $U_{\mathrm{whole}}$ acting on $\mathcal{H}_\mathcal{S}\otimes\mathcal{H}_\mathcal{F}\otimes\mathcal{H}_\mathcal{W}$
by
\[
  U_{\mathrm{whole}}:=(I_\mathcal{S}\otimes U_{\mathcal{W}})\circ(U_{\mathcal{F}}\otimes I_\mathcal{W}).
\]
Then, using \eqref{eq:U-MF-S}, \eqref{eq:U-MW-F}, \eqref{eq:WF-MW_0_1_2}, and \eqref{eq:POT-WF-Mkl=dlkMFk} we see that, as a whole, 
the sequential applications of $U_{\mathcal{F}}$ and $U_{\mathcal{W}}$
to a state $\ket{\Psi}\otimes\ket{\Phi_{\mathrm{init}}}$ of the composite system $\mathcal{S}+\mathcal{F}+\mathcal{W}$
result in:
\begin{equation}\label{eq:WignerFriend-all}
\begin{split}
  U_{\mathrm{whole}}(\ket{\Psi}\otimes\ket{\Phi_{\mathrm{init}}})
  &=((I_\mathcal{S}\otimes U_{\mathcal{W}})\circ(U_{\mathcal{F}}\otimes I_\mathcal{W}))
      (\ket{\Psi}\otimes\ket{\Phi_{\mathrm{init}}}) \\
  &=\sum_{k=0,1}(M^{\mathcal{F}}_k\ket{\Psi})\otimes
      \sum_{l=0,1,2}(M^{\mathcal{W}}_l\ket{\Phi^{\mathcal{F}}[k]})
      \otimes\ket{\Phi^{\mathcal{W}}[l]} \\
  &=\sum_{k=0,1}(M^{\mathcal{F}}_k\ket{\Psi})\otimes\ket{\Phi^{\mathcal{F}}[k]}
      \otimes\ket{\Phi^{\mathcal{W}}[k]} \\
  &=\sum_{k=0,1}\,\sum_{l=0,1,2}(M_{k,l}\ket{\Psi})\otimes\ket{\Phi[k,l]} \\
  &=\sum_{(k,l)\in\Omega} (M_{k,l}\ket{\Psi})\otimes\ket{\Phi[k,l]}
\end{split}
\end{equation}
for each state $\ket{\Psi}\in\mathcal{H}_\mathcal{S}$ of the system $\mathcal{S}$,
where $\ket{\Phi_{\mathrm{init}}}:=
\ket{\Phi^{\mathcal{F}}_{\mathrm{init}}}\otimes\ket{\Phi^{\mathcal{W}}_{\mathrm{init}}}$
and
$\ket{\Phi[k,l]}:=\ket{\Phi^{\mathcal{F}}[k]}\otimes\ket{\Phi^{\mathcal{W}}[l]}$.

\subsection{Application of the principle of typicality}
\label{subsec:WF-Appl-POT}

The unitary operator $U_{\mathrm{whole}}$ applying to
the initial state $\ket{+}\otimes\ket{\Phi_{\mathrm{init}}}$ describes
the \emph{repeated once} of the infinite repetition of the measurements
where the succession of
the measurement $\mathcal{M}^\mathcal{F}$ and
the measurement $\mathcal{M}^\mathcal{W}$
is infinitely repeated.
It follows from \eqref{eq:WignerFriend-all} that
the application of $U_{\mathrm{whole}}$,
consisting of the sequential applications of $U_{\mathcal{F}}$ and $U_{\mathcal{W}}$ in this order,
can be regarded as a
\emph{single measurement} which is described by
the measurement operators $\{M_{k,l}\}_{(k,l)\in\Omega}$
and whose all possible outcomes form the set $\Omega$.

Hence, we can apply Definition~\ref{pmrpwst}
to this scenario of the setting of measurements.
Thus,
according to Definition~\ref{pmrpwst},
we can see that a \emph{world} is an infinite sequence over $\Omega$
and the probability measure induced by
the \emph{probability measure representation for the prefixes of worlds}
is a Bernoulli measure $\lambda_P$ on
$\Omega^\infty$,
where $P$ is a finite probability space on $\Omega$ such that
$P(k,l)$ is the square of the norm of the vector
\begin{equation*}
  (M_{k,l}\ket{+})\otimes\ket{\Phi[k,l]}
\end{equation*}
for every $(k,l)\in\Omega$.
Here $\Omega$ is the set of all possible records of
the observers $\mathcal{F}$ and $\mathcal{W}$
in the \emph{repeated once} of the experiments.
Let us calculate the explicit form of $P(k,l)$.
Then, using~\eqref{eq:POT-WF-Mkl=dlkMFk}, \eqref{eq:WF-MF0=00_MF1=11}, and \eqref{eq:WF-k+=f1s2k0+k1},
we have that
\begin{equation}\label{eq:WF-finitepsP}
  P(k,l)=\bra{+}M_{k,l}^\dag M_{k,l}\ket{+}\braket{\Phi[k,l]}{\Phi[k,l]}
  =\delta_{k,l}\bra{+}{M^{\mathcal{F}}_k}^\dag M^{\mathcal{F}}_k\ket{+}
  =\delta_{k,l}\abs{\braket{k}{+}}^2
  =\delta_{k,l}\abs{c_k}^2
\end{equation}
for each $(k,l)\in\Omega$.

Now, let us apply
Postulate~\ref{POT}, the \emph{principle of typicality},
to the setting of measurements
developed above.
Let $\omega$ be \emph{our world} in the infinite repetition of
the measurement described by the measurement operators $\{M_{k,l}\}_{(k,l)\in\Omega}$
in the above setting.
This $\omega$
is an infinite sequence over $\Omega$
consisting of records
in the two observers $\mathcal{F}$ and $\mathcal{W}$
which is being generated by the infinite repetition of the measurement
described by the measurement operators $\{M_{k,l}\}_{(k,l)\in\Omega}$
in the above setting.
Since the Bernoulli measure $\lambda_P$ on $\Omega^\infty$ is
the probability measure induced by the
probability
measure representation
for the prefixes of
worlds
in the above setting,
it follows from
Postulate~\ref{POT}
that \emph{$\omega$ is Martin-L\"of $P$-random}.

We use $\alpha$ and $\beta$
to denote the infinite sequences over $\{0,1\}$ and $\{0,1,2\}$, respectively,
such that $(\alpha(n),\beta(n))=\omega(n)$ for every $n\in\N^+$.
Then, since $\omega$ is Martin-L\"of $P$-random, it follows from \eqref{eq:WF-finitepsP} and
Corollary~\ref{cor:always-positive-probability}
that
\begin{equation}\label{eq:Wigner-bn=gn}
  \alpha(n)=\beta(n)
\end{equation}
for every $n\in\N^+$.
Thus, it follows from
Postulate~\ref{POT} and \eqref{eq:WignerFriend-all}
that, in our world $\omega$, for each $n\in\N^+$ the state of
the $n$th composite system $\mathcal{S}+\mathcal{F}+\mathcal{W}$,
i.e., the $n$th copy of a composite system consisting of
the system $\mathcal{S}$ and the two observers $\mathcal{F}$ and $\mathcal{W}$,
immediately after the measurement $\mathcal{M}^\mathcal{W}$ is given by
\begin{equation}\label{eq:WF-state-immafter-MW}
\begin{split}
  (M_{\alpha(n),\beta(n)}\ket{+})\otimes\ket{\Phi[\alpha(n),\beta(n)]}
  &=(M_{\alpha(n),\alpha(n)}\ket{+})\otimes\ket{\Phi^{\mathcal{F}}[\alpha(n)]}\otimes\ket{\Phi^{\mathcal{W}}[\alpha(n)]} \\
  &=(M^{\mathcal{F}}_{\alpha(n)}\ket{+})\otimes\ket{\Phi^{\mathcal{F}}[\alpha(n)]}\otimes\ket{\Phi^{\mathcal{W}}[\alpha(n)]} \\
  &=c_{\alpha(n)}\ket{\alpha(n)}\otimes\ket{\Phi^{\mathcal{F}}[\alpha(n)]}\otimes\ket{\Phi^{\mathcal{W}}[\alpha(n)]},
\end{split}
\end{equation}
up to the normalization factor,
where the last equality follows from
\eqref{eq:WF-MF0=00_MF1=11} and \eqref{eq:WF-k+=f1s2k0+k1}.
Recall that,
for each $k=0,1$,
the state $\ket{\Phi^{\mathcal{F}}[k]}$ indicates that
\emph{the observer $\mathcal{F}$ records the value $k$},
and
the state $\ket{\Phi^{\mathcal{W}}[k]}$ indicates that
\emph{the observer $\mathcal{W}$ knows that
the observer $\mathcal{F}$ records the value $k$}.
Thus we have that, in our world $\omega$,
immediately after the measurement $\mathcal{M}^\mathcal{W}$,
the value which the observer $\mathcal{F}$ knows
equals the value which the observer $\mathcal{W}$ knows that
the observer $\mathcal{F}$ records,
in every repetition of the experiment, independently of $n$,
as expected from the point of view of the conventional quantum mechanics.

Let us determine the states of
the composite system $\mathcal{S}+\mathcal{F}$
immediately after the measurement $\mathcal{M}^\mathcal{F}$
in our world $\omega$.
Note first that the condition~\eqref{eq:Ui-TSCP-MO} holds
in our setting of measurements developed above.
On the one hand,
according to Definition~\ref{def:unchanged-state},
it is easy to check that 
the states of the system $\mathcal{S}$ are
unchanged after the measurement by
the observer $\mathcal{W}$
and therefore the states of the system $\mathcal{S}$ are confirmed
before the measurement by the observer $\mathcal{W}$,
where ``apparatus'' in Definition~\ref{def:unchanged-state} should read ``observer''.
Using \eqref{eq:WF-state-immafter-MW}, we see that, in our world $\omega$,
for every $n\in\N^+$ the state of the system $\mathcal{S}$
immediately after the measurement $\mathcal{M}^\mathcal{W}$
is $\ket{\alpha(n)}$ in the $n$th composite system $\mathcal{S}+\mathcal{F}+\mathcal{W}$.
Thus, it follows from Postulate~\ref{CF}~(i) that, in our world $\omega$,
for every $n\in\N^+$ the state of the system $\mathcal{S}$
immediately after the measurement $\mathcal{M}^\mathcal{F}$ is
\begin{equation}\label{eq:WF-S-state-immafter-MF}
  \ket{\alpha(n)}
\end{equation}
in the $n$th composite system $\mathcal{S}+\mathcal{F}+\mathcal{W}$.

On the other hand,
according to Definition~\ref{def:unchanged-apparatus},
it is easy to check that 
the final states of the observer $\mathcal{F}$ are
unchanged after the measurement by
the observer $\mathcal{W}$
and therefore the final states of the observer $\mathcal{F}$ are confirmed
before the measurement by the observer $\mathcal{W}$,
where ``apparatus'' in Definition~\ref{def:unchanged-apparatus} should read ``observer''.
Using \eqref{eq:WF-state-immafter-MW}, we see that, in our world $\omega$,
for every $n\in\N^+$ the state of the observer $\mathcal{F}$
immediately after the measurement $\mathcal{M}^\mathcal{W}$
is $\ket{\Phi^{\mathcal{F}}[\alpha(n)]}$ in the $n$th composite system $\mathcal{S}+\mathcal{F}+\mathcal{W}$.
Thus, it follows from Postulate~\ref{CF}~(ii) that, in our world $\omega$,
for every $n\in\N^+$ the state of the observer $\mathcal{F}$
immediately after the measurement $\mathcal{M}^\mathcal{F}$ is
\begin{equation}\label{eq:WF-F-state-immafter-MF}
  \ket{\Phi^{\mathcal{F}}[\alpha(n)]}
\end{equation}
in the $n$th composite system $\mathcal{S}+\mathcal{F}+\mathcal{W}$.

Hence, it follows from
\eqref{eq:WF-S-state-immafter-MF} and \eqref{eq:WF-F-state-immafter-MF} that,
in our world $\omega$, for every $n\in\N^+$ 
the state of the composite system $\mathcal{S}+\mathcal{F}$
immediately after the measurement $\mathcal{M}^\mathcal{F}$ is given by
\begin{equation}\label{eq:WF-SF-state-immafter-MF}
  \ket{\alpha(n)}\otimes\ket{\Phi^{\mathcal{F}}[\alpha(n)]}
\end{equation}
in the $n$th composite system $\mathcal{S}+\mathcal{F}+\mathcal{W}$.
This state~\eqref{eq:WF-SF-state-immafter-MF}
is different from the `undecided' state~\eqref{eq:SFstateW} in Case~2.
Therefore,
in our world $\omega$,
for every $n\in\N^+$
the observer $\mathcal{F}$ surely
records
the value $\alpha(n)$
immediately after the measurement $\mathcal{M}^\mathcal{F}$
in the $n$th repetition of the experiment,
\emph{in accordance with our common sense}.

In addition,
since $\omega$ is Martin-L\"of $P$-random,
it follows from Theorem~\ref{contraction2} and \eqref{eq:WF-finitepsP}
that $\alpha$ is a Martin-L\"of $R$-random infinite binary sequence,
where $R$ is a finite probability space on $\{0,1\}$ such that
$R(0)=\abs{c_0}^2$ and $R(1)=\abs{c_1}^2$.
Therefore,
it follows from Theorem~\ref{FI} that for every $k\in\{0,1\}$ it holds that
\[
  \lim_{n\to\infty} \frac{N_k(\rest{\alpha}{n})}{n}
  =\abs{c_k}^2,
\]
where $N_k(\rest{\alpha}{n})$ denotes the number of the occurrences of $k$
in the prefix of $\alpha$ of length $n$,
as expected from the point of view of the conventional quantum mechanics.
Thus,
in our world $\omega$,
for each $k=0,1$
the following (i) and (ii) simultaneously occur
in a proportion of $\abs{c_k}^2$ out of the infinite repetitions of the experiment:
(i)~The observer $\mathcal{F}$ records the value $k$
immediately after the measurement $\mathcal{M}^\mathcal{F}$,
and (ii)~the observer $\mathcal{F}$ records the value $k$ and
the observer $\mathcal{W}$ knows that the observer $\mathcal{F}$ records the value $k$
immediately after the measurement $\mathcal{M}^\mathcal{W}$,
\emph{in accordance with our common sense}.

We can use Postulate~\ref{Recursive Use} to determine the states of
the composite system $\mathcal{S}+\mathcal{F}$
immediately after the measurement $\mathcal{M}^\mathcal{F}$
in our world $\omega$,
instead of using Postulate~\ref{CF}.
First note that
each of the measurement operators $\{M^{\mathcal{F}}_0,M^{\mathcal{F}}_1\}$ and
$\{M^{\mathcal{W}}_0,M^{\mathcal{W}}_1,M^{\mathcal{W}}_2\}$ forms a PVM.
On the one hand,
applying
$U_\mathcal{F}\otimes I_{\mathcal{W}}$ to the initial state
$\ket{+}\otimes\ket{\Phi^{\mathcal{F}}_{\mathrm{init}}}\otimes\ket{\Phi^{\mathcal{W}}_{\mathrm{init}}}$ in the repeated once of the experiments
results in a state $\ket{\Psi^{\mathcal{F}}_{\mathrm{Total}}}$ given by
\begin{equation}\label{eq:WF-virtual-state-Total-F}
\begin{split}
  \ket{\Psi^{\mathcal{F}}_{\mathrm{Total}}}
  &:=(U_\mathcal{F}\otimes I_{\mathcal{W}})(\ket{+}\otimes\ket{\Phi^{\mathcal{F}}_{\mathrm{init}}}\otimes\ket{\Phi^{\mathcal{W}}_{\mathrm{init}}})
  =\sum_{k=0,1}(M^{\mathcal{F}}_k\ket{+})\otimes\ket{\Phi^{\mathcal{F}}[k]}\otimes\ket{\Phi^{\mathcal{W}}_{\mathrm{init}}} \\
  &=(c_0\ket{0}\otimes\ket{\Phi^{\mathcal{F}}[0]}+c_1\ket{1}\otimes\ket{\Phi^{\mathcal{F}}[1]})\otimes\ket{\Phi^{\mathcal{W}}_{\mathrm{init}}},
\end{split}
\end{equation}
where the second equality follows from \eqref{eq:U-MF-S}, and the last equality follows from \eqref{eq:WF-MF0=00_MF1=11} and \eqref{eq:WF-k+=f1s2k0+k1}.
On the other hand,
the final states of the observer $\mathcal{F}$ are
unchanged after the measurement by
the observer $\mathcal{W}$, as we saw above.
Thus, it follows from Postulate~\ref{Recursive Use} and
\eqref{eq:WF-state-immafter-MW} that, in our world $\omega$,
for each $n\in\N^+$ the state of
the $n$th composite system $\mathcal{S}+\mathcal{F}+\mathcal{W}$
immediately after the measurement $\mathcal{M}^\mathcal{F}$ is given by
\begin{equation}\label{eq:WF-Proj-virtual-state-Total-F}
  P\left(\ket{\Phi^{\mathcal{F}}[\alpha(n)]}\right)\ket{\Psi^{\mathcal{F}}_{\mathrm{Total}}},
\end{equation}
up to the normalization factor, where
\[
  P\left(\ket{\Phi^{\mathcal{F}}[\alpha(n)]}\right)
  =I_{\mathcal{S}}\otimes
  \product{\Phi^{\mathcal{F}}[\alpha(n)]}{\Phi^{\mathcal{F}}[\alpha(n)]}\otimes I_{\mathcal{W}}.
\]
The vector \eqref{eq:WF-Proj-virtual-state-Total-F} equals
\[
  c_{\alpha(n)}\ket{\alpha(n)}\otimes\ket{\Phi^{\mathcal{F}}[\alpha(n)]}\otimes\ket{\Phi^{\mathcal{W}}_{\mathrm{init}}}
\]
for every $n\in\N^+$,
due to \eqref{eq:WF-virtual-state-Total-F}.
Thus, we have the same result as \eqref{eq:WF-SF-state-immafter-MF},
which we obtained based on Postulate~\ref{CF}.

\section{Deutsch's thought experiment}
\label{sec:Deutsch's thought experiment}

In 1985, modifying the Wigner's friend paradox, which is described in Section~\ref{sec:Wigner's friend} of this paper, Deutsch~\cite{Deu85}
proposed a thought experiment
which can, in principle, determine whether
the observation by
the observer $\mathcal{F}$ can lead to the reduction of state vector of the system $\mathcal{S}$ prepared initially in the state $\ket{+}$.
In what follows, we make an analysis of Deutsch's thought experiment
in the framework of quantum mechanics based on the principle of typicality.
Actually, in this paper,
we study a well-known variant of this thought experiment,
which is a simplification of the technique used in Frauchiger and Renner~\cite{FR18}.
We call this variant just \emph{Deutsch's thought experiment}.

In this section,
we review Deutsch's thought experiment
in the terminology of the conventional quantum mechanics.
In the setting of Deutsch's thought experiment,
we consider a single qubit system $\mathcal{S}$ and
two observers $\mathcal{F}$ and $\mathcal{D}$,
as in the original Wigner's friend paradox described in Section~\ref{sec:Wigner's friend}.
The observer $\mathcal{F}$ performs a measurement $\mathcal{M}^\mathcal{F}$
over the system $\mathcal{S}$ in the exactly same manner
as in the Wigner's friend paradox.
However, in Deutsch's thought experiment~\cite{Deu85},
the observer $\mathcal{D}$ performs a measurement
over the composite system $\mathcal{S}+\mathcal{F}$,
and not over the observer $\mathcal{F}$ only,
like in the case of the original Wigner's friend paradox.
For completeness and clarity, in what follows,
we describe Deutsch's thought experiment
from scratch.

Consider a single qubit system $\mathcal{S}$ with state space
$\mathcal{H}_\mathcal{S}$.
An observer $\mathcal{F}$ performs a measurement
$\mathcal{M}^\mathcal{F}$
described
by the measurement operators
$\{M^{\mathcal{F}}_0,M^{\mathcal{F}}_1\}$
over the system $\mathcal{S}$
(according to Postulate~\ref{Born-rule}),
where
\begin{equation}\label{eq:DTE-MF0=00_MF1=11}
  M^{\mathcal{F}}_0:=\product{0}{0}\quad\text{and}\quad M^{\mathcal{F}}_1:=\product{1}{1},
\end{equation}
and $\ket{0}$ and $\ket{1}$
form
an orthonormal basis of
the state space $\mathcal{H}_\mathcal{S}$ of the
system $\mathcal{S}$.

Hereafter, we assume that
the observer $\mathcal{F}$ itself is
a quantum system with state space $\mathcal{H}_\mathcal{F}$.
Then we consider an observer $\mathcal{D}$ who is on the outside of
the composite system $\mathcal{S}+\mathcal{F}$
consisting of the system $\mathcal{S}$ and the observer $\mathcal{F}$. 
The observer $\mathcal{D}$ can also be regarded as
an
environment for
the composite system $\mathcal{S}+\mathcal{F}$.
Then, according to \eqref{single_measurement},
the measurement process of
the measurement
$\mathcal{M}^\mathcal{F}$
is described by a unitary operator $U_\mathcal{F}$
acting on
$\mathcal{H}_\mathcal{S}\otimes\mathcal{H}_\mathcal{F}$
such that
\begin{equation}\label{eq:UD-MF-S}
  U_\mathcal{F}(\ket{\Psi}\otimes\ket{\Phi^{\mathcal{F}}_{\mathrm{init}}})
  =\sum_{k=0,1}(M^{\mathcal{F}}_k\ket{\Psi})\otimes\ket{\Phi^{\mathcal{F}}[k]}
\end{equation}
for every $\ket{\Psi}\in\mathcal{H}_\mathcal{S}$.
The unitary operator
$U_\mathcal{F}$ describes the interaction between the system $\mathcal{S}$ and
the observer $\mathcal{F}$ as a quantum system.
The vector $\ket{\Phi^{\mathcal{F}}_{\mathrm{init}}}\in\mathcal{H}_\mathcal{F}$ is
the initial state of the
observer
$\mathcal{F}$, and $\ket{\Phi^{\mathcal{F}}[k]}\in\mathcal{H}_\mathcal{F}$ is
a final state of the
observer
$\mathcal{F}$ for each $k=0,1$,
with $\braket{\Phi^{\mathcal{F}}[k]}{\Phi^{\mathcal{F}}[k']}=\delta_{k,k'}$.
For each $k=0,1$,
the state $\ket{\Phi^{\mathcal{F}}[k]}$ indicates that
\emph{the observer $\mathcal{F}$ records the value $k$}.
The unitary time-evolution \eqref{eq:UD-MF-S} is the quantum mechanical description of
the measurement process of the measurement $\mathcal{M}^\mathcal{F}$
over
the composite system $\mathcal{S}+\mathcal{F}$,
from the point of view of the observer $\mathcal{D}$.

Let $c_0$ and $c_1$ be arbitrary two non-zero complex numbers such that $\abs{c_0}^2+\abs{c_1}^2=1$.
We then define a state
$\ket{\Psi^{\mathcal{S}+\mathcal{F}}[+]}\in\mathcal{H}_\mathcal{S}\otimes\mathcal{H}_\mathcal{F}$
of the composite system $\mathcal{S}+\mathcal{F}$ by
\begin{equation}\label{eq:DTE-Def-kPS+F+}
  \ket{\Psi^{\mathcal{S}+\mathcal{F}}[+]}:=c_0\ket{0}\otimes\ket{\Phi^{\mathcal{F}}[0]}+c_1\ket{1}\otimes\ket{\Phi^{\mathcal{F}}[1]}.
\end{equation}
After the measurement $\mathcal{M}^\mathcal{F}$
performed
by the observer $\mathcal{F}$ over the system $\mathcal{S}$,
the observer $\mathcal{D}$ performs a measurement $\mathcal{M}^\mathcal{D}$
described by the measurement operators $\{M^{\mathcal{D}}_+,M^{\mathcal{D}}_-\}$
over
the composite system $\mathcal{S}+\mathcal{F}$
(according to Postulate~\ref{Born-rule}),
where
\begin{equation}\label{eq:DTE-Def-MW+_MW-}
\begin{split}
  M^{\mathcal{D}}_+&:=\product{\Psi^{\mathcal{F}+\mathcal{S}}[+]}{\Psi^{\mathcal{F}+\mathcal{S}}[+]},\\
  M^{\mathcal{D}}_-&:=\sqrt{I_{\mathcal{S}+\mathcal{F}}-{M^{\mathcal{D}}_+}^\dag M^{\mathcal{D}}_+}
  =I_{\mathcal{S}+\mathcal{F}}-M^{\mathcal{D}}_+,
\end{split}
\end{equation}
and $I_{\mathcal{S}+\mathcal{F}}$
denotes
the identity operator acting on $\mathcal{H}_\mathcal{S}\otimes\mathcal{H}_\mathcal{F}$.

Now, let us assume that
the system $\mathcal{S}$ is initially in a state $\ket{+}$ in the above setting,
where $\ket{+}$ is defined by
\begin{equation}\label{eq:DTE-k+=f1s2k0+k1}
  \ket{+}:=c_0\ket{0}+c_1\ket{1}.
\end{equation}
Then, we have the following two cases,
depending on how we treat the measurement $\mathcal{M}^\mathcal{F}$:

\paragraph{Case 1.}

Applying Postulate~\ref{Born-rule} to the measurement $\mathcal{M}^\mathcal{F}$,
either the following two possibilities (i) or (ii) occurs
immediately after the measurement $\mathcal{M}^\mathcal{F}$:
\begin{enumerate}
\item The system $\mathcal{S}$ and the observer $\mathcal{F}$ are in the states
  $\ket{0}$ and $\ket{\Phi^{\mathcal{F}}[0]}$, respectively.
  Therefore,
  the state of the composite system $\mathcal{S}+\mathcal{F}$
  is
  $\ket{0}\otimes\ket{\Phi^{\mathcal{F}}[0]}$.
\item The system $\mathcal{S}$ and the observer $\mathcal{F}$ are in the states
  $\ket{1}$ and $\ket{\Phi^{\mathcal{F}}[1]}$, respectively.
  Therefore,
  the state of the composite system
  $\mathcal{S}+\mathcal{F}$
  is
  $\ket{1}\otimes\ket{\Phi^{\mathcal{F}}[1]}$.
\end{enumerate}
Here, the possibilities (i) and (ii) occur with probabilities $\abs{c_0}^2$ and $\abs{c_1}^2$, respectively.
Then, applying Postulate~\ref{Born-rule} to the subsequent measurement $\mathcal{M}^\mathcal{D}$,
we see that
\emph{the
measurement $\mathcal{M}^\mathcal{D}$
gives
the outcomes $+$ with probabilities $\abs{c_0}^2$ and $\abs{c_1}^2$
if the possibilities (i) and (ii) above occur
in the measurement $\mathcal{M}^\mathcal{F}$, respectively}.
Note that,
in this Case~1,
according to Postulate~\ref{Born-rule}
we can consider the joint probability $P_{\mathrm{c1}}(k,l)$
that the measurement $\mathcal{M}^\mathcal{F}$ gives the outcome $k$ and then
the measurement $\mathcal{M}^\mathcal{D}$ gives the outcome $l$
for each $k=0,1$ and $l=+,-$.
We
then
have that
$P_{\mathrm{c1}}(k,+)=\abs{c_k}^2\abs{c_k}^2$ and $P_{\mathrm{c1}}(k,-)=\abs{c_k}^2\abs{c_{1-k}}^2$ for every $k=0,1$.
That is, we have that
\[
  P_{\mathrm{c1}}(k,l)=\abs{c_k}^2(\delta_{l,+}\abs{c_k}^2+\delta_{l,-}\abs{c_{1-k}}^2)
\]
for every $k=0,1$ and $l=+,-$.
Therefore,
the probability that the measurement $\mathcal{M}^\mathcal{D}$ gives the outcome $+$ is given by 
\[
  \sum_{k=0,1}P_{\mathrm{c1}}(k,+)
  =\abs{c_0}^4+\abs{c_1}^4,
\]
which is less than one since $\abs{c_0}^2+\abs{c_1}^2=1$ and $0<\abs{c_0}^2<1$.
Note that this probability takes the minimum value $1/2$
when $\abs{c_0}=\abs{c_1}=1/\sqrt{2}$.

\paragraph{Case 2.}

According to \eqref{eq:UD-MF-S},
the state of the composite system $\mathcal{S}+\mathcal{F}$ is
\begin{equation}\label{eq:DTE-kPS+F+}
  \ket{\Psi^{\mathcal{S}+\mathcal{F}}[+]}
\end{equation}
immediately after the measurement $\mathcal{M}^\mathcal{F}$.
Thus, applying Postulate~\ref{Born-rule} to the subsequent measurement $\mathcal{M}^\mathcal{D}$,
we see that
\emph{the outcome $+$ occurs surely
in the
measurement $\mathcal{M}^\mathcal{D}$}.
Thus, for each $l=+,-$,
we can consider the probability $P_{\mathrm{c2}}(l)$
that the measurement $\mathcal{M}^\mathcal{D}$ gives the outcome $l$,
where $P_{\mathrm{c2}}(+)=1$ and $P_{\mathrm{c2}}(-)=0$.
On the other hand,
the notion of the probability
that the measurement $\mathcal{M}^\mathcal{F}$ gives a certain outcome
is meaningless
in this Case~2.

\bigskip\smallskip

Thus, we have a
difference
about the measurement results of the measurement $\mathcal{M}^\mathcal{D}$,
depending on how we treat the measurement $\mathcal{M}^\mathcal{F}$, i.e.,
depending on whether we are based on Postulate~\ref{Born-rule} (Case~1)
or the unitary time-evolution~\eqref{eq:UD-MF-S} (Case~2).
Note that
this
difference
can be testable, \emph{in principle}, by
repeating the experiment sufficiently many times
in the conventional quantum mechanics.

\section{Analysis of Deutsch's thought experiment based on the principle of typicality}
\label{sec:Analysis of Deutsch's thought experiment}

In this section, we make an analysis of Deutsch's thought experiment,
which is described in the preceding section in the terminology of the conventional quantum mechanics,
in terms of our framework of quantum mechanics based on
Postulates~\ref{POT} and \ref{Recursive Use},
together with Postulates~\ref{state_space}, \ref{composition}, and \ref{evolution}.
We will then see that
the measurement $\mathcal{M}^\mathcal{D}$ has surely the outcomes $+$,
even if we fully treat the observer $\mathcal{F}$ as a measuring apparatus.

To proceed this program,
first of all
\emph{we have to implement everything, i.e.,
the
two
measurements $\mathcal{M}^\mathcal{F}$ and $\mathcal{M}^\mathcal{D}$
in the setting of Deutsch's thought experiment,
by unitary time-evolution}.
The measurement process of the measurement
$\mathcal{M}^\mathcal{F}$
is already described by the unitary operator $U_\mathcal{F}$
given in \eqref{eq:UD-MF-S}.
For applying the principle of typicality,
the observer $\mathcal{D}$
is regarded
as a quantum system with state space $\mathcal{H}_\mathcal{D}$.
Then, according to \eqref{single_measurement},
the measurement process of
the measurement
$\mathcal{M}^\mathcal{D}$
is described by a unitary operator $U_\mathcal{D}$
acting on
$\mathcal{H}_\mathcal{S}\otimes\mathcal{H}_\mathcal{F}\otimes\mathcal{H}_\mathcal{D}$
such that
\begin{equation}\label{eq:UD-MW-F}
  U_\mathcal{D}(\ket{\Psi}\otimes\ket{\Phi^{\mathcal{D}}_{\mathrm{init}}})
  =\sum_{l=+,-}(M^{\mathcal{D}}_l\ket{\Psi})\otimes\ket{\Phi^{\mathcal{D}}[l]}
\end{equation}
for every $\ket{\Psi}\in\mathcal{H}_\mathcal{S}\otimes\mathcal{H}_\mathcal{F}$.
The unitary operator
$U_\mathcal{D}$ describes the interaction
between
the composite system $\mathcal{S}+\mathcal{F}$
and the observer $\mathcal{D}$
where both
are regarded as a quantum system.
The vector $\ket{\Phi^{\mathcal{D}}_{\mathrm{init}}}\in\mathcal{H}_\mathcal{D}$ is
the initial state of the
observer
$\mathcal{D}$, and $\ket{\Phi^{\mathcal{D}}[l]}\in\mathcal{H}_\mathcal{D}$ is
a final state of the
observer
$\mathcal{D}$ for each $l=+,-$,
with $\braket{\Phi^{\mathcal{D}}[l]}{\Phi^{\mathcal{D}}[l']}=\delta_{l,l'}$.
For each $l=+,-$, the state $\ket{\Phi^{\mathcal{D}}[l]}$ indicates that
\emph{the observer $\mathcal{D}$ records the value $l$}.

Now, we set
$P_0:=\product{0}{0}$ and $P_1:=\product{1}{1}$.
Then, from \eqref{eq:UD-MF-S}, obviously we have that
\begin{equation}\label{eq:UD-MF-S_Pa}
U_{\mathcal{F}}(\ket{\Psi}\otimes\ket{\Phi^{\mathcal{F}}_\mathrm{init}})=\sum_{a=0,1} (P_a\ket{\Psi})\otimes\ket{\Phi^{\mathcal{F}}[a]}
\end{equation}
for every $\ket{\Psi}\in\mathcal{H}_\mathcal{S}$.
We define $\ket{-}\in\mathcal{H}_{\mathcal{S}}$ by
\begin{equation}\label{eq:DTE-k-}
  \ket{-}:=\overline{c_1}\ket{0}-\overline{c_0}\ket{1},
\end{equation}
and define
$\ket{\Psi^{\mathcal{S}+\mathcal{F}}[-]}\in\mathcal{H}_\mathcal{S}\otimes\mathcal{H}_\mathcal{F}$ by
\begin{equation}\label{eq:DTE-Def-kPS+F-}
  \ket{\Psi^{\mathcal{S}+\mathcal{F}}[-]}:=\overline{c_1}\ket{0}\otimes\ket{\Phi^{\mathcal{F}}[0]}-\overline{c_0}\ket{1}\otimes\ket{\Phi^{\mathcal{F}}[1]}.
\end{equation}
Then, it follows from \eqref{eq:UD-MF-S_Pa} that
\begin{equation}\label{eq:D-sPkoPFa=kPS+F+-}
  U_{\mathcal{F}}(\ket{l}\otimes\ket{\Phi^{\mathcal{F}}_\mathrm{init}})
  =\sum_{a=0,1} (P_a\ket{l})\otimes\ket{\Phi^{\mathcal{F}}[a]}
  =\ket{\Psi^{\mathcal{S}+\mathcal{F}}[l]}
\end{equation}
for each $l=+,-$,
where the last equality follows from \eqref{eq:DTE-k+=f1s2k0+k1}, \eqref{eq:DTE-Def-kPS+F+}, \eqref{eq:DTE-k-}, and \eqref{eq:DTE-Def-kPS+F-}.
On the other hand,
using \eqref{eq:DTE-Def-MW+_MW-}, \eqref{eq:DTE-Def-kPS+F+}, and \eqref{eq:DTE-Def-kPS+F-} we have that
\begin{equation}\label{eq:D-MDm-kPS+F=d-kPS+F}
  M^{\mathcal{D}}_l\ket{\Psi^{\mathcal{S}+\mathcal{F}}[l']}=\delta_{l,l'}\ket{\Psi^{\mathcal{S}+\mathcal{F}}[l']}
\end{equation}
for every $l,l'=+,-$.
Thus, using \eqref{eq:UD-MW-F} and \eqref{eq:D-sPkoPFa=kPS+F+-} we have that
\begin{equation}\label{eq:Deutsch-ket}
\begin{split}
U_{\mathcal{D}} ((U_{\mathcal{F}}(\ket{l}\otimes\ket{\Phi^{\mathcal{F}}_\mathrm{init}}))\otimes\ket{\Phi^{\mathcal{D}}_\mathrm{init}})
&=\sum_{b=+,-}(M^{\mathcal{D}}_b\ket{\Psi^{\mathcal{S}+\mathcal{F}}[l]})\otimes\ket{\Phi^{\mathcal{D}}[b]} \\
&=\ket{\Psi^{\mathcal{S}+\mathcal{F}}[l]})\otimes\ket{\Phi^{\mathcal{D}}[l]} \\
&=\sum_{a=0,1} (P_a\ket{l})\otimes\ket{\Phi^{\mathcal{F}}[a]}\otimes\ket{\Phi^{\mathcal{D}}[l]}
\end{split}
\end{equation}
for each $l=+,-$,
where the second equality follows from \eqref{eq:D-MDm-kPS+F=d-kPS+F},
and the last equality follows from \eqref{eq:DTE-Def-kPS+F+} and \eqref{eq:DTE-Def-kPS+F-}.
We set
$Q_+:=\product{+}{+}$ and $Q_-:=\product{-}{-}$.
Note that $\ket{+}$ and $\ket{-}$ form an orthonormal basis of $\mathcal{H}_\mathcal{S}$.
Therefore it follows that
\begin{equation}\label{eq:D-Q++Q-=I}
  Q_+ +Q_- =I_\mathcal{S},
\end{equation}
where $I_\mathcal{S}$ denotes the identity operator acting on $\mathcal{H}_\mathrm{S}$.
Then, based on \eqref{eq:Deutsch-ket}, we have that
\begin{equation}\label{eq:Deutsch-pre-all}
U_{\mathcal{D}} ((U_{\mathcal{F}}(\ket{\Psi}\otimes\ket{\Phi^{\mathcal{F}}_\mathrm{init}}))\otimes\ket{\Phi^{\mathcal{D}}_\mathrm{init}})=\sum_{a=0,1}\sum_{b=+,-} (P_a Q_b\ket{\Psi})\otimes\ket{\Phi^{\mathcal{F}}[a]}\otimes\ket{\Phi^{\mathcal{D}}[b]}
\end{equation}
for every $\ket{\Psi}\in\mathcal{H}_\mathcal{S}$.
Actually, using \eqref{eq:Deutsch-ket}
we can check that the equality \eqref{eq:Deutsch-pre-all} holds certainly
in the case where $\ket{\Psi}=\ket{+}$ or $\ket{\Psi}=\ket{-}$.
Therefore,
since $\ket{+}$ and $\ket{-}$ form an orthonormal basis of $\mathcal{H}_\mathcal{S}$,
due to linearity we have that the equality \eqref{eq:Deutsch-pre-all} holds for an arbitrary $\ket{\Psi}\in\mathcal{H}_\mathcal{S}$,
as desired.

We denote the set $\{0,1\}\times\{+,-\}$ by $\Omega$,
and we define a finite
collection
$\{M_{k,l}\}_{(k,l)\in\Omega}$ of operators acting on
the state space $\mathcal{H}_\mathcal{S}$ by
\begin{equation}\label{eq:DTE-Mkl=PkQl}
  M_{k,l}:=P_k Q_l.
\end{equation}
It follows from \eqref{eq:D-Q++Q-=I} that
\begin{equation*}
  \sum_{(k,l)\in\Omega} M_{k,l}^\dag M_{k,l}
  =\sum_{b=+,-}\,\sum_{a=0,1} Q_b P_a Q_b
  =\sum_{b=+,-}Q_b=I_\mathcal{S}.
\end{equation*}
Thus, the finite collection $\{M_{k,l}\}_{(k,l)\in\Omega}$ satisfies the \emph{completeness equation}, and therefore forms \emph{measurement operators}.
We denote the identity operator acting on $\mathcal{H}_\mathrm{D}$ by $I_\mathcal{D}$,
and we define a unitary operator $U_{\mathrm{whole}}$ acting on $\mathcal{H}_\mathcal{S}\otimes\mathcal{H}_\mathcal{F}\otimes\mathcal{H}_\mathcal{D}$
by
\[
  U_{\mathrm{whole}}:=U_{\mathcal{D}}\circ(U_{\mathcal{F}}\otimes I_\mathcal{D}).
\]
Then, using \eqref{eq:Deutsch-pre-all} and \eqref{eq:DTE-Mkl=PkQl} we see that, as a whole,
the sequential applications of $U_{\mathcal{F}}$ and $U_{\mathcal{D}}$
to a state $\ket{\Psi}\otimes\ket{\Phi_{\mathrm{init}}}$ of the composite system $\mathcal{S}+\mathcal{F}+\mathcal{D}$
result in:
\begin{equation}\label{eq:Deutsch-all}
  U_{\mathrm{whole}}(\ket{\Psi}\otimes\ket{\Phi_{\mathrm{init}}})
    (\ket{\Psi}\otimes\ket{\Phi_{\mathrm{init}}})
  =\sum_{(k,l)\in\Omega}(M_{k,l}\ket{\Psi})\otimes\ket{\Phi[k,l]}
\end{equation}
for every state $\ket{\Psi}\in\mathcal{H}_\mathcal{S}$ of the system $\mathcal{S}$,
where $\ket{\Phi_{\mathrm{init}}}:=
\ket{\Phi^{\mathcal{F}}_{\mathrm{init}}}\otimes\ket{\Phi^{\mathcal{D}}_{\mathrm{init}}}$
and
$\ket{\Phi[k,l]}:=\ket{\Phi^{\mathcal{F}}[k]}\otimes\ket{\Phi^{\mathcal{D}}[l]}$.

\subsection{Application of the principle of typicality}
\label{subsec:Deutsch_Application of the principle of typicality}

The unitary operator $U_{\mathrm{whole}}$ applying to
the initial state $\ket{+}\otimes\ket{\Phi_{\mathrm{init}}}$ describes
the \emph{repeated once} of the infinite repetition of the measurements
where the succession of
the measurement $\mathcal{M}^\mathcal{F}$ and
the measurement $\mathcal{M}^\mathcal{D}$
is infinitely repeated.
It follows from \eqref{eq:Deutsch-all} that
the application of $U_{\mathrm{whole}}$,
consisting of the sequential applications of $U_{\mathcal{F}}$ and $U_{\mathcal{D}}$ in this order,
can be regarded as a
\emph{single measurement} which is described by
the measurement operators $\{M_{k,l}\}_{(k,l)\in\Omega}$
and whose all possible outcomes form the set $\Omega$.

Hence, we can apply Definition~\ref{pmrpwst}
to this scenario of the setting of measurements.
Therefore, according to Definition~\ref{pmrpwst},
we can see that a \emph{world} is an infinite sequence over $\Omega$
and the probability measure induced by
the \emph{probability measure representation for the prefixes of worlds}
is a Bernoulli measure $\lambda_P$ on
$\Omega^\infty$,
where $P$ is a finite probability space on $\Omega$ such that
$P(k,l)$ is the square of the norm of the vector
\begin{equation*}
  (M_{k,l}\ket{+})\otimes\ket{\Phi[k,l]}
\end{equation*}
for every $(k,l)\in\Omega$.
Here $\Omega$ is the set of all possible records of
the observers $\mathcal{F}$ and $\mathcal{D}$
in the \emph{repeated once} of the experiments.
Let us calculate the explicit form of $P(k,l)$.
Then, using~\eqref{eq:DTE-Mkl=PkQl}, we have that
\begin{equation}\label{eq:Deutsch-finitepsP}
  P(k,l)=\bra{+}M_{k,l}^\dag M_{k,l}\ket{+}\braket{\Phi[k,l]}{\Phi[k,l]}
  =\bra{+}Q_l P_k Q_l\ket{+}
  =\delta_{l,+}\bra{+}P_k\ket{+}
  =\delta_{l,+}\abs{c_k}^2
\end{equation}
for each $(k,l)\in\Omega$.

Now, let us apply
Postulate~\ref{POT}, the \emph{principle of typicality},
to the setting of measurements
developed above.
Let $\omega$ be \emph{our world} in the infinite repetition of
the measurements (experiment) in the above setting.
This $\omega$
is an infinite sequence over $\Omega$
consisting of records
in the
two
observers $\mathcal{F}$ and $\mathcal{D}$
which is being generated by the infinite repetition of the measurement
described by the measurement operators $\{M_{k,l}\}_{(k,l)\in\Omega}$
in the above setting.
Since the Bernoulli measure $\lambda_P$ on $\Omega^\infty$ is
the probability measure induced by the
probability
measure representation
for the prefixes of
worlds
in the above setting,
it follows from
Postulate~\ref{POT}
that \emph{$\omega$ is Martin-L\"of $P$-random}.

We use $\alpha$ and $\beta$
to denote the infinite sequences over $\{0,1\}$ and $\{+,-\}$, respectively,
such that $(\alpha(n),\beta(n))=\omega(n)$ for every $n\in\N^+$.
Then, since $\omega$ is Martin-L\"of $P$-random, it follows from \eqref{eq:Deutsch-finitepsP} and
Corollary~\ref{cor:always-positive-probability}
that
$\beta$
consists only of $+$, i.e., $\beta=++++++\dotsc\dotsc$.
Thus, it follows from Postulate~\ref{POT} and \eqref{eq:Deutsch-all}
that, in our world $\omega$, for each $n\in\N^+$ the state of
the $n$th composite system $\mathcal{S}+\mathcal{F}+\mathcal{D}$,
i.e., the $n$th copy of a composite system consisting of
the system $\mathcal{S}$ and the two observers $\mathcal{F}$ and $\mathcal{D}$,
immediately after the measurement $\mathcal{M}^\mathcal{D}$ is given by
\begin{equation*}
\begin{split}
  (M_{\alpha(n),\beta(n)}\ket{+})\otimes\ket{\Phi[\alpha(n),\beta(n)]}
  &=(M_{\alpha(n),+}\ket{+})\otimes\ket{\Phi^{\mathcal{F}}[\alpha(n)]}\otimes\ket{\Phi^{\mathcal{D}}[+]} \\
  &=c_{\alpha(n)}\ket{\alpha(n)}\otimes\ket{\Phi^{\mathcal{F}}[\alpha(n)]}\otimes\ket{\Phi^{\mathcal{D}}[+]},
\end{split}
\end{equation*}
up to the normalization factor,
where the last equality follows from \eqref{eq:DTE-Mkl=PkQl}
and \eqref{eq:DTE-k+=f1s2k0+k1}.
Hence, we have that \emph{the observer $\mathcal{D}$ always records the outcome $+$
after the measurement $\mathcal{M}^\mathcal{D}$ in every repetition
of the experiment}.

In addition,
since $\omega$ is Martin-L\"of $P$-random,
it follows from Theorem~\ref{contraction2} and \eqref{eq:Deutsch-finitepsP}
that
$\alpha$ is a Martin-L\"of $R$-random infinite binary sequence,
where $R$ is a finite probability space on $\{0,1\}$ such that
$R(0)=\abs{c_0}^2$ and $R(1)=\abs{c_1}^2$.
Therefore,
it follows from Theorem~\ref{FI} that for every $k\in\{0,1\}$ it holds that
\[
  \lim_{n\to\infty} \frac{N_k(\rest{\alpha}{n})}{n}
  =\abs{c_k}^2,
\]
where $N_k(\rest{\alpha}{n})$ denotes the number of the occurrences of $k$
in the prefix of $\alpha$ of length $n$.
Thus,
in our world $\omega$,
the following (i) and (ii) hold:
\begin{enumerate}
\item 
For each $k=0,1$, the observer $\mathcal{F}$ records the value $k$
immediately after the measurement $\mathcal{M}^\mathcal{D}$
in a proportion of $\abs{c_k}^2$ out of the infinite repetitions of the experiment.
\item The observer $\mathcal{D}$ always records the value $+$
immediately
after the measurement $\mathcal{M}^\mathcal{D}$
in every repetition of the experiment,
whichever value the observer $\mathcal{F}$ records in the
corresponding
repetition.
\end{enumerate}
We can then see that
the above situation of (i) and (ii) is different from both Case~1 and Case~2
in Section~\ref{sec:Deutsch's thought experiment},
although the statements of
both Case~1 and Case~2 in Section~\ref{sec:Deutsch's thought experiment} are
at least operationally vague
since they are described
in terms of the conventional quantum mechanics
where the operational characterization of the notion of probability is not given.

We can use Postulate~\ref{Recursive Use} to determine the states of
the composite system $\mathcal{S}+\mathcal{F}$
immediately after the measurement $\mathcal{M}^\mathcal{F}$
in our world $\omega$.
Note first that the condition~\eqref{eq:Ui-TSCP-MO} holds
in our setting of measurements developed above.
On the one hand, applying $U_\mathcal{F}\otimes I_{\mathcal{D}}$
to the initial state
$\ket{+}\otimes\ket{\Phi^{\mathcal{F}}_{\mathrm{init}}}\otimes\ket{\Phi^{\mathcal{D}}_{\mathrm{init}}}$ in the repeated once of the experiments
results in a state $\ket{\Psi^{\mathcal{F}}_{\mathrm{Total}}}$ given by
\begin{equation}\label{eq:DTE-virtual-state-Total-F}
\begin{split}
  \ket{\Psi^{\mathcal{F}}_{\mathrm{Total}}}
  &:=(U_\mathcal{F}\otimes I_{\mathcal{D}})(\ket{+}\otimes\ket{\Phi^{\mathcal{F}}_{\mathrm{init}}}\otimes\ket{\Phi^{\mathcal{D}}_{\mathrm{init}}})
  =\sum_{k=0,1}(M^{\mathcal{F}}_k\ket{+})\otimes\ket{\Phi^{\mathcal{F}}[k]}\otimes\ket{\Phi^{\mathcal{D}}_{\mathrm{init}}} \\
  &=(c_0\ket{0}\otimes\ket{\Phi^{\mathcal{F}}[0]}+c_1\ket{1}\otimes\ket{\Phi^{\mathcal{F}}[1]})\otimes\ket{\Phi^{\mathcal{D}}_{\mathrm{init}}} \\
  &=\ket{\Psi^{\mathcal{S}+\mathcal{F}}[+]}\otimes\ket{\Phi^{\mathcal{D}}_{\mathrm{init}}},
\end{split}
\end{equation}
where the second equality follows from \eqref{eq:UD-MF-S},
the third equality follows from \eqref{eq:DTE-MF0=00_MF1=11} and \eqref{eq:DTE-k+=f1s2k0+k1},
and the last equality follows from \eqref{eq:DTE-Def-kPS+F+}.
On the other hand,
it is easy to see that, according to Definition~\ref{def:unchanged-apparatus},
the final states of the observer $\mathcal{F}$ are not unchanged after the measurement by
the observer $\mathcal{D}$,
where ``apparatus'' in Definition~\ref{def:unchanged-apparatus} should read ``observer''.
Note also that each of the measurement operators $\{M^{\mathcal{F}}_0,M^{\mathcal{F}}_1\}$ and $\{M^{\mathcal{D}}_+,M^{\mathcal{D}}_-\}$ forms a PVM.
Thus, it follows from Postulate~\ref{Recursive Use}
that, in our world $\omega$, for each $n\in\N^+$ the state of
the $n$th composite system $\mathcal{S}+\mathcal{F}+\mathcal{D}$
immediately after the measurement $\mathcal{M}^\mathcal{F}$ is given by
\begin{equation}\label{eq:DTE-Proj-virtual-state-Total-F}
  P\left(\right)\ket{\Psi^{\mathcal{F}}_{\mathrm{Total}}},
\end{equation}
up to the normalization factor, where
$P\left(\right)$
equals
the identity operator acting on the state space
$\mathcal{H}_\mathcal{S}\otimes\mathcal{H}_\mathcal{F}\otimes\mathcal{H}_\mathcal{D}$
of the composite system $\mathcal{S}+\mathcal{F}+\mathcal{D}$.
Hence, it follows from
\eqref{eq:DTE-Proj-virtual-state-Total-F} and \eqref{eq:DTE-virtual-state-Total-F} that,
in our world $\omega$, for every $n\in\N^+$ 
the state of the composite system $\mathcal{S}+\mathcal{F}$
immediately after the measurement $\mathcal{M}^\mathcal{F}$ is given by
\begin{equation*}%
  \ket{\Psi^{\mathcal{S}+\mathcal{F}}[+]}
\end{equation*}
in the $n$th composite system $\mathcal{S}+\mathcal{F}+\mathcal{D}$.

\section{\boldmath Analysis of Deutsch's thought experiment, with the friend $\mathcal{F}$ being a mere quantum system only measured}
\label{sec:Analysis of DTE with a mere quantum system F}

In the previous section, we have treated the observer $\mathcal{F}$
as a measuring apparatus
in the analysis of Deutsch's thought experiment
in
our framework of  quantum mechanics based on the principle of typicality.
What happens
if
we regard the observer $\mathcal{F}$ as a mere quantum system
and not as a measurement apparatus
in the analysis of Deutsch's thought experiment in our framework?
We can make an analysis of this case still more easily by the principle of typicality,
as follows.
We do not need to use either Postulate~\ref{CF} or Postulate~\ref{Recursive Use} in this case.
Note that this case corresponds to
Case~2 in Section~\ref{sec:Deutsch's thought experiment}
in the conventional quantum mechanics.

In this case,
according to \eqref{eq:UD-MF-S},
the state of the composite system $\mathcal{S}+\mathcal{F}$
immediately after the `measurement' $\mathcal{M}^\mathcal{F}$ is given
by
$U_{\mathcal{F}}(\ket{+}\otimes\ket{\Phi^{\mathcal{F}}_\mathrm{init}})$.
Thus, the unitary operator $U_{\mathcal{D}}$ given by \eqref{eq:UD-MW-F} applying to
the initial state
$$(U_{\mathcal{F}}(\ket{+}\otimes\ket{\Phi^{\mathcal{F}}_\mathrm{init}}))
\otimes\ket{\Phi^{\mathcal{D}}_\mathrm{init}}$$
describes the \emph{repeated once} of the infinite repetition of the measurement
(experiment)
where the measurement $\mathcal{M}^\mathcal{D}$ is infinitely repeated.
Let $\Theta:=\{+,-\}$.
The application of $U_{\mathcal{D}}$ itself, of course, forms a \emph{single measurement},
which is described by
the measurement operators $\{M^{\mathcal{D}}_l\}_{l\in\Theta}$
and whose all possible outcomes form the set $\Theta$.

Hence, we can apply Definition~\ref{pmrpwst}
to this scenario of the setting of measurements.
Therefore, according to Definition~\ref{pmrpwst},
we can see that a \emph{world} is an infinite sequence over $\Theta$
and the probability measure induced by
the \emph{probability measure representation for the prefixes of worlds}
is a Bernoulli measure $\lambda_R$ on
$\Theta^\infty$,
where $R$ is a finite probability space on $\Theta$ such that
$R(l)$ is the square of the norm of the vector
\begin{equation*}
  (M^{\mathcal{D}}_l(U_{\mathcal{F}}(\ket{+}\otimes\ket{\Phi^{\mathcal{F}}_\mathrm{init}})))
  \otimes\ket{\Phi^{\mathcal{D}}[l]}
\end{equation*}
for every $l\in\Theta$.
Here $\Theta$ is the set of all possible records of
the observer $\mathcal{D}$
in the \emph{repeated once} of the measurements
(experiments).
Let us calculate the explicit form of $R(l)$.
Recall that
\begin{equation}\label{eq:DTE-UIk+okPFi=kPS+F+}
U_{\mathcal{F}}(\ket{+}\otimes\ket{\Phi^{\mathcal{F}}_\mathrm{init}})
=\ket{\Psi^{\mathcal{S}+\mathcal{F}}[+]},
\end{equation}
from \eqref{eq:UD-MF-S}, \eqref{eq:DTE-MF0=00_MF1=11}, \eqref{eq:DTE-k+=f1s2k0+k1},
and \eqref{eq:DTE-Def-kPS+F+}.
Thus,
using~\eqref{eq:DTE-Def-MW+_MW-} we have that
\begin{equation}\label{eq:Deutsch-finitepsP-semi}
  R(l)=\bra{\Psi^{\mathcal{S}+\mathcal{F}}[+]}{M^{\mathcal{D}}_l}^\dag M^{\mathcal{D}}_l\ket{\Psi^{\mathcal{S}+\mathcal{F}}[+]}\braket{\Phi^{\mathcal{D}}[l]}{\Phi^{\mathcal{D}}[l]}
  =\delta_{l,+}
\end{equation}
for each $l\in\Theta$.

Now, let us apply
Postulate~\ref{POT}, the \emph{principle of typicality},
to the setting of measurements
developed above.
Let $\omega$ be \emph{our world} in the infinite repetition of the measurement
(experiment)
in the above setting.
This $\omega$
is an infinite sequence over $\Theta$
consisting of records
in the observer $\mathcal{D}$
which is being generated by the infinite repetition of
the measurement $\mathcal{M}^\mathcal{D}$
described by the measurement operators $\{M^{\mathcal{D}}_l\}_{l\in\Theta}$
in the above setting.
Since the Bernoulli measure $\lambda_R$ on $\Theta^\infty$ is
the probability measure induced by the
probability
measure representation
for the prefixes of
worlds
in the above setting,
it follows from
Postulate~\ref{POT}
that \emph{$\omega$ is Martin-L\"of $R$-random}.

Then, since $\omega$ is Martin-L\"of $R$-random, it follows from \eqref{eq:Deutsch-finitepsP-semi} and
Corollary~\ref{cor:always-positive-probability}
that
$\omega$
consists only of $+$, i.e., $\omega=++++++\dotsc\dotsc$.
Thus, it follows from Postulate~\ref{POT} and \eqref{eq:UD-MW-F}
that, in our world $\omega$, for each $n\in\N^+$ the state of
the $n$th composite system $\mathcal{S}+\mathcal{F}+\mathcal{D}$,
i.e., the $n$th copy of a composite system consisting of
the system $\mathcal{S}$ and the `observer' $\mathcal{F}$ both being only measured,
and the real observer $\mathcal{D}$ measuring them,
immediately after the measurement $\mathcal{M}^\mathcal{D}$ is given by
\begin{align*}
  (M^{\mathcal{D}}_{\omega(n)}(U_{\mathcal{F}}(\ket{+}\otimes\ket{\Phi^{\mathcal{F}}_\mathrm{init}})))
  \otimes\ket{\Phi^{\mathcal{D}}[\omega(n)]}
  &=(M^{\mathcal{D}}_{+}(U_{\mathcal{F}}(\ket{+}\otimes\ket{\Phi^{\mathcal{F}}_\mathrm{init}})))
  \otimes\ket{\Phi^{\mathcal{D}}[+]} \\
  &=\ket{\Psi^{\mathcal{S}+\mathcal{F}}[+]}\otimes\ket{\Phi^{\mathcal{D}}[+]},
\end{align*}
where the last equality follows from
\eqref{eq:DTE-Def-MW+_MW-} and \eqref{eq:DTE-UIk+okPFi=kPS+F+}.
It follows that, in our world $\omega$,
the observer $\mathcal{D}$ always records
the value $+$
after the measurement $\mathcal{M}^\mathcal{D}$ in every repetition of the experiment.
Thus, this result
recovers in a refined manner
Case~2 in Section~\ref{sec:Deutsch's thought experiment}
which is described in terms of the conventional quantum mechanics.
Regarding the results of the measurement $\mathcal{M}^\mathcal{D}$ performed by the observer $\mathcal{D}$,
we have the same results as
in Section~\ref{sec:Analysis of Deutsch's thought experiment},
where both of the treatments are based on the principle of typicality.
The important point here is that
neither the result of Section~\ref{sec:Analysis of Deutsch's thought experiment}
nor the result of this section
corresponds to Case~1 in Section~\ref{sec:Deutsch's thought experiment},
which is described in terms of the conventional quantum mechanics.

\section{Wigner collaborates with Deutsch}
\label{sec:Wigner collaborates with Deutsch}

Based on the framework developed in Section~\ref{sec:confirming point},
we can analyze more complicated situations than
considered in the previous sections.
In this section, as such an example, we consider a combination of
Wigner's friend and Deutsch's thought experiment.
The system $\mathcal{S}$ and
the three observers $\mathcal{F}$, $\mathcal{W}$, and $\mathcal{D}$
participate in the experiment.
Each of three observers makes their measurements as before
but in the order of the measurement $\mathcal{M}^\mathcal{W}$ and
then the measurement $\mathcal{M}^\mathcal{D}$,
where the observer $\mathcal{D}$ do nothing to the observer $\mathcal{W}$.
We call this combination the \emph{Wigner-Deutsch collaboration}.
In this section, we describe the Wigner-Deutsch collaboration
in detail
in the terminology of the conventional quantum mechanics
from scratch.

Consider a single qubit system $\mathcal{S}$ with state space
$\mathcal{H}_\mathcal{S}$.
An observer $\mathcal{F}$ performs a measurement
$\mathcal{M}^\mathcal{F}$
described
by the measurement operators
$\{M^{\mathcal{F}}_0,M^{\mathcal{F}}_1\}$
over the system $\mathcal{S}$
(according to Postulate~\ref{Born-rule}),
where
\begin{equation}\label{eq:EDTE-MF0=00_MF1=11}
  M^{\mathcal{F}}_0:=\product{0}{0}\quad\text{and}\quad M^{\mathcal{F}}_1:=\product{1}{1},
\end{equation}
and $\ket{0}$ and $\ket{1}$
form
an orthonormal basis of
the state space $\mathcal{H}_\mathcal{S}$ of the
system $\mathcal{S}$.

Hereafter, we assume that
the observer $\mathcal{F}$ itself is
a quantum system with state space $\mathcal{H}_\mathcal{F}$.
Then we consider observers $\mathcal{W}$ and $\mathcal{D}$ who are on the outside of
the composite system $\mathcal{S}+\mathcal{F}$
consisting of the system $\mathcal{S}$ and the observer $\mathcal{F}$. 
The observers $\mathcal{W}$ and $\mathcal{D}$ together can also be regarded as
an
environment for
the composite system $\mathcal{S}+\mathcal{F}$.
Then, according to \eqref{single_measurement},
the measurement process of
the measurement
$\mathcal{M}^\mathcal{F}$
is described by a unitary operator $U_\mathcal{F}$
acting on
$\mathcal{H}_\mathcal{S}\otimes\mathcal{H}_\mathcal{F}$
such that
\begin{equation}\label{eq:EUD-MF-S}
  U_\mathcal{F}(\ket{\Psi}\otimes\ket{\Phi^{\mathcal{F}}_{\mathrm{init}}})
  =\sum_{k=0,1}(M^{\mathcal{F}}_k\ket{\Psi})\otimes\ket{\Phi^{\mathcal{F}}[k]}
\end{equation}
for every $\ket{\Psi}\in\mathcal{H}_\mathcal{S}$.
The unitary operator
$U_\mathcal{F}$ describes the interaction between the system $\mathcal{S}$ and
the observer $\mathcal{F}$ as a quantum system.
The vector $\ket{\Phi^{\mathcal{F}}_{\mathrm{init}}}\in\mathcal{H}_\mathcal{F}$ is
the initial state of the
observer
$\mathcal{F}$, and $\ket{\Phi^{\mathcal{F}}[k]}\in\mathcal{H}_\mathcal{F}$ is
a final state of the
observer
$\mathcal{F}$ for each $k=0,1$,
with $\braket{\Phi^{\mathcal{F}}[k]}{\Phi^{\mathcal{F}}[k']}=\delta_{k,k'}$.
For each $k=0,1$,
the state $\ket{\Phi^{\mathcal{F}}[k]}$ indicates that
\emph{the observer $\mathcal{F}$ records the value $k$}.
The unitary time-evolution \eqref{eq:EUD-MF-S} is the quantum mechanical description of
the measurement process of the measurement $\mathcal{M}^\mathcal{F}$
over
the composite system $\mathcal{S}+\mathcal{F}$,
from the point of view of the observers $\mathcal{W}$ and $\mathcal{D}$.

After the measurement $\mathcal{M}^\mathcal{F}$
performed by the observer $\mathcal{F}$ over the system $\mathcal{S}$,
the observer $\mathcal{W}$
performs a measurement $\mathcal{M}^\mathcal{W}$
described by the measurement operators
$\{M^{\mathcal{W}}_0,M^{\mathcal{W}}_1,M^{\mathcal{W}}_2\}$
over the system $\mathcal{F}$
(according to Postulate~\ref{Born-rule}),
where
\begin{equation}\label{eq:WDc-MW_0_1_2}
\begin{split}
  M^{\mathcal{W}}_0&:=\product{\Phi^{\mathcal{F}}[0]}{\Phi^{\mathcal{F}}[0]},\\
  M^{\mathcal{W}}_1&:=\product{\Phi^{\mathcal{F}}[1]}{\Phi^{\mathcal{F}}[1]},\\
  M^{\mathcal{W}}_2&:=\sqrt{I_\mathcal{F}-{M^{\mathcal{W}}_0}^\dag M^{\mathcal{W}}_0-{M^{\mathcal{W}}_1}^\dag M^{\mathcal{W}}_1}
  =I_\mathcal{F}-M^{\mathcal{W}}_0-M^{\mathcal{W}}_1,
\end{split}
\end{equation}
and $I_\mathcal{F}$
denotes
the identity operator acting on $\mathcal{H}_\mathcal{F}$.

Let $c_0$ and $c_1$ be arbitrary two non-zero complex numbers such that $\abs{c_0}^2+\abs{c_1}^2=1$.
We then define a state
$\ket{\Psi^{\mathcal{S}+\mathcal{F}}[+]}\in\mathcal{H}_\mathcal{S}\otimes\mathcal{H}_\mathcal{F}$
of the composite system $\mathcal{S}+\mathcal{F}$ by
\begin{equation}\label{eq:EDTE-Def-kPS+F+}
  \ket{\Psi^{\mathcal{S}+\mathcal{F}}[+]}:=c_0\ket{0}\otimes\ket{\Phi^{\mathcal{F}}[0]}+c_1\ket{1}\otimes\ket{\Phi^{\mathcal{F}}[1]}.
\end{equation}
After the measurement $\mathcal{M}^\mathcal{W}$
performed by the observer $\mathcal{W}$ on the observer $\mathcal{F}$,
the observer $\mathcal{D}$ performs a measurement $\mathcal{M}^\mathcal{D}$
described by the measurement operators $\{M^{\mathcal{D}}_+,M^{\mathcal{D}}_-\}$
over
the composite system $\mathcal{S}+\mathcal{F}$
(according to Postulate~\ref{Born-rule}),
where
\begin{equation}\label{eq:EDTE-Def-MW+_MW-}
\begin{split}
  M^{\mathcal{D}}_+&:=\product{\Psi^{\mathcal{F}+\mathcal{S}}[+]}{\Psi^{\mathcal{F}+\mathcal{S}}[+]},\\
  M^{\mathcal{D}}_-&:=\sqrt{I_{\mathcal{S}+\mathcal{F}}-{M^{\mathcal{D}}_+}^\dag M^{\mathcal{D}}_+}
  =I_{\mathcal{S}+\mathcal{F}}-M^{\mathcal{D}}_+,
\end{split}
\end{equation}
and $I_{\mathcal{S}+\mathcal{F}}$
denotes
the identity operator acting on $\mathcal{H}_\mathcal{S}\otimes\mathcal{H}_\mathcal{F}$.

Now, let us assume that
the system $\mathcal{S}$ is initially in a state $\ket{+}$ in the above setting,
where $\ket{+}$ is defined by
\begin{equation}\label{eq:EDTE-k+=f1s2k0+k1}
  \ket{+}:=c_0\ket{0}+c_1\ket{1}.
\end{equation}
Then, we have the following two cases,
depending on how we treat the measurement $\mathcal{M}^\mathcal{F}$:

\paragraph{Case 1.}

Applying Postulate~\ref{Born-rule} to the measurement $\mathcal{M}^\mathcal{F}$,
either the following two possibilities (i) or (ii) occurs
immediately after the measurement $\mathcal{M}^\mathcal{F}$:
\begin{enumerate}
\item The system $\mathcal{S}$ and the observer $\mathcal{F}$ are in the states
  $\ket{0}$ and $\ket{\Phi^{\mathcal{F}}[0]}$, respectively.
  Therefore,
  the state of the composite system $\mathcal{S}+\mathcal{F}$
  is
  $\ket{0}\otimes\ket{\Phi^{\mathcal{F}}[0]}$.
\item The system $\mathcal{S}$ and the observer $\mathcal{F}$ are in the states
  $\ket{1}$ and $\ket{\Phi^{\mathcal{F}}[1]}$, respectively.
  Therefore,
  the state of the composite system
  $\mathcal{S}+\mathcal{F}$
  is
  $\ket{1}\otimes\ket{\Phi^{\mathcal{F}}[1]}$.
\end{enumerate}
Here, the possibilities (i) and (ii) occur with probabilities $\abs{c_0}^2$ and $\abs{c_1}^2$, respectively.
Then, applying Postulate~\ref{Born-rule} to the subsequent measurement $\mathcal{M}^\mathcal{W}$,
we see that the
measurement $\mathcal{M}^\mathcal{W}$ gives
the outcome $0$ with certainty in the case of the occurrence of the possibility~(i),
and it gives the outcome $1$ with certainty in the case of the occurrence of the possibility~(ii).
The
measurement $\mathcal{M}^\mathcal{W}$ does not change the state of
the composite system $\mathcal{S}+\mathcal{F}$,
whichever of the two possibilities (i) or (ii) above occurs
in the measurement $\mathcal{M}^\mathcal{F}$.
Then, applying Postulate~\ref{Born-rule} to
the further subsequent measurement $\mathcal{M}^\mathcal{D}$,
we see that
the
measurement $\mathcal{M}^\mathcal{D}$ gives the outcomes $+$ with probabilities $\abs{c_0}^2$ and $\abs{c_1}^2$
if the possibilities (i) and (ii) above occur
in the measurement $\mathcal{M}^\mathcal{F}$, respectively.
In this Case~1,
we can consider the joint probability $P_{\mathrm{c1}}(k,l,m)$
that the measurement $\mathcal{M}^\mathcal{F}$ gives the outcome $k$ and then
the measurement $\mathcal{M}^\mathcal{W}$ gives the outcome $l$ and then
the measurement $\mathcal{M}^\mathcal{D}$ gives the outcome $m$
for each $k=0,1$, $l=0,1,2$, and $m=+,-$.
We then have that
$P_{\mathrm{c1}}(k,l,+)=\abs{c_k}^2\delta_{k,l}\abs{c_k}^2$ and
$P_{\mathrm{c1}}(k,l,-)=\abs{c_k}^2\delta_{k,l}\abs{c_{1-k}}^2$
for every $k=0,1$ and $l=0,1,2$.
That is, we have that
\[
  P_{\mathrm{c1}}(k,l,m)=\abs{c_k}^2\delta_{k,l}(\delta_{m,+}\abs{c_k}^2+\delta_{m,-}\abs{c_{1-k}}^2)
\]
for every $k=0,1$, $l=0,1,2$, and $m=+,-$.
Then, whichever of the two possibilities (i) or (ii) above occurs in the measurement $\mathcal{M}^\mathcal{F}$, the
state of the composite system $\mathcal{S}+\mathcal{F}$
immediately after the measurement $\mathcal{M}^\mathcal{D}$ is given by
\[
  \ket{\Psi^{\mathcal{S}+\mathcal{F}}[m]}
\]
if the measurement $\mathcal{M}^\mathcal{D}$ gives an outcome $m\in\{+,-\}$,
where the state
$\ket{\Psi^{\mathcal{S}+\mathcal{F}}[-]}\in\mathcal{H}_\mathcal{S}\otimes\mathcal{H}_\mathcal{F}$
of the composite system $\mathcal{S}+\mathcal{F}$ is defined by
\begin{equation}\label{eq:EDTE-Def-kPS+F-}
  \ket{\Psi^{\mathcal{S}+\mathcal{F}}[-]}:=\overline{c_1}\ket{0}\otimes\ket{\Phi^{\mathcal{F}}[0]}-\overline{c_0}\ket{1}\otimes\ket{\Phi^{\mathcal{F}}[1]}.
\end{equation}

\paragraph{Case 2.}

According to \eqref{eq:EUD-MF-S},
immediately after the measurement $\mathcal{M}^\mathcal{F}$,
the state of the composite system $\mathcal{S}+\mathcal{F}$ is
\begin{equation*}%
  \ket{\Psi^{\mathcal{S}+\mathcal{F}}[+]}.
\end{equation*}
Then,
applying Postulate~\ref{Born-rule} to
the subsequent measurement $\mathcal{M}^\mathcal{W}$,
we see that the measurement $\mathcal{M}^\mathcal{W}$ results in
either the following two possibilities (i) or (ii):
\begin{enumerate}
\item
 The measurement $\mathcal{M}^\mathcal{W}$ gives the outcome $0$,
  and the post-measurement state of the composite system $\mathcal{S}+\mathcal{F}$ is
  $\ket{0}\otimes\ket{\Phi^{\mathcal{F}}[0]}$.
\item
 The measurement $\mathcal{M}^\mathcal{W}$ gives the outcome $1$,
  and the post-measurement state of the composite system $\mathcal{S}+\mathcal{F}$ is
  $\ket{1}\otimes\ket{\Phi^{\mathcal{F}}[1]}$.
\end{enumerate}
Here, the possibilities (i) and (ii) occur with probabilities $\abs{c_0}^2$ and $\abs{c_1}^2$, respectively.
Then, applying Postulate~\ref{Born-rule} to
the further subsequent measurement $\mathcal{M}^\mathcal{D}$,
we see that
the measurement $\mathcal{M}^\mathcal{D}$ gives the outcomes $+$ with probabilities $\abs{c_0}^2$ and $\abs{c_1}^2$
if the possibilities (i) and (ii) above occur
in the measurement $\mathcal{M}^\mathcal{W}$, respectively.
Thus,
in this Case~2,
we can consider the joint probability $P_{\mathrm{c2}}(l,m)$
that the measurement $\mathcal{M}^\mathcal{W}$ gives the outcome $l$ and then
the measurement $\mathcal{M}^\mathcal{D}$ gives the outcome $m$
for each $l=0,1,2$ and $m=+,-$.
We then have that
$P_{\mathrm{c2}}(l,+)=\abs{c_l}^2\abs{c_l}^2$ and
$P_{\mathrm{c2}}(l,-)=\abs{c_l}^2\abs{c_{1-l}}^2$
for every $l=0,1$.
That is, we have that
\[
  P_{\mathrm{c2}}(l,m)=\abs{c_l}^2(\delta_{m,+}\abs{c_l}^2+\delta_{m,-}\abs{c_{1-l}}^2)
\]
for every $l=0,1$ and $m=+,-$.
Moreover, we have that $P_{\mathrm{c2}}(2,m)=0$ for every $m=+,-$.
On the other hand, the notion of the probability that
the measurement $\mathcal{M}^\mathcal{F}$ gives a certain outcome
is meaningless in this Case~2.
In the same manner as in Case~1, whichever of the two possibilities (i) or (ii) above occurs in the measurement $\mathcal{M}^\mathcal{W}$, the
state of the composite system $\mathcal{S}+\mathcal{F}$
immediately after the measurement $\mathcal{M}^\mathcal{D}$ is given by
\[
  \ket{\Psi^{\mathcal{S}+\mathcal{F}}[m]}
\]
if the measurement $\mathcal{M}^\mathcal{D}$ gives an outcome $m\in\{+,-\}$.

\bigskip\smallskip

Thus,
if we ignore the measurement results of the measurement $\mathcal{M}^\mathcal{F}$
in Case~1,
there is no difference between Case~1 and Case~2
regarding the measurements $\mathcal{M}^\mathcal{W}$ and $\mathcal{M}^\mathcal{D}$.
Actually, we can see that
\[
  \sum_{k=0,1}P_{\mathrm{c1}}(k,l,m)=\abs{c_l}^2(\delta_{m,+}\abs{c_l}^2+\delta_{m,-}\abs{c_{1-l}}^2)=P_{\mathrm{c2}}(l,m)
\]
for every $l=0,1$ and $m=+,-$,
and that
\[
  \sum_{k=0,1}P_{\mathrm{c1}}(k,2,m)=0=P_{\mathrm{c2}}(2,m)
\]
for every $m=+,-$.

\section{Analysis of the Wigner-Deutsch collaboration based on the principle of typicality}
\label{sec:Analysis of the Wigner-Deutsch collaboration}

In this section, we make an analysis of
the Wigner-Deutsch collaboration,
which is
described in the preceding section
in the terminology of the conventional quantum mechanics,
in terms of our framework of quantum mechanics based on
Postulates~\ref{POT}, \ref{CF}, and \ref{Recursive Use},
together with Postulates~\ref{state_space}, \ref{composition}, and \ref{evolution}.

To proceed this program,
first of all
\emph{we have to implement everything, i.e.,
the
three
measurements $\mathcal{M}^\mathcal{F}$, $\mathcal{M}^\mathcal{W}$,
and $\mathcal{M}^\mathcal{D}$
in the setting of the Wigner-Deutsch collaboration,
by unitary time-evolution}.
The measurement process of the measurement
$\mathcal{M}^\mathcal{F}$
is already described by the unitary operator $U_\mathcal{F}$
given in \eqref{eq:EUD-MF-S}.
For applying the principle of typicality,
the observers $\mathcal{W}$ and $\mathcal{D}$
are regarded as quantum systems
with state spaces $\mathcal{H}_\mathcal{W}$ and $\mathcal{H}_\mathcal{D}$, respectively.
Then, on the one hand, according to \eqref{single_measurement},
the measurement process of
the measurement
$\mathcal{M}^\mathcal{W}$
is described by a unitary operator $U_\mathcal{W}$
acting on
$\mathcal{H}_\mathcal{F}\otimes\mathcal{H}_\mathcal{W}$
such that
\begin{equation}\label{eq:UE-MW-F}
  U_\mathcal{W}(\ket{\Psi}\otimes\ket{\Phi^{\mathcal{W}}_{\mathrm{init}}})
  =\sum_{l=0,1,2}(M^{\mathcal{W}}_l\ket{\Psi})\otimes\ket{\Phi^{\mathcal{W}}[l]}
\end{equation}
for every $\ket{\Psi}\in\mathcal{H}_\mathcal{F}$.
The unitary operator
$U_\mathcal{W}$ describes the interaction
between the observer $\mathcal{F}$ and the observer $\mathcal{W}$
where both the observers are regarded as a quantum system.
The vector $\ket{\Phi^{\mathcal{W}}_{\mathrm{init}}}\in\mathcal{H}_\mathcal{W}$ is
the initial state of the
observer
$\mathcal{W}$, and $\ket{\Phi^{\mathcal{W}}[l]}\in\mathcal{H}_\mathcal{W}$ is
a final state of the
observer
$\mathcal{W}$ for each $l=0,1,2$,
with $\braket{\Phi^{\mathcal{W}}[l]}{\Phi^{\mathcal{W}}[l']}=\delta_{l,l'}$.
For each $l=0,1$,
the state $\ket{\Phi^{\mathcal{W}}[l]}$ indicates that
\emph{the observer $\mathcal{W}$ knows that
the observer $\mathcal{F}$ records the value $l$}.
On the other hand,
according to \eqref{single_measurement} again,
the measurement process of
the measurement
$\mathcal{M}^\mathcal{D}$
is described by a unitary operator $U_\mathcal{D}$
acting on
$\mathcal{H}_\mathcal{S}\otimes\mathcal{H}_\mathcal{F}\otimes\mathcal{H}_\mathcal{W}\otimes\mathcal{H}_\mathcal{D}$
such that
\begin{equation}\label{eq:UED-MW-F}
  U_\mathcal{D}(\ket{\Psi}\otimes\ket{\Phi}\otimes\ket{\Phi^{\mathcal{D}}_{\mathrm{init}}})
  =\sum_{m=+,-}(M^{\mathcal{D}}_m\ket{\Psi})\otimes\ket{\Phi}\otimes\ket{\Phi^{\mathcal{D}}[m]}
\end{equation}
for every $\ket{\Psi}\in\mathcal{H}_\mathcal{S}\otimes\mathcal{H}_\mathcal{F}$
and $\ket{\Phi}\in\mathcal{H}_\mathcal{W}$.
The unitary operator
$U_\mathcal{D}$ describes
the interaction
between
the composite system $\mathcal{S}+\mathcal{F}$
and the observer $\mathcal{D}$
where both
are regarded as a quantum system.
The vector $\ket{\Phi^{\mathcal{D}}_{\mathrm{init}}}\in\mathcal{H}_\mathcal{D}$ is
the initial state of the
observer
$\mathcal{D}$, and $\ket{\Phi^{\mathcal{D}}[m]}\in\mathcal{H}_\mathcal{D}$ is
a final state of the
observer
$\mathcal{D}$ for each $m=+,-$,
with $\braket{\Phi^{\mathcal{D}}[m]}{\Phi^{\mathcal{D}}[m']}=\delta_{m,m'}$.
For each $m=+,-$, the state $\ket{\Phi^{\mathcal{D}}[m]}$ indicates that
\emph{the observer $\mathcal{D}$ records the value $m$}.

Now,
for each $\ket{\Psi}\in\mathcal{H}_\mathcal{S}$,
the sequential applications of $U_{\mathcal{F}}$ and $U_{\mathcal{W}}$
to the state $\ket{\Psi}\otimes\ket{\Phi^{\mathcal{F}}_{\mathrm{init}}}\otimes\ket{\Phi^{\mathcal{W}}_{\mathrm{init}}}$
of the composite system $\mathcal{S}+\mathcal{F}+\mathcal{W}$,
consisting of the system $\mathcal{S}$ and the two observers $\mathcal{F}$ and $\mathcal{W}$,
result in
the state:
\begin{equation}\label{eq:ED-WignerFriend-all}
\begin{split}
  &((I_\mathcal{S}\otimes U_{\mathcal{W}})\circ(U_{\mathcal{F}}\otimes I_\mathcal{W}))
  (\ket{\Psi}\otimes\ket{\Phi^{\mathcal{F}}_{\mathrm{init}}}\otimes\ket{\Phi^{\mathcal{W}}_{\mathrm{init}}}) \\
  &=\sum_{k=0,1}(M^{\mathcal{F}}_k\ket{\Psi})\otimes
      \sum_{l=0,1,2}(M^{\mathcal{W}}_l\ket{\Phi^{\mathcal{F}}[k]})
      \otimes\ket{\Phi^{\mathcal{W}}[l]} \\
  &=\sum_{k=0,1}(M^{\mathcal{F}}_k\ket{\Psi})\otimes\ket{\Phi^{\mathcal{F}}[k]}
      \otimes\ket{\Phi^{\mathcal{W}}[k]},
\end{split}
\end{equation}
where $I_\mathcal{S}$ and $I_\mathcal{W}$ denote the identity operators acting on
$\mathcal{H}_\mathcal{S}$ and $\mathcal{H}_\mathcal{W}$, respectively.
Therefore, using \eqref{eq:EDTE-MF0=00_MF1=11} and \eqref{eq:UED-MW-F} we have that
\begin{equation}\label{eq:WDc-ketk}
\begin{split}
  &(U_{\mathcal{D}}\circ(I_\mathcal{S}\otimes U_{\mathcal{W}}\otimes I_\mathcal{D})\circ(U_{\mathcal{F}}\otimes I_\mathcal{W}\otimes I_\mathcal{D}))
  (\ket{k}\otimes\ket{\Phi^{\mathcal{F}}_{\mathrm{init}}}\otimes\ket{\Phi^{\mathcal{W}}_{\mathrm{init}}}\otimes\ket{\Phi^{\mathcal{D}}_{\mathrm{init}}}) \\
  &=U_{\mathcal{D}}(\ket{k}\otimes\ket{\Phi^{\mathcal{F}}[k]}
      \otimes\ket{\Phi^{\mathcal{W}}[k]}\otimes\ket{\Phi^{\mathcal{D}}_{\mathrm{init}}}) \\
  &=\sum_{m=+,-}(M^{\mathcal{D}}_m(\ket{k}\otimes\ket{\Phi^{\mathcal{F}}[k]}))\otimes\ket{\Phi^{\mathcal{W}}[k]}\otimes\ket{\Phi^{\mathcal{D}}[m]}
\end{split}
\end{equation}
for each $k=0,1$,
where $I_\mathcal{D}$ denotes the identity operator acting on $\mathcal{H}_\mathrm{D}$.
On the other hand,
based on
\eqref{eq:EDTE-Def-kPS+F+} and \eqref{eq:EDTE-Def-kPS+F-} we note that
\begin{align}
  \ket{0}\otimes\ket{\Phi^{\mathcal{F}}[0]}
  &=\overline{c_0}\ket{\Psi^{\mathcal{S}+\mathcal{F}}[+]}+c_1\ket{\Psi^{\mathcal{S}+\mathcal{F}}[-]}, \label{eq:EDTE-kPS+F-inv-0} \\
  \ket{1}\otimes\ket{\Phi^{\mathcal{F}}[1]}
  &=\overline{c_1}\ket{\Psi^{\mathcal{S}+\mathcal{F}}[+]}-c_0\ket{\Psi^{\mathcal{S}+\mathcal{F}}[-]}. \label{eq:EDTE-kPS+F-inv-1}
\end{align}
Based on
\eqref{eq:EDTE-Def-MW+_MW-}, \eqref{eq:EDTE-Def-kPS+F+}, and \eqref{eq:EDTE-Def-kPS+F-} we also note that
\begin{equation}\label{eq:WDc-MDm-kPS+F=d-kPS+F}
  M^{\mathcal{D}}_m\ket{\Psi^{\mathcal{S}+\mathcal{F}}[m']}=\delta_{m,m'}\ket{\Psi^{\mathcal{S}+\mathcal{F}}[m']}
\end{equation}
for every $m,m'=+,-$.
Thus,
we have that
\begin{equation}\label{eq:Wigner-Deutsch-ket0}
\begin{split}
  &(U_{\mathcal{D}}\circ(I_\mathcal{S}\otimes U_{\mathcal{W}}\otimes I_\mathcal{D})\circ(U_{\mathcal{F}}\otimes I_\mathcal{W}\otimes I_\mathcal{D}))
  (\ket{0}\otimes\ket{\Phi^{\mathcal{F}}_{\mathrm{init}}}\otimes\ket{\Phi^{\mathcal{W}}_{\mathrm{init}}}\otimes\ket{\Phi^{\mathcal{D}}_{\mathrm{init}}}) \\
  &=\overline{c_0}\ket{\Psi^{\mathcal{S}+\mathcal{F}}[+]}\otimes\ket{\Phi^{\mathcal{W}}[0]}\otimes\ket{\Phi^{\mathcal{D}}[+]}+c_1\ket{\Psi^{\mathcal{S}+\mathcal{F}}[-]}\otimes\ket{\Phi^{\mathcal{W}}[0]}\otimes\ket{\Phi^{\mathcal{D}}[-]} \\
  &=
  \overline{c_0}c_0\ket{0}\otimes\ket{\Phi^{\mathcal{F}}[0]}\otimes\ket{\Phi^{\mathcal{W}}[0]}\otimes\ket{\Phi^{\mathcal{D}}[+]}
  +\overline{c_0}c_1\ket{1}\otimes\ket{\Phi^{\mathcal{F}}[1]}\otimes\ket{\Phi^{\mathcal{W}}[0]}\otimes\ket{\Phi^{\mathcal{D}}[+]} \\
  &\quad+c_1\overline{c_1}\ket{0}\otimes\ket{\Phi^{\mathcal{F}}[0]}\otimes\ket{\Phi^{\mathcal{W}}[0]}\otimes\ket{\Phi^{\mathcal{D}}[-]}
  -c_1\overline{c_0}\ket{1}\otimes\ket{\Phi^{\mathcal{F}}[1]}\otimes\ket{\Phi^{\mathcal{W}}[0]}\otimes\ket{\Phi^{\mathcal{D}}[-]},
\end{split}
\end{equation}
where the first equality follows from \eqref{eq:WDc-ketk},
\eqref{eq:EDTE-kPS+F-inv-0}, and \eqref{eq:WDc-MDm-kPS+F=d-kPS+F},
and the last equality follows from \eqref{eq:EDTE-Def-kPS+F+} and \eqref{eq:EDTE-Def-kPS+F-}.
Similarily,
we have that
\begin{equation}\label{eq:Wigner-Deutsch-ket1}
\begin{split}
  &(U_{\mathcal{D}}\circ(I_\mathcal{S}\otimes U_{\mathcal{W}}\otimes I_\mathcal{D})\circ(U_{\mathcal{F}}\otimes I_\mathcal{W}\otimes I_\mathcal{D}))
  (\ket{1}\otimes\ket{\Phi^{\mathcal{F}}_{\mathrm{init}}}\otimes\ket{\Phi^{\mathcal{W}}_{\mathrm{init}}}\otimes\ket{\Phi^{\mathcal{D}}_{\mathrm{init}}})\\
  &=\overline{c_1}\ket{\Psi^{\mathcal{S}+\mathcal{F}}[+]}\otimes\ket{\Phi^{\mathcal{W}}[1]}\otimes\ket{\Phi^{\mathcal{D}}[+]}-c_0\ket{\Psi^{\mathcal{S}+\mathcal{F}}[-]}\otimes\ket{\Phi^{\mathcal{W}}[1]}\otimes\ket{\Phi^{\mathcal{D}}[-]} \\
  &=
  \overline{c_1}c_0\ket{0}\otimes\ket{\Phi^{\mathcal{F}}[0]}\otimes\ket{\Phi^{\mathcal{W}}[1]}\otimes\ket{\Phi^{\mathcal{D}}[+]}
  +\overline{c_1}c_1\ket{1}\otimes\ket{\Phi^{\mathcal{F}}[1]}\otimes\ket{\Phi^{\mathcal{W}}[1]}\otimes\ket{\Phi^{\mathcal{D}}[+]} \\
  &\quad-c_0\overline{c_1}\ket{0}\otimes\ket{\Phi^{\mathcal{F}}[0]}\otimes\ket{\Phi^{\mathcal{W}}[1]}\otimes\ket{\Phi^{\mathcal{D}}[-]}
  +c_0\overline{c_0}\ket{1}\otimes\ket{\Phi^{\mathcal{F}}[1]}\otimes\ket{\Phi^{\mathcal{W}}[1]}\otimes\ket{\Phi^{\mathcal{D}}[-]},
\end{split}
\end{equation}
where the first equality follows from \eqref{eq:WDc-ketk},
\eqref{eq:EDTE-kPS+F-inv-1}, and \eqref{eq:WDc-MDm-kPS+F=d-kPS+F},
and the last equality follows from \eqref{eq:EDTE-Def-kPS+F+} and \eqref{eq:EDTE-Def-kPS+F-}.
Then, on the one hand, we set
\begin{equation}\label{eq:WDc-def-P0-P1}
  P_0:=\product{0}{0}\quad\text{and}\quad P_1:=\product{1}{1}.
\end{equation}
On the other hand, we define $\ket{-}\in\mathcal{H}_{\mathcal{S}}$ by
\begin{equation}\label{eq:WDc-def-ket-}
  \ket{-}:=\overline{c_1}\ket{0}-\overline{c_0}\ket{1},
\end{equation}
and then we set
\begin{equation}\label{eq:WDc-def-Q+-Q-}
  Q_+:=\product{+}{+}\quad\text{and}\quad Q_-:=\product{-}{-}.
\end{equation}
Obviously, we then see that $\ket{+}$ and $\ket{-}$ form an orthonormal basis of $\mathcal{H}_\mathcal{S}$, and that
\begin{equation}\label{eq:WDc-P0+P1=I-Q++Q-=I}
  P_0+P_1=I_\mathcal{S}\quad\text{and}\quad Q_+ +Q_- =I_\mathcal{S}.
\end{equation}
It follows from \eqref{eq:WDc-def-Q+-Q-}, \eqref{eq:EDTE-k+=f1s2k0+k1}, and \eqref{eq:WDc-def-ket-} that
\begin{equation}\label{eq:WDc-kkQ+kl-kkQ-kl}
  \bra{k}Q_+\ket{l}=c_k\overline{c_l}\quad\text{and}\quad\bra{k}Q_-\ket{l}=(-1)^{k+l}\overline{c_{1-k}}c_{1-l}
\end{equation}
for every $k=0,1$ and $l=0,1$.
That is, we have that
\begin{equation}\label{eq:WDc-kkQmkl}
  \bra{k}Q_m\ket{l}=\delta_{m,+}c_k\overline{c_l}+\delta_{m,-}(-1)^{k+l}\overline{c_{1-k}}c_{1-l}
\end{equation}
for every $k=0,1$, $l=0,1$, and $m=+,-$.
Thus,
using \eqref{eq:Wigner-Deutsch-ket0}, \eqref{eq:WDc-kkQ+kl-kkQ-kl}, and \eqref{eq:WDc-def-P0-P1} we have that
\begin{equation}\label{eq:Wigner-Deutsch-ket0-2}
\begin{split}
  &(U_{\mathcal{D}}\circ(I_\mathcal{S}\otimes U_{\mathcal{W}}\otimes I_\mathcal{D})\circ(U_{\mathcal{F}}\otimes I_\mathcal{W}\otimes I_\mathcal{D}))
  (\ket{0}\otimes\ket{\Phi^{\mathcal{F}}_{\mathrm{init}}}\otimes\ket{\Phi^{\mathcal{W}}_{\mathrm{init}}}\otimes\ket{\Phi^{\mathcal{D}}_{\mathrm{init}}}) \\
  &=
  P_0Q_+\ket{0}\otimes\ket{\Phi^{\mathcal{F}}[0]}\otimes\ket{\Phi^{\mathcal{W}}[0]}\otimes\ket{\Phi^{\mathcal{D}}[+]}
  +P_1Q_+\ket{0}\otimes\ket{\Phi^{\mathcal{F}}[1]}\otimes\ket{\Phi^{\mathcal{W}}[0]}\otimes\ket{\Phi^{\mathcal{D}}[+]} \\
  &\quad+P_0Q_-\ket{0}\otimes\ket{\Phi^{\mathcal{F}}[0]}\otimes\ket{\Phi^{\mathcal{W}}[0]}\otimes\ket{\Phi^{\mathcal{D}}[-]}
  +P_1Q_-\ket{0}\otimes\ket{\Phi^{\mathcal{F}}[1]}\otimes\ket{\Phi^{\mathcal{W}}[0]}\otimes\ket{\Phi^{\mathcal{D}}[-]}.
\end{split}
\end{equation}
Similarly, using \eqref{eq:Wigner-Deutsch-ket1}, \eqref{eq:WDc-kkQ+kl-kkQ-kl}, and \eqref{eq:WDc-def-P0-P1} we have that
\begin{equation}\label{eq:Wigner-Deutsch-ket1-2}
\begin{split}
  &(U_{\mathcal{D}}\circ(I_\mathcal{S}\otimes U_{\mathcal{W}}\otimes I_\mathcal{D})\circ(U_{\mathcal{F}}\otimes I_\mathcal{W}\otimes I_\mathcal{D}))
  (\ket{1}\otimes\ket{\Phi^{\mathcal{F}}_{\mathrm{init}}}\otimes\ket{\Phi^{\mathcal{W}}_{\mathrm{init}}}\otimes\ket{\Phi^{\mathcal{D}}_{\mathrm{init}}})\\
  &=
  P_0Q_+\ket{1}\otimes\ket{\Phi^{\mathcal{F}}[0]}\otimes\ket{\Phi^{\mathcal{W}}[1]}\otimes\ket{\Phi^{\mathcal{D}}[+]}
  +P_1Q_+\ket{1}\otimes\ket{\Phi^{\mathcal{F}}[1]}\otimes\ket{\Phi^{\mathcal{W}}[1]}\otimes\ket{\Phi^{\mathcal{D}}[+]} \\
  &\quad +P_0Q_-\ket{1}\otimes\ket{\Phi^{\mathcal{F}}[0]}\otimes\ket{\Phi^{\mathcal{W}}[1]}\otimes\ket{\Phi^{\mathcal{D}}[-]}
  +P_1Q_-\ket{1}\otimes\ket{\Phi^{\mathcal{F}}[1]}\otimes\ket{\Phi^{\mathcal{W}}[1]}\otimes\ket{\Phi^{\mathcal{D}}[-]}.
\end{split}
\end{equation}
Hence, it follows from \eqref{eq:Wigner-Deutsch-ket0-2}, \eqref{eq:Wigner-Deutsch-ket1-2}, and \eqref{eq:WDc-def-P0-P1} that
\begin{equation}\label{eq:Wigner-Deutsch-pre-all}
\begin{split}
  &(U_{\mathcal{D}}\circ(I_\mathcal{S}\otimes U_{\mathcal{W}}\otimes I_\mathcal{D})\circ(U_{\mathcal{F}}\otimes I_\mathcal{W}\otimes I_\mathcal{D}))
  (\ket{\Psi}\otimes\ket{\Phi^{\mathcal{F}}_{\mathrm{init}}}\otimes\ket{\Phi^{\mathcal{W}}_{\mathrm{init}}}\otimes\ket{\Phi^{\mathcal{D}}_{\mathrm{init}}})\\
  &=\sum_{k=0,1}\sum_{l=0,1}\sum_{m=+,-}
  (P_k Q_m P_l\ket{\Psi})\otimes\ket{\Phi^{\mathcal{F}}[k]}\otimes\ket{\Phi^{\mathcal{W}}[l]}\otimes\ket{\Phi^{\mathcal{D}}[m]}
\end{split}
\end{equation}
for every $\ket{\Psi}\in\mathcal{H}_\mathcal{S}$.
Actually, we can see from \eqref{eq:Wigner-Deutsch-ket0-2} and \eqref{eq:WDc-def-P0-P1} that the equality \eqref{eq:Wigner-Deutsch-pre-all} holds certainly
in the case of $\ket{\Psi}=\ket{0}$,
and
we can
also
see from \eqref{eq:Wigner-Deutsch-ket1-2} and \eqref{eq:WDc-def-P0-P1} that the equality \eqref{eq:Wigner-Deutsch-pre-all} holds certainly
in the case of $\ket{\Psi}=\ket{1}$.
Thus,
since $\ket{0}$ and $\ket{1}$ form an orthonormal basis of
$\mathcal{H}_\mathcal{S}$,
due to linearity we have that \eqref{eq:Wigner-Deutsch-pre-all} holds for an arbitrary $\ket{\Psi}\in\mathcal{H}_\mathcal{S}$,
as desired.

We denote the set $\{0,1\}\times\{0,1,2\}\times\{+,-\}$ by $\Omega$,
and we define a finite
collection
$\{M_{k,l,m}\}_{(k,l,m)\in\Omega}$ of operators acting on
the state space $\mathcal{H}_\mathcal{S}$ by
\begin{equation}\label{eq:EDTE-Mkl=PkQl}
  M_{k,l,m}:=P_k Q_m P_l
\end{equation}
if $l\neq 2$ and
\begin{equation}\label{eq:EDTE-Mkl=0}
M_{k,l,m}:=0
\end{equation}
otherwise.
It follows from \eqref{eq:WDc-P0+P1=I-Q++Q-=I} that
\begin{equation*}
  \sum_{(k,l,m)\in\Omega} M_{k,l,m}^\dag M_{k,l,m}
  =\sum_{l=0,1}\,\sum_{m=+,-}\,\sum_{k=0,1} P_l Q_m P_k Q_m P_l
  =\sum_{l=0,1}\,\sum_{m=+,-} P_l Q_m P_l
  =\sum_{l=0,1} P_l
  =I_\mathcal{S}.
\end{equation*}
Thus, the finite collection $\{M_{k,l,m}\}_{(k,l,m)\in\Omega}$ satisfies the \emph{completeness equation}, and therefore forms \emph{measurement operators}.
We define a unitary operator $U_{\mathrm{whole}}$ acting on $\mathcal{H}_\mathcal{S}\otimes\mathcal{H}_\mathcal{F}\otimes\mathcal{H}_\mathcal{W}\otimes\mathcal{H}_\mathcal{D}$
by
\[
  U_{\mathrm{whole}}:=U_{\mathcal{D}}\circ(I_\mathcal{S}\otimes U_{\mathcal{W}}\otimes I_\mathcal{D})\circ(U_{\mathcal{F}}\otimes I_\mathcal{W}\otimes I_\mathcal{D}).
\]
Then, using \eqref{eq:Wigner-Deutsch-pre-all}, \eqref{eq:EDTE-Mkl=PkQl}, and \eqref{eq:EDTE-Mkl=0} we see that, as a whole,
the sequential applications of $U_{\mathcal{F}}$, $U_{\mathcal{W}}$, and $U_{\mathcal{D}}$
to a state $\ket{\Psi}\otimes\ket{\Phi_{\mathrm{init}}}$ of the composite system $\mathcal{S}+\mathcal{F}+\mathcal{W}+\mathcal{D}$
result in:
\begin{equation}\label{eq:Wigner-Deutsch-all}
\begin{split}
  U_{\mathrm{whole}}(\ket{\Psi}\otimes\ket{\Phi_{\mathrm{init}}})
  &=(U_{\mathcal{D}}\circ(I_\mathcal{S}\otimes U_{\mathcal{W}}\otimes I_\mathcal{D})\circ(U_{\mathcal{F}}\otimes I_\mathcal{W}\otimes I_\mathcal{D}))
    (\ket{\Psi}\otimes\ket{\Phi_{\mathrm{init}}})\\
  &=\sum_{(k,l,m)\in\Omega}(M_{k,l,m}\ket{\Psi})\otimes\ket{\Phi[k,l,m]}
\end{split}
\end{equation}
for each state $\ket{\Psi}\in\mathcal{H}_\mathcal{S}$ of the system $\mathcal{S}$,
where $\ket{\Phi_{\mathrm{init}}}:=
\ket{\Phi^{\mathcal{F}}_{\mathrm{init}}}\otimes\ket{\Phi^{\mathcal{W}}_{\mathrm{init}}}\otimes\ket{\Phi^{\mathcal{D}}_{\mathrm{init}}}$
and
$\ket{\Phi[k,l,m]}:=\ket{\Phi^{\mathcal{F}}[k]}\otimes\ket{\Phi^{\mathcal{W}}[l]}\otimes\ket{\Phi^{\mathcal{D}}[m]}$.

\subsection{Application of the principle of typicality}
\label{subsec:the Wigner-Deutsch collaboration_Application of the principle of typicality}

The unitary operator $U_{\mathrm{whole}}$ applying to
the initial state $\ket{+}\otimes\ket{\Phi_{\mathrm{init}}}$ describes
the \emph{repeated once} of the infinite repetition of the measurements
where the succession of
the measurements $\mathcal{M}^\mathcal{F}$, $\mathcal{M}^\mathcal{W}$,
and $\mathcal{M}^\mathcal{D}$ in this order
is infinitely repeated.
It follows from \eqref{eq:Wigner-Deutsch-all} that
the application of $U_{\mathrm{whole}}$,
consisting of the sequential applications of $U_{\mathcal{F}}$, $U_{\mathcal{W}}$, and $U_{\mathcal{D}}$ in this order,
can be regarded as a
\emph{single measurement} which is described by
the measurement operators $\{M_{k,l,m}\}_{(k,l,m)\in\Omega}$
and whose all possible outcomes form the set $\Omega$.

Hence, we can apply Definition~\ref{pmrpwst}
to this scenario of the setting of measurements.
Therefore, according to Definition~\ref{pmrpwst},
we can see that a \emph{world} is an infinite sequence over $\Omega$
and the probability measure induced by
the \emph{probability measure representation for the prefixes of worlds}
is a Bernoulli measure $\lambda_P$ on
$\Omega^\infty$,
where $P$ is a finite probability space on $\Omega$ such that
$P(k,l,m)$ is the square of the norm of the vector
\begin{equation*}
  (M_{k,l,m}\ket{+})\otimes\ket{\Phi[k,l,m]}
\end{equation*}
for every $(k,l,m)\in\Omega$.
Here $\Omega$ is the set of all possible records of
the observers $\mathcal{F}$, $\mathcal{W}$, and $\mathcal{D}$
in the \emph{repeated once} of the experiments.
Let us calculate the explicit form of $P(k,l,m)$.
We use $\Omega_{c}$ to denote the set $\{0,1\}\times\{0,1\}\times\{+,-\}$,
which is a proper subset of $\Omega$.
First, using \eqref {eq:EDTE-Mkl=0} we see that
\begin{equation}\label{eq:Wigner-Deutsch_Mklm+Pklm=fklmoss2-0}
  (M_{k,l,m}\ket{+})\otimes\ket{\Phi[k,l,m]}=0
\end{equation}
for each $(k,l,m)\in\Omega\setminus\Omega_c$,
and using \eqref{eq:EDTE-Mkl=PkQl}, \eqref{eq:WDc-def-P0-P1}, \eqref{eq:WDc-kkQmkl}, and \eqref{eq:EDTE-k+=f1s2k0+k1} we see that
\begin{equation}\label{eq:Wigner-Deutsch_Mklm+Pklm=fklmoss2}
\begin{split}
  (M_{k,l,m}\ket{+})\otimes\ket{\Phi[k,l,m]}
  &=(P_kQ_mP_l\ket{+})\otimes\ket{\Phi^{\mathcal{F}}[k]}\otimes\ket{\Phi^{\mathcal{W}}[l]}\otimes\ket{\Phi^{\mathcal{D}}[m]} \\
  &=(\ket{k}\bra{k}Q_m\ket{l}\braket{l}{+})\otimes\ket{\Phi^{\mathcal{F}}[k]}\otimes\ket{\Phi^{\mathcal{W}}[l]}\otimes\ket{\Phi^{\mathcal{D}}[m]} \\
  &=f(k,l,m)\ket{k}\otimes\ket{\Phi^{\mathcal{F}}[k]}\otimes\ket{\Phi^{\mathcal{W}}[l]}\otimes\ket{\Phi^{\mathcal{D}}[m]}
\end{split}
\end{equation}
for each $(k,l,m)\in\Omega_c$,
where
\[
  f(k,l,m):=c_l\bra{k}Q_m\ket{l}=c_l(\delta_{m,+}c_k\overline{c_l}+\delta_{m,-}(-1)^{k+l}\overline{c_{1-k}}c_{1-l}).
\]
Thus, using~\eqref{eq:Wigner-Deutsch_Mklm+Pklm=fklmoss2-0} and
\eqref{eq:Wigner-Deutsch_Mklm+Pklm=fklmoss2} we have that
\begin{equation}\label{eq:Wigner-Deutsch-finitepsP-0}
  P(k,l,m)=0
\end{equation}
for every $(k,l,m)\in\Omega\setminus\Omega_c$, and 
\begin{equation}\label{eq:Wigner-Deutsch-finitepsP}
  P(k,l,m)=\abs{f(k,l,m)}^2=\abs{c_l}^2(\delta_{m,+}\abs{c_k}^2\abs{c_l}^2+\delta_{m,-}\abs{c_{1-k}}^2\abs{c_{1-l}}^2)
\end{equation}
for every $(k,l,m)\in\Omega_c$.

Now, let us apply
Postulate~\ref{POT}, the \emph{principle of typicality},
to the setting of measurements
developed above.
Let $\omega$ be \emph{our world} in the infinite repetition of
the measurements (experiment) in the above setting.
This $\omega$
is an infinite sequence over $\Omega$
consisting of records
in the
three
observers $\mathcal{F}$, $\mathcal{W}$, and $\mathcal{D}$
which is being generated by the infinite repetition of the measurement
described by the measurement operators $\{M_{k,l,m}\}_{(k,l,m)\in\Omega}$
in the above setting.
Since the Bernoulli measure $\lambda_P$ on $\Omega^\infty$ is
the probability measure induced by the
probability
measure representation
for the prefixes of
worlds
in the above setting,
it follows from
Postulate~\ref{POT}
that \emph{$\omega$ is Martin-L\"of $P$-random}.

We use $\alpha$, $\beta$, and $\gamma$
to denote the infinite sequences over $\{0,1\}$, $\{0,1,2\}$, and $\{+,-\}$, respectively,
such that $(\alpha(n),\beta(n),\gamma(n))=\omega(n)$ for every $n\in\N^+$.
Then, since $\omega$ is Martin-L\"of $P$-random, it follows from \eqref{eq:Wigner-Deutsch-finitepsP-0} and
Corollary~\ref{cor:always-positive-probability}
that
\[
  \omega(n)\in\Omega_c
\]
for every $n\in\N^+$,
and therefore
\[
  \beta(n)\neq 2
\]
for every $n\in\N^+$, i.e., $\beta$ is an infinite binary sequence.
Thus, it follows from Postulate~\ref{POT} and \eqref{eq:Wigner-Deutsch-all}
that, in our world $\omega$, for each $n\in\N^+$ the state of
the $n$th composite system $\mathcal{S}+\mathcal{F}+\mathcal{W}+\mathcal{D}$,
i.e., the $n$th copy of a composite system consisting of
the system $\mathcal{S}$ and the three observers $\mathcal{F}$, $\mathcal{W}$, and $\mathcal{D}$,
immediately after the measurement $\mathcal{M}^\mathcal{D}$ is given by
\begin{equation}\label{eq:WDc-state-immafter-MD}
\begin{split}
  &(M_{\alpha(n),\beta(n),\gamma(n)}\ket{+})\otimes\ket{\Phi[\alpha(n),\beta(n),\gamma(n)]} \\
  &=f(\alpha(n),\beta(n),\gamma(n))\ket{\alpha(n)}\otimes\ket{\Phi^{\mathcal{F}}[\alpha(n)]}\otimes\ket{\Phi^{\mathcal{W}}[\beta(n)]}\otimes\ket{\Phi^{\mathcal{D}}[\gamma(n)]},
\end{split}
\end{equation}
up to the normalization factor,
where the equality follows from
\eqref{eq:Wigner-Deutsch_Mklm+Pklm=fklmoss2}.
Since $\omega$ is Martin-L\"of $P$-random,
it follows from Theorem~\ref{FI} and \eqref{eq:Wigner-Deutsch-finitepsP} that
for every $(k,l,m)\in\Omega_c$ it holds that
\begin{equation}\label{eq:WDc-imafterW-by-PoT-LLN}
  \lim_{n\to\infty} \frac{N_{k,l,m}(\rest{\omega}{n})}{n}
  =\abs{c_l}^2(\delta_{m,+}\abs{c_k}^2\abs{c_l}^2+\delta_{m,-}\abs{c_{1-k}}^2\abs{c_{1-l}}^2),
\end{equation}
where $N_{k,l,m}(\rest{\omega}{n})$ denotes the number of the occurrences of $(k,l,m)$
in the prefix of $\omega$ of length $n$.
Thus, in our world $\omega$, the following holds for each $(k,l,m)\in\Omega_c$:
In a proportion of $$\abs{c_l}^2(\delta_{m,+}\abs{c_k}^2\abs{c_l}^2+\delta_{m,-}\abs{c_{1-k}}^2\abs{c_{1-l}}^2)$$ out of the infinite repetitions of the experiment,
the state of $\mathcal{S}$ is $\ket{k}$ and
the observers $\mathcal{F}$, $\mathcal{W}$, and $\mathcal{D}$ record
the values $k$, $l$, and $m$, respectively,
immediately after the measurement $\mathcal{M}^\mathcal{D}$.
We can see that
this situation is different from both Case~1 and Case~2
in Section~\ref{sec:Wigner collaborates with Deutsch},
although the statements of
both Case~1 and Case~2 in Section~\ref{sec:Wigner collaborates with Deutsch} are
at least operationally vague
since they are described
in terms of the conventional quantum mechanics
where the operational characterization of the notion of probability is not given.

Let us determine the states of the observer $\mathcal{W}$
immediately after the measurement $\mathcal{M}^\mathcal{W}$
in our world $\omega$.
Note first that the condition~\eqref{eq:Ui-TSCP-MO} holds
in our setting of measurements developed above.
According to Definition~\ref{def:unchanged-apparatus},
it is easy to check that 
the final states of the observer $\mathcal{W}$ are
unchanged after the measurement by
the observer $\mathcal{D}$
and therefore the final states of the observer $\mathcal{W}$ are confirmed
before the measurement by the observer $\mathcal{D}$,
where ``apparatus'' in Definition~\ref{def:unchanged-apparatus} should read ``observer''.
Using \eqref{eq:WDc-state-immafter-MD}, we see that, in our world $\omega$,
for every $n\in\N^+$ the state of the observer $\mathcal{W}$
immediately after the measurement $\mathcal{M}^\mathcal{D}$
is $\ket{\Phi^{\mathcal{W}}[\beta(n)]}$ in the $n$th composite system $\mathcal{S}+\mathcal{F}+\mathcal{W}+\mathcal{D}$.
Thus, it follows from Postulate~\ref{CF}~(ii) that, in our world $\omega$,
for every $n\in\N^+$ the state of the observer $\mathcal{W}$
immediately after the measurement $\mathcal{M}^\mathcal{W}$ is
\begin{equation}\label{eq:WDc-W-state-immafter-MW}
  \ket{\Phi^{\mathcal{W}}[\beta(n)]}
\end{equation}
in the $n$th composite system $\mathcal{S}+\mathcal{F}+\mathcal{W}+\mathcal{D}$.

We can use Postulate~\ref{Recursive Use} to determine the whole states of
the composite system $\mathcal{S}+\mathcal{F}+\mathcal{W}$
immediately after the measurement $\mathcal{M}^\mathcal{W}$
in our world $\omega$,
instead of using Postulate~\ref{CF}.
First note that
each of the measurement operators $\{M^{\mathcal{F}}_0,M^{\mathcal{F}}_1\}$,
$\{M^{\mathcal{W}}_0,M^{\mathcal{W}}_1,M^{\mathcal{W}}_2\}$, and
$\{M^{\mathcal{D}}_+,M^{\mathcal{D}}_-\}$ forms a PVM.
On the one hand,
applying $(I_\mathcal{S}\otimes U_{\mathcal{W}}\otimes I_\mathcal{D})\circ(U_{\mathcal{F}}\otimes I_\mathcal{W}\otimes I_\mathcal{D})$
to the initial state
$\ket{+}\otimes\ket{\Phi^{\mathcal{F}}_{\mathrm{init}}}\otimes\ket{\Phi^{\mathcal{W}}_{\mathrm{init}}}\otimes\ket{\Phi^{\mathcal{D}}_{\mathrm{init}}}$
of the repeated once of the experiments
results in a state $\ket{\Psi^{\mathcal{W}}_{\mathrm{Total}}}$ given by
\begin{equation}\label{eq:WDc-virtual-state-Total-W}
\begin{split}
  \ket{\Psi^{\mathcal{W}}_{\mathrm{Total}}}
  &:=((I_\mathcal{S}\otimes U_{\mathcal{W}}\otimes I_\mathcal{D})\circ(U_{\mathcal{F}}\otimes I_\mathcal{W}\otimes I_\mathcal{D}))(\ket{+}\otimes\ket{\Phi^{\mathcal{F}}_{\mathrm{init}}}\otimes\ket{\Phi^{\mathcal{W}}_{\mathrm{init}}}\otimes\ket{\Phi^{\mathcal{D}}_{\mathrm{init}}}) \\
  &=\sum_{k=0,1}(M^{\mathcal{F}}_k\ket{+})\otimes\ket{\Phi^{\mathcal{F}}[k]}
      \otimes\ket{\Phi^{\mathcal{W}}[k]}\otimes\ket{\Phi^{\mathcal{D}}_{\mathrm{init}}} \\
  &=\sum_{k=0,1}c_k\ket{k}\otimes\ket{\Phi^{\mathcal{F}}[k]}
      \otimes\ket{\Phi^{\mathcal{W}}[k]}\otimes\ket{\Phi^{\mathcal{D}}_{\mathrm{init}}},
\end{split}
\end{equation}
where the second equality follows from \eqref{eq:ED-WignerFriend-all}, and the last equality follows from \eqref{eq:EDTE-MF0=00_MF1=11} and \eqref{eq:EDTE-k+=f1s2k0+k1}.
On the other hand,
the final states of the observer $\mathcal{W}$ are unchanged after the measurement by
the observer $\mathcal{D}$, as we saw above.
But
it is easy to see that, according to Definition~\ref{def:unchanged-apparatus},
the final states of the observer $\mathcal{F}$ is not unchanged after the measurement by
the observer $\mathcal{D}$.
Here ``apparatus'' in Definition~\ref{def:unchanged-apparatus} should read ``observer''.
Thus, it follows from Postulate~\ref{Recursive Use} and
\eqref{eq:WDc-state-immafter-MD} that, in our world $\omega$,
for each $n\in\N^+$ the state of
the $n$th composite system $\mathcal{S}+\mathcal{F}+\mathcal{W}+\mathcal{D}$
immediately after the measurement $\mathcal{M}^\mathcal{W}$ is given by
\begin{equation}\label{eq:WDc-Proj-virtual-state-Total-W}
  P\left(\ket{\Phi^{\mathcal{W}}[\beta(n)]}\right)\ket{\Psi^{\mathcal{W}}_{\mathrm{Total}}},
\end{equation}
up to the normalization factor, where
\[
  P\left(\ket{\Phi^{\mathcal{W}}[\beta(n)]}\right)
  =I_{\mathcal{S}}\otimes I_{\mathcal{F}}\otimes
  \product{\Phi^{\mathcal{W}}[\beta(n)]}{\Phi^{\mathcal{W}}[\beta(n)]}\otimes I_{\mathcal{D}}.
\]
The vector \eqref{eq:WDc-Proj-virtual-state-Total-W} equals
\begin{equation}\label{eq:WDc-Proj-virtual-state-Total-W-result}
  c_{\beta(n)}\ket{\beta(n)}\otimes\ket{\Phi^{\mathcal{F}}[\beta(n)]}\otimes\ket{\Phi^{\mathcal{W}}[\beta(n)]}\otimes\ket{\Phi^{\mathcal{D}}_{\mathrm{init}}}
\end{equation}
for every $n\in\N^+$,
due to \eqref{eq:WDc-virtual-state-Total-W}.
Hence, in our world $\omega$, for every $n\in\N^+$ 
the state of the composite system $\mathcal{S}+\mathcal{F}+\mathcal{W}$
immediately after the measurement $\mathcal{M}^\mathcal{W}$ is given by
\begin{equation}\label{eq:WDc-SFW-state-immafter-MW}
  \ket{\beta(n)}\otimes\ket{\Phi^{\mathcal{F}}[\beta(n)]}\otimes\ket{\Phi^{\mathcal{W}}[\beta(n)]}
\end{equation}
in the $n$th composite system $\mathcal{S}+\mathcal{F}+\mathcal{W}+\mathcal{D}$.

Now, we can apply
Postulate~\ref{Recursive Use}
\emph{recursively},
based on \eqref{eq:WDc-SFW-state-immafter-MW}.
Namely,
we can use Postulate~\ref{Recursive Use} to determine the states of
the composite system $\mathcal{S}+\mathcal{F}$
immediately after the measurement $\mathcal{M}^\mathcal{F}$
in our world $\omega$.
Recall that
each of the measurement operators $\{M^{\mathcal{F}}_0,M^{\mathcal{F}}_1\}$ and
$\{M^{\mathcal{W}}_0,M^{\mathcal{W}}_1,M^{\mathcal{W}}_2\}$ forms a PVM.
On the one hand, applying $U_\mathcal{F}\otimes I_{\mathcal{W}}$
to the initial state
$\ket{+}\otimes\ket{\Phi^{\mathcal{F}}_{\mathrm{init}}}\otimes\ket{\Phi^{\mathcal{W}}_{\mathrm{init}}}$ in the repeated once of the experiments
where the observer $\mathcal{D}$ is ignored
results in a state $\ket{\Psi^{\mathcal{F}}_{\mathrm{Total}}}$ given by
\begin{equation}\label{eq:WDc-virtual-state-Total-F}
\begin{split}
  \ket{\Psi^{\mathcal{F}}_{\mathrm{Total}}}
  &:=(U_\mathcal{F}\otimes I_{\mathcal{W}})(\ket{+}\otimes\ket{\Phi^{\mathcal{F}}_{\mathrm{init}}}\otimes\ket{\Phi^{\mathcal{W}}_{\mathrm{init}}})
  =\sum_{k=0,1}(M^{\mathcal{F}}_k\ket{+})\otimes\ket{\Phi^{\mathcal{F}}[k]}\otimes\ket{\Phi^{\mathcal{W}}_{\mathrm{init}}} \\
  &=(c_0\ket{0}\otimes\ket{\Phi^{\mathcal{F}}[0]}+c_1\ket{1}\otimes\ket{\Phi^{\mathcal{F}}[1]})\otimes\ket{\Phi^{\mathcal{W}}_{\mathrm{init}}},
\end{split}
\end{equation}
where the second equality follows from \eqref{eq:EUD-MF-S},
and the last equality follows from \eqref{eq:EDTE-MF0=00_MF1=11} and \eqref{eq:EDTE-k+=f1s2k0+k1}.
On the other hand,
according to Definition~\ref{def:unchanged-apparatus},
it is easy to check that 
the final states of the observer $\mathcal{F}$ are
unchanged after the measurement by
the observer $\mathcal{W}$
and therefore the final states of the observer $\mathcal{F}$ are confirmed
before the measurement by the observer $\mathcal{W}$,
where ``apparatus'' in Definition~\ref{def:unchanged-apparatus} should read ``observer''.
Thus, it follows from Postulate~\ref{Recursive Use} and
\eqref{eq:WDc-SFW-state-immafter-MW} that, in our world $\omega$,
for each $n\in\N^+$ the state of
the $n$th composite system $\mathcal{S}+\mathcal{F}+\mathcal{W}$,
i.e., the $n$th copy of a composite system consisting of
the system $\mathcal{S}$ and the two observers $\mathcal{F}$ and $\mathcal{W}$,
immediately after the measurement $\mathcal{M}^\mathcal{F}$ is given by
\begin{equation}\label{eq:WDc-Proj-virtual-state-Total-F}
  P\left(\ket{\Phi^{\mathcal{F}}[\beta(n)]}\right)\ket{\Psi^{\mathcal{F}}_{\mathrm{Total}}},
\end{equation}
up to the normalization factor, where
\[
  P\left(\ket{\Phi^{\mathcal{F}}[\beta(n)]}\right)
  =I_{\mathcal{S}}\otimes
  \product{\Phi^{\mathcal{F}}[\beta(n)]}{\Phi^{\mathcal{F}}[\beta(n)]}\otimes I_{\mathcal{W}}.
\]
The vector \eqref{eq:WDc-Proj-virtual-state-Total-F} equals
\[
  c_{\beta(n)}\ket{\beta(n)}\otimes\ket{\Phi^{\mathcal{F}}[\beta(n)]}\otimes\ket{\Phi^{\mathcal{W}}_{\mathrm{init}}}
\]
for every $n\in\N^+$,
due to \eqref{eq:WDc-virtual-state-Total-F}.
Thus, in our world $\omega$, for every $n\in\N^+$ 
the state of the composite system $\mathcal{S}+\mathcal{F}$
immediately after the measurement $\mathcal{M}^\mathcal{F}$ is given by
\begin{equation}\label{eq:WDc-SF-state-immafter-MF}
  \ket{\beta(n)}\otimes\ket{\Phi^{\mathcal{F}}[\beta(n)]}
\end{equation}
in the $n$th composite system $\mathcal{S}+\mathcal{F}+\mathcal{W}+\mathcal{D}$.
Hence,
in our world $\omega$,
for every $n\in\N^+$
the system $\mathcal{S}$ is in the state $\ket{\beta(n)}$ and
the observer $\mathcal{F}$ surely records the value $\beta(n)$
immediately after the measurement $\mathcal{M}^\mathcal{F}$
in the $n$th repetition of the experiment.
Note that we can also derive the result~\eqref{eq:WDc-SF-state-immafter-MF}
from \eqref{eq:WDc-SFW-state-immafter-MW}
using Postulate~\ref{CF} in the same manner as
we derived the result~\eqref{eq:WF-SF-state-immafter-MF}
in Section~\ref{subsec:WF-Appl-POT},
instead of using Postulate~\ref{Recursive Use}.

It follows from \eqref{eq:WDc-imafterW-by-PoT-LLN} that
\begin{equation*}
\begin{split}
  \lim_{n\to\infty} \frac{N_l(\rest{\beta}{n})}{n}
  &=\lim_{n\to\infty} \frac{1}{n}\sum_{k=0,1}\sum_{m=+,-}N_{k,l,m}(\rest{\omega}{n})
  =\sum_{k=0,1}\sum_{m=+,-}\lim_{n\to\infty} \frac{N_{k,l,m}(\rest{\omega}{n})}{n} \\
  &=\sum_{k=0,1}\sum_{m=+,-}\abs{c_l}^2(\delta_{m,+}\abs{c_k}^2\abs{c_l}^2+\delta_{m,-}\abs{c_{1-k}}^2\abs{c_{1-l}}^2) \\
  &=\abs{c_l}^2(\abs{c_l}^2+\abs{c_{1-l}}^2) \\
  &=\abs{c_l}^2
\end{split}
\end{equation*}
for each $l\in\{0,1\}$,
where $N_l(\sigma)$ denotes the number of the occurrences of $l$ in $\sigma$ for every $l\in\{0,1\}$ and every $\sigma\in\X$.
Thus, the states \eqref{eq:WDc-SF-state-immafter-MF} and
\eqref{eq:WDc-SFW-state-immafter-MW} imply that,
in our world $\omega$,
for each $l=0,1$
the following (i) and (ii) simultaneously occur
in a proportion of $\abs{c_l}^2$ out of the infinite repetitions of the experiment:
\begin{enumerate}
\item
The state of $\mathcal{S}$ is $\ket{l}$ and
the observer $\mathcal{F}$ records this value $l$,
immediately after the measurement $\mathcal{M}^\mathcal{F}$.
\item
The state of $\mathcal{S}$ is $\ket{l}$ and
both
the observers $\mathcal{F}$ and $\mathcal{W}$ record
this value $l$,
immediately after the measurement $\mathcal{M}^\mathcal{W}$.
\end{enumerate}
We can then see that
this situation regarding $\mathcal{M}^\mathcal{F}$ and $\mathcal{M}^\mathcal{W}$
coincides with Case~1 in Section~\ref{sec:Wigner collaborates with Deutsch},
although the statement of Case~1 in Section~\ref{sec:Wigner collaborates with Deutsch} are
at least operationally vague
since they are described
in terms of the conventional quantum mechanics
where the operational characterization of the notion of probability is not given.

\begin{remark}[The Validity I of Postulate~\ref{Recursive Use}]\label{rem:Validity1-PRU}
Let us consider the validity of Postulate~\ref{Recursive Use}.
In Postulate~\ref{Recursive Use},
when acted on the vector $\ket{\Psi^{n-1}_{\mathrm{Total}}}$,
the projector
\begin{equation}\label{eq:WDc-Rem1-Proj-original}
  P(\ket{\Phi^{i_1}[m_{i_1}]},\dots,\ket{\Phi^{i_k}[m_{i_k}]})
\end{equation}
determines the state of the total system $\mathcal{S}_{\mathrm{Total}}$
immediately after the measurement $\mathcal{M}^{n-1}$
in a manner that
the apparatuses $\mathcal{A}_{i_1},\dots,\mathcal{A}_{i_k}$ are
in the states $\ket{\Phi^{i_1}[m_{i_1}]},\dots,\ket{\Phi^{i_k}[m_{i_k}]}$, respectively,
immediately after the measurement $\mathcal{M}^{n-1}$.
Recall that
these apparatuses $\mathcal{A}_{i_1},\dots,\mathcal{A}_{i_k}$ forms precisely the set of
apparatuses among $\mathcal{A}_{1},\dots,\mathcal{A}_{n-1}$ whose final states are
unchanged after the measurement by the apparatus $\mathcal{A}_n$.
If we could replace the projector
$P(\ket{\Phi^{i_1}[m_{i_1}]},\dots,\ket{\Phi^{i_k}[m_{i_k}]})$ by
\begin{equation}\label{eq:WDc-Rem1-Proj}
P(\ket{\Phi^{1}[m_{1}]},\dots,\ket{\Phi^{n-1}[m_{n-1}]})
\end{equation}
in Postulate~\ref{Recursive Use},
we could determine all states of the apparatuses $\mathcal{A}_{1},\dots,\mathcal{A}_{n-1}$
immediately after the measurement $\mathcal{M}^{n-1}$.
However, this is impossible, as we will see in what follows.

In the Wigner-Deutsch collaboration,
the state of the composite system $\mathcal{S}+\mathcal{F}+\mathcal{W}+\mathcal{D}$
immediately after the measurement $\mathcal{M}^\mathcal{W}$
is given by \eqref{eq:WDc-Proj-virtual-state-Total-W} in our world $\omega$.
This is because
the final states of the observer $\mathcal{W}$ is unchanged after the measurement by
the observer $\mathcal{D}$
while the final states of the observer $\mathcal{F}$ is
not unchanged after the measurement by the observer $\mathcal{D}$, as we saw above.
Consequently,
the
projector $P\left(\ket{\Phi^{\mathcal{W}}[\beta(n)]}\right)$
corresponds to \eqref{eq:WDc-Rem1-Proj-original} in this case.
In contrast,
the projector
$P\left(\ket{\Phi^{\mathcal{F}}[\alpha(n)]},\ket{\Phi^{\mathcal{W}}[\beta(n)]}\right)$
corresponds to \eqref{eq:WDc-Rem1-Proj} in this case.

Let us calculate $P\left(\ket{\Phi^{\mathcal{F}}[\alpha(n)]},\ket{\Phi^{\mathcal{W}}[\beta(n)]}\right)\ket{\Psi^{\mathcal{W}}_{\mathrm{Total}}}$.
Then using \eqref{eq:WDc-virtual-state-Total-W} we have that
\begin{equation}\label{eq:WDc-immafterW-by-PoT-False}
  P\left(\ket{\Phi^{\mathcal{F}}[\alpha(n)]},\ket{\Phi^{\mathcal{W}}[\beta(n)]}\right)\ket{\Psi^{\mathcal{W}}_{\mathrm{Total}}}
  =c_{\alpha(n)}\delta_{\beta(n),\alpha(n)}\ket{\alpha(n)}\otimes\ket{\Phi^{\mathcal{F}}[\alpha(n)]}\otimes\ket{\Phi^{\mathcal{W}}[\beta(n)]}\otimes\ket{\Phi^{\mathcal{D}}_{\mathrm{init}}}
\end{equation}
for every $n\in\N^+$.
However,
it follows from \eqref{eq:WDc-imafterW-by-PoT-LLN} that
$\alpha(n)\neq\beta(n)$
holds for a proportion of $2\abs{c_0}^2\abs{c_1}^2$ out of all $n\in\N^+$.
This is because using \eqref{eq:WDc-imafterW-by-PoT-LLN} we have that
\begin{align*}
  &\lim_{n\to\infty} \frac{1}{n}\sum_{k=0,1}\sum_{m=+,-}N_{k,1-k,m}(\rest{\omega}{n}) \\
  &=\sum_{k=0,1}\sum_{m=+,-}\lim_{n\to\infty} \frac{N_{k,1-k,m}(\rest{\omega}{n})}{n} \\
  &=\sum_{k=0,1}\sum_{m=+,-}\abs{c_{1-k}}^2(\delta_{m,+}\abs{c_k}^2\abs{c_{1-k}}^2+\delta_{m,-}\abs{c_{1-k}}^2\abs{c_{k}}^2) \\
  &=2\sum_{k=0,1}\abs{c_{1-k}}^2\abs{c_k}^2\abs{c_{1-k}}^2 \\
  &=2\abs{c_0}^2\abs{c_1}^2
  >0.
\end{align*}
Therefore, the vector \eqref{eq:WDc-immafterW-by-PoT-False} becomes $0$
for such infinitely many $n$.
Hence, the vector \eqref{eq:WDc-immafterW-by-PoT-False} does not become
the state of the composite system $\mathcal{S}+\mathcal{F}+\mathcal{W}+\mathcal{D}$
immediately after the measurement $\mathcal{M}^\mathcal{W}$ for such infinitely many $n$ in our world $\omega$,
since the state vector must not be a zero-vector.
In contrast,
the vector
$P\left(\ket{\Phi^{\mathcal{W}}[\beta(n)]}\right)\ket{\Psi^{\mathcal{W}}_{\mathrm{Total}}}$,
which the original Postulate~\ref{Recursive Use} requires to be
the state of the composite system $\mathcal{S}+\mathcal{F}+\mathcal{W}+\mathcal{D}$
immediately after the measurement $\mathcal{M}^\mathcal{W}$ in our world $\omega$,
is non-zero for all $n\in\N^+$,
since it equals \eqref{eq:WDc-Proj-virtual-state-Total-W-result} for all $n\in\N^+$.
This observation supports the validity of Postulate~\ref{Recursive Use}.
\qed
\end{remark}

\begin{remark}[The Validity II of Postulate~\ref{Recursive Use}]\label{rem:Validity2-PRU}
Let us consider the validity of Postulate~\ref{Recursive Use}
further.
Postulate~\ref{Recursive Use} can be applied
on the premise that
the measurement operators $\{M^i_{m_i}\}_{m_i\in\Omega_i}$ form a PVM
in $\mathcal{H}_{\mathcal{S}}\otimes\overline{\mathcal{H}}_1\otimes\dots\otimes\overline{\mathcal{H}}_{i-1}$
for every $i=1,\dots,n$.
We will demonstrate the necessity of this premise,
based on the Wigner-Deutsch collaboration.
We
call
a variant of
Postulate~\ref{Recursive Use} where this premise is eliminated
the \emph{naive Postulate~\ref{Recursive Use}}.
We will show that the naive Postulate~\ref{Recursive Use} is too strong
and leads to a contradiction.

On the one hand,
using the naive Postulate~\ref{Recursive Use}
we can derive the result~\eqref{eq:WDc-SF-state-immafter-MF}
from \eqref{eq:WDc-state-immafter-MD}
in the same manner as we derived the result~\eqref{eq:WDc-SF-state-immafter-MF}
from \eqref{eq:WDc-state-immafter-MD} above
using Postulate~\ref{Recursive Use}.
That is, we have that, in our world $\omega$, for every $n\in\N^+$ 
the state of the composite system $\mathcal{S}+\mathcal{F}$
immediately after the measurement $\mathcal{M}^\mathcal{F}$ is given by
\begin{equation}\label{eq:WDc-Rem1-SF-state-immafter-MF}
  \ket{\beta(n)}\otimes\ket{\Phi^{\mathcal{F}}[\beta(n)]}
\end{equation}
in the $n$th composite system $\mathcal{S}+\mathcal{F}+\mathcal{W}+\mathcal{D}$.

On the other hand,
in the Wigner-Deutsch collaboration
we can regard the combination of
the measurements $\mathcal{M}^{\mathcal{W}}$ and $\mathcal{M}^{\mathcal{D}}$ as a \emph{single} measurement $\mathcal{M}^{\mathcal{W}+\mathcal{D}}$,
where we regard
the combination of the observers $\mathcal{W}$ and $\mathcal{D}$
as a \emph{single} observer $\mathcal{W}+\mathcal{D}$.
Actually,
the measurement process of the measurement $\mathcal{M}^{\mathcal{W}+\mathcal{D}}$ is described by
a unitary operator
$U_{\mathcal{W}+\mathcal{D}}$ defined by
\[
  U_{\mathcal{W}+\mathcal{D}}:=U_{\mathcal{D}}\circ(I_{\mathcal{S}}\otimes U_{\mathcal{W}}\otimes I_\mathcal{D}).
\]
We denote the set $\{0,1,2\}\times\{+,-\}$ by $\Theta$,
and we define
a finite collection $\{M^{\mathcal{W}+\mathcal{D}}_{l,m}\}_{(l,m)\in\Theta}$
of operators acting on
the state space $\mathcal{H}_\mathcal{S}\otimes\mathcal{H}_\mathcal{F}$ by
\begin{equation}\label{eq:WDc-Rem1-def-MWD}
  M^{\mathcal{W}+\mathcal{D}}_{l,m}:=M^{\mathcal{D}}_m\circ(I_\mathcal{S}\otimes M^{\mathcal{W}}_l).
\end{equation}
Using \eqref{eq:UE-MW-F} and \eqref{eq:UED-MW-F} we have that
\begin{equation*}%
  U_{\mathcal{W}+\mathcal{D}}(\ket{\Psi}\otimes\ket{\Phi^{\mathcal{W}}_{\mathrm{init}}}\otimes\ket{\Phi^{\mathcal{D}}_{\mathrm{init}}}) \\
  =\sum_{(l,m)\in\Theta}(M^{\mathcal{W}+\mathcal{D}}_{l,m}\ket{\Psi})\otimes\ket{\Phi^{\mathcal{W}}[l]}\otimes\ket{\Phi^{\mathcal{D}}[m]}
\end{equation*}
for every $\ket{\Psi}\in\mathcal{H}_\mathcal{S}\otimes\mathcal{H}_\mathcal{F}$,
in the form corresponding to \eqref{eq:Ui-TSCP} with $n=k=2$.
It is then easy to see that the finite collection
$\{M^{\mathcal{W}+\mathcal{D}}_{l,m}\}_{(l,m)\in\Theta}$ satisfies the completeness equation, and therefore forms measurement operators
(see \eqref{eq:mopsMW+D_lm-cmpleq} in Section~\ref{sec:Analysis of WDc with a mere quantum system F} below).
Moreover, we can check that the measurement operators $\{M^{\mathcal{W}+\mathcal{D}}_{l,m}\}_{(l,m)\in\Theta}$ also satisfy the condition~\eqref{eq:Ui-TSCP-MO} with $n=k=2$.

However, the measurement operators $\{M^{\mathcal{W}+\mathcal{D}}_{l,m}\}_{(l,m)\in\Theta}$ do not form a PVM
in $\mathcal{H}_\mathcal{S}\otimes\mathcal{H}_\mathcal{F}$.
To see this, we first note that
\begin{equation}\label{eq:MWDl+=f1sq2kPS+F+blotbPFl}
  M^{\mathcal{W}+\mathcal{D}}_{l,+}
  =\overline{c_l}\ket{\Psi^{\mathcal{S}+\mathcal{F}}[+]}\bra{l}\otimes\bra{\Phi^{\mathcal{F}}[l]}
\end{equation}
holds for each $l=0,1$,
which follows immediately from
\eqref{eq:WDc-Rem1-def-MWD},
\eqref{eq:EDTE-Def-MW+_MW-}, \eqref{eq:EDTE-Def-kPS+F+}, and \eqref{eq:WDc-MW_0_1_2}.
Assume contrarily that the measurement operators $\{M^{\mathcal{W}+\mathcal{D}}_{l,m}\}_{(l,m)\in\Theta}$ form a PVM
in $\mathcal{H}_\mathcal{S}\otimes\mathcal{H}_\mathcal{F}$.
Then $M^{\mathcal{W}+\mathcal{D}}_{0,+}$, in particular, is a projector on $\mathcal{H}_\mathcal{S}\otimes\mathcal{H}_\mathcal{F}$.
Since $M^{\mathcal{W}+\mathcal{D}}_{0,+}\neq 0$
due to \eqref{eq:MWDl+=f1sq2kPS+F+blotbPFl},
there is a normalized vector $\ket{\psi_0}\in\mathcal{H}_\mathcal{S}\otimes\mathcal{H}_\mathcal{F}$ such that $M^{\mathcal{W}+\mathcal{D}}_{0,+}\ket{\psi_0}=\ket{\psi_0}$.
Thus, using \eqref{eq:MWDl+=f1sq2kPS+F+blotbPFl} we have that
\begin{align*}
  1=\bra{\psi_0}M^{\mathcal{W}+\mathcal{D}}_{0,+}\ket{\psi_0}
  =\overline{c_0}\braket{\psi_0}{\Psi^{\mathcal{S}+\mathcal{F}}[+]}(\bra{0}\otimes\bra{\Phi^{\mathcal{F}}[0]})\ket{\psi_0}.
\end{align*}
Therefore, since $\abs{\braket{\psi_0}{\Psi^{\mathcal{S}+\mathcal{F}}[+]}}, \abs{(\bra{0}\otimes\bra{\Phi^{\mathcal{F}}[0]})\ket{\psi_0}}\le 1$,
it follows that $1\le \abs{c_0}$.
However, this contradicts
our
choice of $c_0$ such that $0<\abs{c_0}^2<1$, and
hence
the measurement operators $\{M^{\mathcal{W}+\mathcal{D}}_{l,m}\}_{(l,m)\in\Theta}$ cannot form a PVM
in $\mathcal{H}_\mathcal{S}\otimes\mathcal{H}_\mathcal{F}$.

Now, we can use the naive Postulate~\ref{Recursive Use} to determine the whole states of
the composite system $\mathcal{S}+\mathcal{F}$
immediately after the measurement $\mathcal{M}^\mathcal{F}$
in our world $\omega$,
under the situation that
the combination of
the measurements $\mathcal{M}^{\mathcal{W}}$ and $\mathcal{M}^{\mathcal{D}}$
is regarded as the \emph{single} measurement $\mathcal{M}^{\mathcal{W}+\mathcal{D}}$.
On the one hand, applying $U_\mathcal{F}\otimes I_{\mathcal{W}}\otimes I_{\mathcal{D}}$
to the initial state
$\ket{+}\otimes\ket{\Phi^{\mathcal{F}}_{\mathrm{init}}}\otimes\ket{\Phi^{\mathcal{W}}_{\mathrm{init}}}\otimes\ket{\Phi^{\mathcal{D}}_{\mathrm{init}}}$
in the repeated once of the experiments
where the combination of
the measurements $\mathcal{M}^{\mathcal{W}}$ and $\mathcal{M}^{\mathcal{D}}$
is regarded as the single measurement $\mathcal{M}^{\mathcal{W}+\mathcal{D}}$
results in a state $\ket{\Psi^{\mathcal{F}(\mathcal{W}+\mathcal{D})}_{\mathrm{Total}}}$
given by
\begin{equation}\label{eq:WDc-Rem1-virtual-state-Total-F}
\begin{split}
  \ket{\Psi^{\mathcal{F}(\mathcal{W}+\mathcal{D})}_{\mathrm{Total}}}
  &:=(U_\mathcal{F}\otimes I_{\mathcal{W}}\otimes I_{\mathcal{D}})(\ket{+}\otimes\ket{\Phi^{\mathcal{F}}_{\mathrm{init}}}\otimes\ket{\Phi^{\mathcal{W}}_{\mathrm{init}}}\otimes\ket{\Phi^{\mathcal{D}}_{\mathrm{init}}}) \\
  &=\sum_{k=0,1}(M^{\mathcal{F}}_k\ket{+})\otimes\ket{\Phi^{\mathcal{F}}[k]}\otimes\ket{\Phi^{\mathcal{W}}_{\mathrm{init}}}\otimes\ket{\Phi^{\mathcal{D}}_{\mathrm{init}}} \\
  &=(c_0\ket{0}\otimes\ket{\Phi^{\mathcal{F}}[0]}+c_1\ket{1}\otimes\ket{\Phi^{\mathcal{F}}[1]})\otimes\ket{\Phi^{\mathcal{W}}_{\mathrm{init}}}\otimes\ket{\Phi^{\mathcal{D}}_{\mathrm{init}}},
\end{split}
\end{equation}
where the second equality follows from \eqref{eq:EUD-MF-S}, and
the last equality follows from \eqref{eq:EDTE-MF0=00_MF1=11} and \eqref{eq:EDTE-k+=f1s2k0+k1}.
On the other hand,
it is easy to check that, according to Definition~\ref{def:unchanged-apparatus},
the final states of the observer $\mathcal{F}$ are not unchanged after the measurement by
the observer $\mathcal{W}+\mathcal{D}$,
where ``apparatus'' in Definition~\ref{def:unchanged-apparatus} should read ``observer''.
Thus, it follows from the naive Postulate~\ref{Recursive Use}
that, in our world $\omega$, for each $n\in\N^+$ the state of
the $n$th composite system $\mathcal{S}+\mathcal{F}+\mathcal{W}+\mathcal{D}$
immediately after the measurement $\mathcal{M}^\mathcal{F}$ is given by
\begin{equation}\label{eq:WDc-Rem1-Proj-virtual-state-Total-F}
  P\left(\right)\ket{\Psi^{\mathcal{F}(\mathcal{W}+\mathcal{D})}_{\mathrm{Total}}},
\end{equation}
up to the normalization factor, where
$P\left(\right)$
equals
the identity operator acting on the
whole
state space
$\mathcal{H}_\mathcal{S}\otimes\mathcal{H}_\mathcal{F}\otimes\mathcal{H}_\mathcal{W}\otimes\mathcal{H}_\mathcal{D}$
of the composite system $\mathcal{S}+\mathcal{F}+\mathcal{W}+\mathcal{D}$.
Hence, it follows from \eqref{eq:WDc-Rem1-Proj-virtual-state-Total-F} and
\eqref{eq:WDc-Rem1-virtual-state-Total-F} that,
in our world $\omega$, for every $n\in\N^+$ 
the state of the composite system $\mathcal{S}+\mathcal{F}$
immediately after the measurement $\mathcal{M}^\mathcal{F}$ is given by
\begin{equation*}%
  c_0\ket{0}\otimes\ket{\Phi^{\mathcal{F}}[0]}+c_1\ket{1}\otimes\ket{\Phi^{\mathcal{F}}[1]}
\end{equation*}
in the $n$th composite system $\mathcal{S}+\mathcal{F}+\mathcal{W}+\mathcal{D}$.
However, this contradicts the result~\eqref{eq:WDc-Rem1-SF-state-immafter-MF},
which
implies
that the naive Postulate~\ref{Recursive Use} is too strong
to be consistent.

In contrast, we can avoid this contradiction for
the \emph{original}
Postulate~\ref{Recursive Use},
since the measurement operators $\{M^{\mathcal{W}+\mathcal{D}}_{l,m}\}_{(l,m)\in\Theta}$
do not form a PVM in $\mathcal{H}_\mathcal{S}\otimes\mathcal{H}_\mathcal{F}$ and therefore
the above argument does not work for Postulate~\ref{Recursive Use}.
Hence, in Postulate~\ref{Recursive Use},
the premise that the measurement operators $\{M^i_{m_i}\}_{m_i\in\Omega_i}$ form a PVM
in $\mathcal{H}_{\mathcal{S}}\otimes\overline{\mathcal{H}}_1\otimes\dots\otimes\overline{\mathcal{H}}_{i-1}$
for every $i=1,\dots,n$ is necessary.
\qed
\end{remark}

\section{\boldmath Analysis of the Wigner-Deutsch collaboration, with the friend $\mathcal{F}$ being a mere quantum system only measured}
\label{sec:Analysis of WDc with a mere quantum system F}

In the previous section, we have treated the observer $\mathcal{F}$
as a measuring apparatus
in the analysis of the Wigner-Deutsch collaboration
in
our framework of  quantum mechanics based on the principle of typicality.
What happens
if
we regard the observer $\mathcal{F}$ as a mere quantum system
and not as a measurement apparatus
in the analysis of the Wigner-Deutsch collaboration in our framework?
We can make an analysis of this case
more easily by the principle of typicality,
as follows.
We will see that this case corresponds to
Case~2 in Section~\ref{sec:Wigner collaborates with Deutsch},
described in the terminology of the conventional quantum mechanics.

In this case,
according to \eqref{eq:EUD-MF-S},
the state of the composite system $\mathcal{S}+\mathcal{F}$
immediately after the `measurement' $\mathcal{M}^\mathcal{F}$ is given
by
$U_{\mathcal{F}}(\ket{+}\otimes\ket{\Phi^{\mathcal{F}}_\mathrm{init}})$.
The succession of
the measurement $\mathcal{M}^\mathcal{W}$ and
then the measurement $\mathcal{M}^\mathcal{D}$
performed over the composite system $\mathcal{S}+\mathcal{F}$ in this state
forms
the repeated once of the infinite repetition of measurements.
Thus, 
for applying the principle of typicality,
we have to implement
the measurements $\mathcal{M}^\mathcal{W}$ and $\mathcal{M}^\mathcal{D}$
by unitary time-evolution.
This is already done in the form \eqref{eq:UE-MW-F} of $U_\mathcal{W}$ and
the form \eqref{eq:UED-MW-F} of $U_\mathcal{D}$.

Now, it follows from \eqref{eq:UE-MW-F} and \eqref{eq:UED-MW-F} that
\begin{equation}\label{eq:QSF-ED-WignerFriend-all}
\begin{split}
  &(U_{\mathcal{D}}\circ(I_\mathcal{S}\otimes U_{\mathcal{W}}\otimes I_\mathcal{D}))(\ket{\Psi}\otimes\ket{\Phi^{\mathcal{W}}_{\mathrm{init}}}\otimes\ket{\Phi^{\mathcal{D}}_{\mathrm{init}}}) \\
  &=U_{\mathcal{D}}\sum_{l=0,1,2}((I_\mathcal{S}\otimes M^{\mathcal{W}}_l)\ket{\Psi})\otimes\ket{\Phi^{\mathcal{W}}[l]}\otimes\ket{\Phi^{\mathcal{D}}_{\mathrm{init}}} \\
  &=\sum_{l=0,1,2}\,\sum_{m=+,-}((M^{\mathcal{D}}_m\circ(I_\mathcal{S}\otimes M^{\mathcal{W}}_l))\ket{\Psi})\otimes\ket{\Phi^{\mathcal{W}}[l]}\otimes\ket{\Phi^{\mathcal{D}}[m]}
\end{split}
\end{equation}
for each $\ket{\Psi}\in\mathcal{H}_\mathcal{S}\otimes\mathcal{H}_\mathcal{F}$.

We denote the set $\{0,1,2\}\times\{+,-\}$ by $\Theta$, and we define
a finite collection $\{M^{\mathcal{W}+\mathcal{D}}_{l,m}\}_{(l,m)\in\Theta}$ of operators acting on
the state space $\mathcal{H}_\mathcal{S}\otimes\mathcal{H}_\mathcal{F}$
of the composite system $\mathcal{S}+\mathcal{F}$ by
\begin{equation}\label{eq:QWD-POT-WF-Mkl=dlkMFk}
  M^{\mathcal{W}+\mathcal{D}}_{l,m}:=M^{\mathcal{D}}_m\circ(I_\mathcal{S}\otimes M^{\mathcal{W}}_l).
\end{equation}
It follows from \eqref{eq:EDTE-Def-MW+_MW-} and \eqref{eq:WDc-MW_0_1_2}
that
\begin{equation}\label{eq:mopsMW+D_lm-cmpleq}
\begin{split}
  \sum_{(l,m)\in\Theta} {M^{\mathcal{W}+\mathcal{D}}_{l,m}}^\dag M^{\mathcal{W}+\mathcal{D}}_{l,m}
  &=\sum_{l=0,1,2}(I_\mathcal{S}\otimes {M^{\mathcal{W}}_l}^\dag)
  \left[\sum_{m=+,-}{M^{\mathcal{D}}_m}^\dag M^{\mathcal{D}}_m\right]
  (I_\mathcal{S}\otimes M^{\mathcal{W}}_l) \\
  &=I_\mathcal{S}\otimes \sum_{l=0,1,2}{M^{\mathcal{W}}_l}^\dag M^{\mathcal{W}}_l
  =I_{\mathcal{S}+\mathcal{F}}.
\end{split}
\end{equation}
Thus, the finite collection $\{M^{\mathcal{W}+\mathcal{D}}_{l,m}\}_{(l,m)\in\Theta}$ satisfies the \emph{completeness equation}, and therefore forms \emph{measurement operators}.
We define a unitary operator $U_{\mathrm{whole}}$ acting on $\mathcal{H}_\mathcal{S}\otimes\mathcal{H}_\mathcal{F}\otimes\mathcal{H}_\mathcal{W}\otimes\mathcal{H}_\mathcal{D}$
by
\begin{equation}\label{eq:WDcfmqsom-def-Uwhole}
  U_{\mathrm{whole}}:=U_{\mathcal{D}}\circ(I_\mathcal{S}\otimes U_{\mathcal{W}}\otimes I_\mathcal{D}).
\end{equation}
Then, using \eqref{eq:QSF-ED-WignerFriend-all} and \eqref{eq:QWD-POT-WF-Mkl=dlkMFk} we see that, as a whole,
the sequential applications of $U_{\mathcal{W}}$ and $U_{\mathcal{D}}$
to a state $\ket{\Psi}\otimes\ket{\Phi^{\mathcal{W}+\mathcal{D}}_{\mathrm{init}}}$
of the composite system $\mathcal{S}+\mathcal{F}+\mathcal{W}+\mathcal{D}$
result in the state:
\begin{equation}\label{eq:Wigner-Deutsch-Q-all}
\begin{split}
  U_{\mathrm{whole}}(\ket{\Psi}\otimes\ket{\Phi^{\mathcal{W}+\mathcal{D}}_{\mathrm{init}}})
  &=(U_{\mathcal{D}}\circ(I_\mathcal{S}\otimes U_{\mathcal{W}}\otimes I_\mathcal{D}))(\ket{\Psi}\otimes\ket{\Phi^{\mathcal{W}+\mathcal{D}}_{\mathrm{init}}}) \\
  &=\sum_{(l,m)\in\Theta}(M^{\mathcal{W}+\mathcal{D}}_{l,m}\ket{\Psi})\otimes\ket{\Phi^{\mathcal{W}+\mathcal{D}}[l,m]}
\end{split}
\end{equation}
for each state $\ket{\Psi}\in\mathcal{H}_\mathcal{S}\otimes\mathcal{H}_\mathcal{F}$
of the composite system $\mathcal{S}+\mathcal{F}$,
where $\ket{\Phi^{\mathcal{W}+\mathcal{D}}_{\mathrm{init}}}:=
\ket{\Phi^{\mathcal{W}}_{\mathrm{init}}}\otimes\ket{\Phi^{\mathcal{D}}_{\mathrm{init}}}$
and
$\ket{\Phi^{\mathcal{W}+\mathcal{D}}[l,m]}:=
\ket{\Phi^{\mathcal{W}}[l]}\otimes\ket{\Phi^{\mathcal{D}}[m]}$.

In this case,
the state of the composite system $\mathcal{S}+\mathcal{F}$
immediately after the `measurement' $\mathcal{M}^\mathcal{F}$ is given
by
$U_{\mathcal{F}}(\ket{+}\otimes\ket{\Phi^{\mathcal{F}}_\mathrm{init}})$,
as we stated above.
Thus, the unitary operator $U_{\mathrm{whole}}$ given by \eqref{eq:WDcfmqsom-def-Uwhole} applying to
the initial state
$$(U_{\mathcal{F}}(\ket{+}\otimes\ket{\Phi^{\mathcal{F}}_\mathrm{init}}))
\otimes\ket{\Phi^{\mathcal{W}+\mathcal{D}}_\mathrm{init}}$$
describes the \emph{repeated once} of the infinite repetition of the measurements
(experiment)
where the succession of the measurements $\mathcal{M}^\mathcal{W}$
and $\mathcal{M}^\mathcal{D}$ in this order is infinitely repeated.
It follows from \eqref{eq:Wigner-Deutsch-Q-all} that
the application of $U_{\mathrm{whole}}$
forms a \emph{single measurement}, which is described by
the measurement operators $\{M^{\mathcal{W}+\mathcal{D}}_{l,m}\}_{(l,m)\in\Theta}$
and whose all possible outcomes form the set $\Theta$.

Hence, we can apply Definition~\ref{pmrpwst}
to this scenario of the setting of measurements.
Therefore, according to Definition~\ref{pmrpwst},
we can see that a \emph{world} is an infinite sequence over $\Theta$
and the probability measure induced by
the \emph{probability measure representation for the prefixes of worlds}
is a Bernoulli measure $\lambda_R$ on
$\Theta^\infty$,
where $R$ is a finite probability space on $\Theta$ such that
$R(l,m)$ is the square of the norm of the vector
\begin{equation*}
  (M^{\mathcal{W}+\mathcal{D}}_{l,m}(U_{\mathcal{F}}(\ket{+}\otimes\ket{\Phi^{\mathcal{F}}_\mathrm{init}})))
  \otimes\ket{\Phi^{\mathcal{W}+\mathcal{D}}[l,m]}
\end{equation*}
for every $(l,m)\in\Theta$.
Here $\Theta$ is the set of all possible records of
the observers $\mathcal{W}$ and $\mathcal{D}$
in the \emph{repeated once} of the measurements
(experiments).
Let us calculate the explicit form of $R(l,m)$.
First, it follows from \eqref{eq:Wigner-Deutsch-Q-all} and \eqref{eq:Wigner-Deutsch-all} that
\begin{equation*}
\begin{split}
  &\sum_{(l,m)\in\Theta}(M^{\mathcal{W}+\mathcal{D}}_{l,m}(U_{\mathcal{F}}(\ket{+}\otimes\ket{\Phi^{\mathcal{F}}_\mathrm{init}})))\otimes\ket{\Phi^{\mathcal{W}+\mathcal{D}}[l,m]} \\
  &=(U_{\mathcal{D}}\circ(I_\mathcal{S}\otimes U_{\mathcal{W}}\otimes I_\mathcal{D}))((U_{\mathcal{F}}(\ket{+}\otimes\ket{\Phi^{\mathcal{F}}_\mathrm{init}}))\otimes\ket{\Phi^{\mathcal{W}+\mathcal{D}}_{\mathrm{init}}}) \\
  &=(U_{\mathcal{D}}\circ(I_\mathcal{S}\otimes U_{\mathcal{W}}\otimes I_\mathcal{D})\circ(U_{\mathcal{F}}\otimes I_\mathcal{W}\otimes I_\mathcal{D}))
    (\ket{+}\otimes\ket{\Phi_{\mathrm{init}}})\\
  &=\sum_{(k,l,m)\in\Omega}(M_{k,l,m}\ket{+})\otimes\ket{\Phi[k,l,m]} \\
  &=\sum_{(l,m)\in\Theta}\left\{\sum_{k=0,1}(M_{k,l,m}\ket{+})\otimes\ket{\Phi^{\mathcal{F}}[k]}
  \right\}\otimes\ket{\Phi^{\mathcal{W}+\mathcal{D}}[l,m]}.
\end{split}
\end{equation*}
Thus, we have that
\begin{equation}\label{eq:WDc-components}
  M^{\mathcal{W}+\mathcal{D}}_{l,m}(U_{\mathcal{F}}(\ket{+}\otimes\ket{\Phi^{\mathcal{F}}_\mathrm{init}}))
  =\sum_{k=0,1}(M_{k,l,m}\ket{+})\otimes\ket{\Phi^{\mathcal{F}}[k]}
\end{equation}
for every $(l,m)\in\Theta$.
We use $\Theta_{c}$ to denote the set $\{0,1\}\times\{+,-\}$,
which is a proper subset of $\Theta$.
On the one hand,
using~\eqref{eq:WDc-components} and \eqref{eq:EDTE-Mkl=0}, we have that
\begin{equation}\label{eq:WDc-finitepsR=0}
  R(l,m)=0
\end{equation}
for every $(l,m)\in\Theta\setminus\Theta_c$.
On the other hand, for each $(l,m)\in\Theta_c$,
using~\eqref{eq:WDc-components}, \eqref{eq:Wigner-Deutsch_Mklm+Pklm=fklmoss2},
\eqref{eq:EDTE-Def-kPS+F+}, and \eqref{eq:EDTE-Def-kPS+F-}
we have that
\begin{equation}\label{eq:WDc-QF-resultant-state-l01}
\begin{split}
  &(M^{\mathcal{W}+\mathcal{D}}_{l,m}(U_{\mathcal{F}}(\ket{+}\otimes\ket{\Phi^{\mathcal{F}}_\mathrm{init}}))
  \otimes\ket{\Phi^{\mathcal{W}+\mathcal{D}}[l,m]} \\
  &=\sum_{k=0,1}(M_{k,l,m}\ket{+})\otimes\ket{\Phi[k,l,m]} \\
  &=\sum_{k=0,1}f(k,l,m)\ket{k}\otimes\ket{\Phi^{\mathcal{F}}[k]}\otimes\ket{\Phi^{\mathcal{W}}[l]}\otimes\ket{\Phi^{\mathcal{D}}[m]} \\
  &=c_l(\delta_{m,+}\overline{c_l}\ket{\Psi^{\mathcal{S}+\mathcal{F}}[+]}+\delta_{m,-}(-1)^{l}c_{1-l}\ket{\Psi^{\mathcal{S}+\mathcal{F}}[-]})\otimes\ket{\Phi^{\mathcal{W}}[l]}\otimes\ket{\Phi^{\mathcal{D}}[m]} \\
  &=g(l,m)\ket{\Psi^{\mathcal{S}+\mathcal{F}}[m]}\otimes\ket{\Phi^{\mathcal{W}}[l]}\otimes\ket{\Phi^{\mathcal{D}}[m]},
\end{split}
\end{equation}
where
\[
  g(l,m):=c_l(\delta_{m,+}\overline{c_l}+\delta_{m,-}(-1)^{l}c_{1-l}).
\]
Hence, using~\eqref{eq:WDc-QF-resultant-state-l01} we have that
\begin{equation}\label{eq:WDc-QF-finitepsP}
  R(l,m)
  =\abs{g(l,m)}^2\braket{\Psi^{\mathcal{S}+\mathcal{F}}[m]}{\Psi^{\mathcal{S}+\mathcal{F}}[m]}
  =\abs{c_l}^2(\delta_{m,+}\abs{c_l}^2+\delta_{m,-}\abs{c_{1-l}}^2)
\end{equation}
for each $(l,m)\in\Theta_c$.

Now, let us apply
Postulate~\ref{POT}, the \emph{principle of typicality},
to the setting of measurements developed above.
Let $\omega$ be \emph{our world} in the infinite repetition of the measurements
(experiment)
in the above setting.
This $\omega$ is an infinite sequence over $\Theta$ consisting of records
in the observers $\mathcal{W}$ and $\mathcal{D}$
which is being generated by the infinite repetition of
the measurement
described by the measurement operators $\{M^{\mathcal{W}+\mathcal{D}}_{l,m}\}_{(l,m)\in\Theta}$
in the above setting.
Since the Bernoulli measure $\lambda_R$ on $\Theta^\infty$ is
the probability measure induced by the
probability
measure representation
for the prefixes of
worlds
in the above setting,
it follows from
Postulate~\ref{POT}
that \emph{$\omega$ is Martin-L\"of $R$-random}.

We use $\beta$ and $\gamma$
to denote the infinite sequences over $\{0,1,2\}$ and $\{+,-\}$, respectively,
such that $(\beta(n),\gamma(n))=\omega(n)$ for every $n\in\N^+$.
Then, since $\omega$ is Martin-L\"of $R$-random, it follows from \eqref{eq:WDc-finitepsR=0} and
Corollary~\ref{cor:always-positive-probability}
that
\[
  \omega(n)\in\Theta_c
\]
for every $n\in\N^+$,
and therefore
\begin{equation}\label{eq:WDc-QF-bnneq2}
  \beta(n)\neq 2
\end{equation}
for every $n\in\N^+$, i.e., $\beta$ is an infinite binary sequence.
Thus, it follows from Postulate~\ref{POT} and \eqref{eq:Wigner-Deutsch-Q-all}
that, in our world $\omega$, for each $n\in\N^+$ the state of
the $n$th composite system $\mathcal{S}+\mathcal{F}+\mathcal{W}+\mathcal{D}$,
i.e., the $n$th copy of a composite system consisting of
the system $\mathcal{S}$ and the three observers $\mathcal{F}$, $\mathcal{W}$, and $\mathcal{D}$,
immediately after the measurement $\mathcal{M}^\mathcal{D}$ is given by
\begin{equation}\label{eq:WDc-QF-resultant-state}
\begin{split}
  &(M^{\mathcal{W}+\mathcal{D}}_{\beta(n),\gamma(n)}(U_{\mathcal{F}}(\ket{+}\otimes\ket{\Phi^{\mathcal{F}}_\mathrm{init}})))
  \otimes\ket{\Phi^{\mathcal{W}+\mathcal{D}}[\beta(n),\gamma(n)]} \\
  &=g(\beta(n),\gamma(n))\ket{\Psi^{\mathcal{S}+\mathcal{F}}[\gamma(n)]}\otimes\ket{\Phi^{\mathcal{W}}[\beta(n)]}\otimes\ket{\Phi^{\mathcal{D}}[\gamma(n)]},
\end{split}
\end{equation}
up to the normalization factor,
where the equality follows from \eqref{eq:WDc-QF-resultant-state-l01}.
Since $\omega$ is Martin-L\"of $R$-random,
it follows from Theorem~\ref{FI} and \eqref{eq:WDc-QF-finitepsP} that
for every $(l,m)\in\Theta_c$ it holds that
\begin{equation*}%
  \lim_{n\to\infty} \frac{N_{l,m}(\rest{\omega}{n})}{n}
  =\abs{c_l}^2(\delta_{m,+}\abs{c_l}^2+\delta_{m,-}\abs{c_{1-l}}^2),
\end{equation*}
where $N_{l,m}(\rest{\omega}{n})$ denotes the number of the occurrences of $(l,m)$
in the prefix of $\omega$ of length $n$.
Thus, in our world $\omega$, the following holds for each $(l,m)\in\Theta_c$:
In a proportion of $$\abs{c_l}^2(\delta_{m,+}\abs{c_l}^2+\delta_{m,-}\abs{c_{1-l}}^2)$$ out of the infinite repetitions of the experiment,
the state of the composite system $\mathcal{S}+\mathcal{F}$ is
$\ket{\Psi^{\mathcal{S}+\mathcal{F}}[m]}$ and
the observers $\mathcal{W}$ and $\mathcal{D}$ record
the values $l$ and $m$, respectively,
immediately after the measurement $\mathcal{M}^\mathcal{D}$.
We can see that
the this situation
coincides with that of both Case~1 and Case~2
in Section~\ref{sec:Wigner collaborates with Deutsch}
regarding the measurements $\mathcal{M}^\mathcal{W}$ and $\mathcal{M}^\mathcal{D}$
immediately after the measurement $\mathcal{M}^\mathcal{D}$,
although the statements of
both Case~1 and Case~2 in Section~\ref{sec:Wigner collaborates with Deutsch} are
at least operationally vague
since they are described in terms of the conventional quantum mechanics
where the operational characterization of the notion of probability is not given.

Let us determine the states of the observer $\mathcal{W}$
immediately after the measurement $\mathcal{M}^\mathcal{W}$
in our world $\omega$.
Note first that the condition~\eqref{eq:Ui-TSCP-MO} holds
in our setting of measurements developed above.
According to Definition~\ref{def:unchanged-apparatus},
it is easy to check that 
the final states of the observer $\mathcal{W}$ are
unchanged after the measurement by
the observer $\mathcal{D}$
and therefore the final states of the observer $\mathcal{W}$ are confirmed
before the measurement by the observer $\mathcal{D}$,
where ``apparatus'' in Definition~\ref{def:unchanged-apparatus} should read ``observer''.
Using \eqref{eq:WDc-QF-resultant-state}, we see that, in our world $\omega$,
for every $n\in\N^+$ the state of the observer $\mathcal{W}$
immediately after the measurement $\mathcal{M}^\mathcal{D}$
is $\ket{\Phi^{\mathcal{W}}[\beta(n)]}$ in the $n$th composite system $\mathcal{S}+\mathcal{F}+\mathcal{W}+\mathcal{D}$.
Thus, it follows from Postulate~\ref{CF}~(ii) that, in our world $\omega$,
for every $n\in\N^+$ the state of the observer $\mathcal{W}$
immediately after the measurement $\mathcal{M}^\mathcal{W}$ is
\begin{equation*}%
  \ket{\Phi^{\mathcal{W}}[\beta(n)]}
\end{equation*}
in the $n$th composite system $\mathcal{S}+\mathcal{F}+\mathcal{W}+\mathcal{D}$.

We can use Postulate~\ref{Recursive Use} to determine the whole states of
the composite system $\mathcal{S}+\mathcal{F}+\mathcal{W}$
immediately after the measurement $\mathcal{M}^\mathcal{W}$
in our world $\omega$,
instead of using Postulate~\ref{CF}.
First note that
each of the measurement operators
$\{M^{\mathcal{W}}_0,M^{\mathcal{W}}_1,M^{\mathcal{W}}_2\}$ and
$\{M^{\mathcal{D}}_+,M^{\mathcal{D}}_-\}$ forms a PVM.
On the one hand,
applying $I_\mathcal{S}\otimes U_{\mathcal{W}}\otimes I_\mathcal{D}$
to the initial state
$(U_{\mathcal{F}}(\ket{+}\otimes\ket{\Phi^{\mathcal{F}}_\mathrm{init}}))
\otimes\ket{\Phi^{\mathcal{W}+\mathcal{D}}_\mathrm{init}}$
of the repeated once of the experiments
results in a state $\ket{\Psi^{\mathcal{W}}_{\mathrm{Total}}}$ given by
\begin{equation}\label{eq:WDc-QF-SPTS-immafterW}
\begin{split}
  \ket{\Psi^{\mathcal{W}}_{\mathrm{Total}}}
  &:=(I_\mathcal{S}\otimes U_{\mathcal{W}}\otimes I_\mathcal{D})((U_{\mathcal{F}}(\ket{+}\otimes\ket{\Phi^{\mathcal{F}}_\mathrm{init}}))\otimes\ket{\Phi^{\mathcal{W}+\mathcal{D}}_{\mathrm{init}}}) \\
  &=\sum_{l=0,1,2}((I_\mathcal{S}\otimes M^{\mathcal{W}}_l)\ket{\Psi^{\mathcal{S}+\mathcal{F}}[+]})\otimes\ket{\Phi^{\mathcal{W}}[l]}\otimes\ket{\Phi^{\mathcal{D}}_{\mathrm{init}}},
\end{split}
\end{equation}
where the equality follows from \eqref{eq:EUD-MF-S}, \eqref{eq:EDTE-MF0=00_MF1=11},
\eqref{eq:EDTE-Def-kPS+F+}, and \eqref{eq:UE-MW-F}.
On the other hand,
the final states of the observer $\mathcal{W}$ are unchanged after the measurement by
the observer $\mathcal{D}$, as we saw above.
Thus, it follows from Postulate~\ref{Recursive Use} and
\eqref{eq:WDc-QF-resultant-state} that, in our world $\omega$,
for each $n\in\N^+$ the state of
the $n$th composite system $\mathcal{S}+\mathcal{F}+\mathcal{W}+\mathcal{D}$
immediately after the measurement $\mathcal{M}^\mathcal{W}$ is given by
\begin{equation}\label{eq:WDc-QF-Proj-virtual-state-Total-W}
  P\left(\ket{\Phi^{\mathcal{W}}[\beta(n)]}\right)\ket{\Psi^{\mathcal{W}}_{\mathrm{Total}}},
\end{equation}
up to the normalization factor, where
\[
  P\left(\ket{\Phi^{\mathcal{W}}[\beta(n)]}\right)
  =I_{\mathcal{S}}\otimes I_{\mathcal{F}}\otimes
  \product{\Phi^{\mathcal{W}}[\beta(n)]}{\Phi^{\mathcal{W}}[\beta(n)]}\otimes I_{\mathcal{D}}.
\]
The vector \eqref{eq:WDc-QF-Proj-virtual-state-Total-W} equals
\begin{equation*}%
\begin{split}
  &((I_\mathcal{S}\otimes M^{\mathcal{W}}_{\beta(n)})\ket{\Psi^{\mathcal{S}+\mathcal{F}}[+]})\otimes\ket{\Phi^{\mathcal{W}}[\beta(n)]}\otimes\ket{\Phi^{\mathcal{D}}_{\mathrm{init}}} \\
  &=c_{\beta(n)}\ket{\beta(n)}\otimes\ket{\Phi^{\mathcal{F}}[\beta(n)]}\otimes\ket{\Phi^{\mathcal{W}}[\beta(n)]}\otimes\ket{\Phi^{\mathcal{D}}_{\mathrm{init}}}
\end{split}
\end{equation*}
for every $n\in\N^+$,
due to \eqref{eq:WDc-QF-SPTS-immafterW},
where the equality follows from
\eqref{eq:WDc-MW_0_1_2}, \eqref{eq:EDTE-Def-kPS+F+}, and \eqref{eq:WDc-QF-bnneq2}.
Hence, in our world $\omega$, for every $n\in\N^+$ 
the state of the composite system $\mathcal{S}+\mathcal{F}+\mathcal{W}$
immediately after the measurement $\mathcal{M}^\mathcal{W}$ is given by
\begin{equation}\label{eq:WDc-QF-SFW-state-immafter-MW}
  \ket{\beta(n)}\otimes\ket{\Phi^{\mathcal{F}}[\beta(n)]}\otimes\ket{\Phi^{\mathcal{W}}[\beta(n)]}
\end{equation}
in the $n$th composite system $\mathcal{S}+\mathcal{F}+\mathcal{W}+\mathcal{D}$.

In addition,
since $\omega$ is Martin-L\"of $R$-random and $\abs{c_0}^2+\abs{c_1}^2=1$,
it follows from Theorem~\ref{contraction2}, \eqref{eq:WDc-finitepsR=0}, \eqref{eq:WDc-QF-finitepsP}, and \eqref{eq:WDc-QF-bnneq2} that
$\beta$ is a Martin-L\"of $Q$-random infinite binary sequence,
where $Q$ is a finite probability space on $\{0,1\}$ such that
$Q(0)=\abs{c_0}^2$ and $Q(1)=\abs{c_1}^2$.
Therefore,
it follows from Theorem~\ref{FI} that for every $l\in\{0,1\}$ it holds that
\[
  \lim_{n\to\infty} \frac{N_l(\rest{\beta}{n})}{n}
  =\abs{c_l}^2,
\]
where $N_l(\rest{\beta}{n})$ denotes the number of the occurrences of $l$
in the prefix of $\beta$ of length $n$.
Thus, the state \eqref{eq:WDc-QF-SFW-state-immafter-MW} implies that,
in our world $\omega$,
for each $l=0,1$
the following holds
in a proportion of $\abs{c_l}^2$ out of the infinite repetitions of the experiment:
The state of the system $\mathcal{S}$ is $\ket{l}$ and
both
the observers $\mathcal{F}$ and $\mathcal{W}$ record
this value $l$,
immediately after the measurement $\mathcal{M}^\mathcal{W}$.

In our framework of quantum mechanics based on the principle of typicality,
we have investigated
the Wigner-Deutsch collaboration
in the case where
the observer $\mathcal{F}$ is treated as a mere quantum system
and not as a measurement apparatus.
Based on the above analysis, we can see that
this case
recovers in a \emph{refined manner}
Case~2 in Section~\ref{sec:Wigner collaborates with Deutsch},
which is described in the terminology of the conventional quantum mechanics in a vague manner.
Regarding the results of the measurements $\mathcal{M}^\mathcal{W}$
and $\mathcal{M}^\mathcal{D}$,
this case coincides with
Case~1 in Section~\ref{sec:Wigner collaborates with Deutsch} as well,
which is also described in terms of the conventional quantum mechanics in a vague manner.

\section{Conclusion}
\label{sec-Concluding_remarks}

In this paper, we have extended
the rigorous framework of quantum mechanics based on the principle of typicality,
which was introduced by Tadaki~\cite{T16CCR,T16QIP,T16QIT35,T17SCIS,T18arXiv},
so that it can be applicable to
a situation where apparatuses perform measurements
over other apparatuses as well as over
a normal quantum system only being measured.
To be specific, we have introduced
the notion of the confirming point for the states of a system
and the notion of the confirming point for the final states of an apparatus.
Then, based on these notions,
we have introduced two postulates,
Postulate~\ref{CF} and Postulate~\ref{Recursive Use},
so that we are able to determine the states of the apparatuses in early stages
among all apparatuses which constitute a whole measurement process.

Within this extended framework of quantum mechanics based on the principle of typicality,
we then have thoroughly performed analyses of the three intricate situations:
the Wigner's friend paradox,
Deutsch's thought experiment, and their combination,
named the Wigner-Deutsch collaboration.
Consequently, for each of the three situations,
we have been able to ``cut into'' the inside of the situation and have
clarified the states of the observers in early stages
in the chain of the measurements by multiple observers
where the state of the consciousness of the preceding observers are measured by the subsequent observers.

\section*{Acknowledgments}
\addcontentsline{toc}{section}{Acknowledgments}

This work was
supported by JSPS KAKENHI Grant Number~22K03409.

\addcontentsline{toc}{section}{\refname}


\begin{thebibliography}{99}


\bibitem{Bell64}
J. S. Bell, ``On the Einstein Podolsky Rosen paradox,''
\emph{Physics Physique Fizika}, vol.~1, no.~3, pp.~195--200, 1964.


\bibitem{BDSW96}
C. H. Bennett, D. P. DiVincenzo, J. A. Smolin, and W. K. Wootters,
``Mixed state entanglement and quantum error correction,''
\emph{Phys. Rev. A}, 54:3824, 1996.

\bibitem{BvDL00}
A. Berthiaume, W. van Dam, and S. Laplante,
Quantum Kolmogorov complexity.
Proceedings of the 15th Annual Conference on Computational Complexity
(Piscataway, NJ: IEEE), pp.~240--249, 2000.

\bibitem{B95}
P. Billingsley, \emph{Probability and Measure}, 3rd ed. John Wiley \& Sons, Inc., New York, 1995.


\bibitem{BMiN12}
V. Brattka, J. Miller, and A. Nies, ``Randomness and differentiability,'' preprint, 2012.

\bibitem{CRSS97}
A. R. Calderbank, E. M. Rains, P. W. Shor, and N. J. A. Sloane,
``Quantum error correction and orthogonal geometry,''
\emph{Phys. Rev. Lett.}, 78:405--8, 1997.

\bibitem{CaSh96}
A. R. Calderbank and P. W. Shor, ``Good quantum error-correcting codes exist,''
\emph{Phys. Rev. A}, 54:1098, 1996.

\bibitem{C66}
G. J. Chaitin, ``On the length of programs for computing finite binary sequences,'' \emph{J. Assoc. Comput. Mach.}, vol.~13, pp.~547--569, 1966.

\bibitem{C75}
G. J. Chaitin, ``A theory of program size formally identical to information theory,'' \emph{J. Assoc. Comput. Mach.}, vol.~22, pp.~329--340, 1975.

\bibitem{C87b}
G. J. Chaitin, \emph{Algorithmic Information Theory}. Cambridge University Press, Cambridge, 1987.

\bibitem{CHSH69}
J. F. Clauser, M. A. Horne, A. Shimony, and R. A. Holt,
``Proposed experiment to test local hidden-variable theories,''
\emph{Phys. Rev. Lett.},  vol.~23, no.~15, pp.~880--884, 1969;
Erratum, \emph{Phys. Rev. Lett.}, vol.~24, no.~10, p.~549, 1970.

\bibitem{Deu85}
D. Deutsch, ``Quantum theory as a universal physical theory,''
\emph{Int. J. Theor. Phys.},  vol.~24, no.~1, pp.~1--41, 1985.

\bibitem{DN73}
B. S. DeWitt and N. Graham (eds.), \emph{The Many-Worlds Interpretation of Quantum Mechanics}. Princeton University Press, Princeton, 1973.

\bibitem{D58}
P. A. M. Dirac, \emph{The Principles of Quantum Mechanics}, 4th ed. Oxford University Press, London, 1958.

\bibitem{DH10}
R. G. Downey and D. R. Hirschfeldt, \emph{Algorithmic Randomness and Complexity}.
Springer-Verlag, New York, 2010.

\bibitem{EPR35}
A. Einstein, B. Podolsky, and N. Rosen,
``Can quantum-mechanical description of physical reality be considered complete?''
\emph{Phys. Rev.}, vol.~47,
pp.~777--780, 1935.

\bibitem{EM96}
A. Ekert and C. Macchiavello,
``Error correction in quantum communication,''
\emph{Phys. Rev. Lett.}, 77:2585, 1996.

\bibitem{E57}
H. Everett, III, ````Relative State'' formulation of quantum mechanics,''
\emph{Rev. Mod. Phys.}, vol.~29, no.~3, pp.~454--462, 1957.

\bibitem{FR18}
D. Frauchiger and R. Renner,
``Quantum theory cannot consistently describe the use of itself,''
\emph{Nat. Commun.}~\textbf{9}, 3711, 2018.

\bibitem{Gac01}
P. G\'{a}cs, ``Quantum algorithmic entropy,''
\emph{J. Phys. A: Math. Gen.}, vol.~34, pp.6859--6880, 2001.

\bibitem{Got96}
D. Gottesman,
``Class of quantum errorcorrecting codes saturating the quantum Hamming bound,''
\emph{Phys. Rev. A}, 54:1862, 1996.

\bibitem{Got97}
D. Gottesman, Stabilizer Codes and Quantum Error Correction.
Ph.D. thesis, California Institute of Technology, Pasadena, CA, 1997.

\bibitem{GHSZ90}
D. M. Greenberger, M. A. Horne,  A. Shimony, and A. Zeilinger,
``Bell's theorem without inequalities,''
\emph{Am. J. Phys.}, vol.~58, no.~12, pp.1131--1143, 1990.

\bibitem{GHZ89}
D. M. Greenberger, M. A. Horne, and A. Zeilinger,
``Going beyond Bell's theorem,''
in \emph{Bell's Theorem, Quantum Theory and Conceptions of the Universe}, M. Kafatos (ed.),
Dordrecht-Boston-London: Kluwer, pp.69--72, 1989.

\bibitem{H68}
J. B. Hartle, ``Quantum mechanics of individual systems,''
\emph{Am. Journ. Phys.}, vol.~36, no.~8, pp.~704--712, 1968.

\bibitem{KL97}
E. Knill and R. Laflamme,
``A theory of quantum error-correcting codes,''
\emph{Phys. Rev. A}, 55:900, 1997.

\bibitem{K50}
A. N. Kolmogorov, \emph{Foundations of the theory of probability}, Chelsea Publishing Company, New York, 1950.

\bibitem{Kol65}
A. N. Kolmogorov, ``Three approaches to the quantitative definition of information,''
\emph{Problems Inform. Transmission}, vol.~1, no.~1, pp.~1--7, 1965.

\bibitem{M66}
P. Martin-L\"{o}f, ``The definition of random sequences,''
\emph{Information and Control}, vol.~9, pp.~602--619, 1966.

\bibitem{Mer90}
N. D. Mermin, ``Quantum mysteries revisited,''
\emph{American Journal of Physics}, vol.~58, pp.731--733, 1990.

\bibitem{NC00}
M. A. Nielsen and I. L. Chuang, \emph{Quantum Computation and Quantum Information}.
Cambridge University Press, Cambridge, 2000.

\bibitem{N09}
A. Nies, \emph{Computability and Randomness}. Oxford University Press, Inc., New York, 2009.

\bibitem{Sch73}
C.-P. Schnorr, ``Process complexity and effective random tests,''
\emph{J. Comput. System Sci.}, vol.~7, pp.~376--388, 1973.

\bibitem{Schr35}
E. Schr\"{o}dinger, ``Die gegenw\"{a}rtige Situation in der Quantenmechanik,''
\emph{Naturwissenschaften}, vol.~23, pp.~807--812, 1935.


\bibitem{Solom64}
R. J. Solomonoff, ``A formal theory of inductive inference. Part I and Part II,''
\emph{Inform. and Control}, vol.~7, pp.~1--22, 1964; vol.~7, pp.~224--254, 1964.

\bibitem{Ste96}
A. M. Steane, ``Multiple particle interference and quantum error correction,''
\emph{Proc. R. Soc. London A}, 452:2551--76, 1996.

\bibitem{T06MLQ}
K. Tadaki,
``An extension of Chaitin's halting probability $\Omega$ to a measurement operator in an infinite dimensional quantum system,''
\emph{Math. Log. Quart.}, vol.~52, no.~5, pp.419--438, 2006.

\bibitem{T14CCR}
K. Tadaki,
Reformulating quantum mechanics by algorithmic randomness.
Presentation at Ninth International Conference on Computability, Complexity and Randomness (CCR 2014),
June 9-13, 2014, Institute for Mathematical Sciences, National University of Singapore, Singapore. 

\bibitem{T14}
K. Tadaki,
An operational characterization of the notion of probability by algorithmic randomness.
Proceedings of the 37th Symposium on Information Theory and its Applications (SITA2014),
5.4.1, pp.~389--394, December 9-12, 2014, Unazuki, Toyama, Japan.
Available at: \url{https://tadaki.org/5_4_1.pdf}

\bibitem{T15}
K. Tadaki,
An operational characterization of the notion of probability by algorithmic randomness and its application to cryptography.
Proceedings of the 32nd Symposium on Cryptography and Information Security (SCIS2015),
2D4-3, January 20-23, 2015, Kokura, Japan.
Available at: \url{https://tadaki.org/2D4-3.pdf}

\bibitem{T15QCOMPINFO}
K. Tadaki,
A refinement of quantum mechanics by algorithmic randomness.
Presentation at Quantum Computation, Quantum Information, and the Exact Sciences (QCOMPINFO2015),
January 30-31, 2015, Ludwig-Maximilians-Universit\"at M\"unchen, Munich, Germany.

\bibitem{T15Kokyuroku}
K. Tadaki,
``A refinement of quantum mechanics by algorithmic randomness: extended abstract.''
In: Hirofumi Osada (ed.), Symposium on Probability Theory,
RIMS K\^{o}ky\^{u}roku, No.1952, pp.~112--116, June 2015.
Available at: \url{http://hdl.handle.net/2433/223983}

\bibitem{T15CCR}
K. Tadaki,
An operational characterization of the notion of probability by algorithmic randomness and its applications.
Presentation at
Tenth International Conference on Computability, Complexity and Randomness (CCR 2015),
June 22-26, 2015, Heidelberg, Germany.%



\bibitem{T16CCR}
K. Tadaki,
The principle of typicality.
Presentation at Eleventh International Conference on Computability, Complexity and Randomness (CCR 2016),
January 4-8, 2016, University of Hawaii at Manoa, Honolulu, USA.

\bibitem{T16QIP}
K. Tadaki,
A refinement of quantum mechanics by algorithmic randomness.
Poster Presentation at 19th Conference on Quantum Information Processing (QIP 2016),
January 10-15, 2016, The Banff Centre, Banff, Canada.

\bibitem{T16arXiv}
K. Tadaki,
``An operational characterization of the notion of probability by algorithmic randomness and
its application,''
arXiv:1611.06201v2 [math.PR],
December 2021.

\bibitem{T16QIT35}
K. Tadaki,
A refinement of quantum mechanics by algorithmic randomness.
Proceedings of the 35th Quantum Information Technology Symposium (QIT35),
November 24-25, 2016, High Energy Accelerator Research Organization, Tsukuba, Japan.

\bibitem{T17SCIS}
K. Tadaki,
A refinement of quantum mechanics by algorithmic randomness and its application to quantum cryptography.
Proceedings of the 2017 Symposium on Cryptography and Information Security (SCIS2017),
4A1-2, January 24-27, 2017, Naha, Japan.
Available at: \url{https://tadaki.org/4A1-2.pdf}

\bibitem{T18arXiv}
K. Tadaki,
``A refinement of quantum mechanics by algorithmic randomness,''
arXiv:1804.10174v1 [quant-ph],
April 2018.

\bibitem{T17ALC}
K. Tadaki,
An operational characterization of the notion of probability by algorithmic randomness II: Discrete probability spaces.
Presentation at the 15th Asian Logic Conference (ALC2017),
July 10-14, 2017, Daejeon, Korea.
A full version available at arXiv:1909.02854 [math.PR]

\bibitem{T24CCR}
K. Tadaki,
A refinement of the argument of local realism versus quantum mechanics by algorithmic randomness.
Presentation at Seventeenth International Conference on Computability, Complexity and Randomness (CCR 2024),
March 11-15, 2024, Nagoya University, Nagoya, Japan.
A full version available at arXiv:2312.13246 [quant-ph]

\bibitem{T25CCR}
K. Tadaki,
A refinement of the theory of quantum error-correction by algorithmic randomness.
Presentation at the 18th International Conference on Computability, Complexity and Randomness (CCR 2025),
June 16-20, 2025, LaBRI, University of Bordeaux, France.

\bibitem{Vit00}
P. M. B. Vit\'{a}nyi,
Three approaches to the quantitative definition of information in an individual pure quantum state.
Proceedings of the 15th Annual Conference on Computational Complexity
(Piscataway, NJ: IEEE), pp.~263--270, 2000.

\bibitem{vM64}
R. von Mises, \emph{Mathematical Theory of Probability and Statistics}.
Academic Press Inc., New York, 1964.

\bibitem{vN55}
J. von Neumann, \emph{Mathematical Foundations of Quantum Mechanics}.
Princeton University Press, Princeton, 1955.

\bibitem{W61}
E. P. Wigner, ``Remarks on the Mind-Body Question.''
In I. J. Good, ed., \emph{The Scientist Speculates: An Anthology of Partly-Baked Ideas},
pp.~284--302, Heinemann, London, 1961.

\end{thebibliography}
\end{document}